\keywords{string diagrams, finite-state automata, symmetric monoidal category, complete axiomatisation}
\theoremstyle{plain} %\crefname{satz}{Satz}{S\"atze}
\newcounter{para}[section]
\tikzset{x=1em, y=1.5ex, baseline=-0.5ex}
\tikzset{ihbase/.style={inner sep=0,circle,draw,fill=lightgray,minimum size=0.4em,node contents={}}}
\tikzset{ihblack/.style={ihbase,fill=black}}
\tikzset{ihwhite/.style={ihbase,fill=white}}
\tikzset{mat/.style={draw,fill=white,rectangle,node font=\scriptsize}}
\tikzset{ha/.style={mat,rounded rectangle,rounded rectangle left arc=none}}
\tikzset{haop/.style={mat,rounded rectangle,rounded rectangle right arc=none}}
\tikzset{blackha/.style={mat,rounded rectangle,rounded rectangle left arc=none,font=\color{white},fill=black}}
\tikzset{blackhaop/.style={mat,rounded rectangle,rounded rectangle right arc=none,font=\color{white},fill=black}}
\tikzset{anti/.style={inner sep=0,isosceles triangle,fill=black,draw=black, minimum width=0.75em, node contents={}}}
\tikzset{antiop/.style={anti,shape border rotate=180}}
\tikzset{antisq/.style={inner sep=0,rectangle,fill=black, minimum height=1em, minimum width=0.6em, node contents={}}}
\tikzset{count/.style={above,inner ysep=0.15em,font=\scriptsize}}
\tikzset{axiom/.style={above,font=\small}}
\tikzset{dir/.style={-Latex}}
\tikzset{st/.style={decoration={markings,
    mark={at position 0.5 with {\draw (0, 2pt) to (0, -2pt);}}},
    postaction=decorate}}
\newcommand{\objr}{\blacktriangleright}
\newcommand{\objl}{\blacktriangleleft}
\newcommand{\alphabet}{\Sigma}
\newcommand{\Lang}{\mathcal{L}_\alphabet}
\newcommand{\sem}[1]{\left\llbracket{#1}\right\rrbracket}
\newcommand{\semreg}[1]{\left\llbracket{#1}\right\rrbracket_R}
\newcommand{\transreg}[1]{\left\langle{#1}\right\rangle}
\newcommand{\arrowright}{
\tikzset{x=1em, y=2.1ex}
\begin{tikzpicture}
	\begin{pgfonlayer}{nodelayer}
		\node [style=none] (0) at (-1.5, 0) {};
		\node [style=none] (1) at (0, 0) {};
		\node [style=none] (2) at (1.25, 0) {};
	\end{pgfonlayer}
	\begin{pgfonlayer}{edgelayer}
		\draw [->] (0.center) to (1.center);
		\draw (1.center) to (2.center);
	\end{pgfonlayer}
\end{tikzpicture}
}
\tikzset{x=1em, y=1.5ex}
\newcommand{\arrowleft}{
\tikzset{x=1em, y=2.1ex}
\begin{tikzpicture}
	\begin{pgfonlayer}{nodelayer}
		\node [style=none] (1) at (-1.25, 0) {};
		\node [style=none] (2) at (0, 0) {};
		\node [style=none] (3) at (1.25, 0) {};
	\end{pgfonlayer}
	\begin{pgfonlayer}{edgelayer}
		\draw (1.center) to (2.center);
		\draw [->] (3.center) to (2.center);
	\end{pgfonlayer}
\end{tikzpicture}
}
\tikzset{x=1em, y=1.5ex}
\newcommand{\genericcounitn}[2]{
  \tikz \draw (0, 0) -- node[count] {#2} (1, 0) node[ihbase,#1];
}
\newcommand{\Bcomult}{
\tikzset{x=1em, y=2.1ex}
\InputIfFileExists{lr-copy.tikz}{}{\input{./tikz/lr-copy.tikz}}
\tikzset{x=1em, y=1.5ex}
}
\newcommand{\Bcounit}{
\tikzset{x=1em, y=2.1ex}
\begin{tikzpicture}
	\begin{pgfonlayer}{nodelayer}
		\node [style=black] (37) at (0.75, 0) {};
		\node [style=none] (43) at (0.25, 0) {};
		\node [style=none] (44) at (-0.5, 0) {};
	\end{pgfonlayer}
	\begin{pgfonlayer}{edgelayer}
		\draw (43.center) to (37);
		\draw [->] (44.center) to (43.center);
	\end{pgfonlayer}
\end{tikzpicture}
}
\tikzset{x=1em, y=1.5ex}
\newcommand{\Bmult}{
\tikzset{x=1em, y=2.1ex}
\InputIfFileExists{lr-merge.tikz}{}{\input{./tikz/lr-merge.tikz}}
\tikzset{x=1em, y=1.5ex}
}
\newcommand{\Bunit}{
\tikzset{x=1em, y=2.1ex}
\begin{tikzpicture}
	\begin{pgfonlayer}{nodelayer}
		\node [style=black] (37) at (-0.5, 0) {};
		\node [style=none] (43) at (0.25, 0) {};
		\node [style=none] (44) at (0.75, 0) {};
	\end{pgfonlayer}
	\begin{pgfonlayer}{edgelayer}
		\draw [->] (37) to (43.center);
		\draw (44.center) to (43.center);
	\end{pgfonlayer}
\end{tikzpicture}
}
\tikzset{x=1em, y=1.5ex}
\newcommand{\Wmult}{
\tikzset{x=1em, y=2.1ex}
\InputIfFileExists{wmult.tikz}{}{\input{./tikz/wmult.tikz}}
\tikzset{x=1em, y=1.5ex}
}
\newcommand{\Wunit}{
\tikzset{x=1em, y=2.1ex}
\begin{tikzpicture}
	\begin{pgfonlayer}{nodelayer}
		\node [style=none] (8) at (0.75, 0) {};
		\node [style=none] (17) at (0.25, 0) {};
		\node [style=white-dot] (23) at (-0.5, 0) {};
	\end{pgfonlayer}
	\begin{pgfonlayer}{edgelayer}
		\draw (17.center) to (8.center);
		\draw [->] (23) to (17.center);
	\end{pgfonlayer}
\end{tikzpicture}
}
\tikzset{x=1em, y=1.5ex}
\newcommand{\Wcomult}{
\tikzset{x=1em, y=2.1ex}
\InputIfFileExists{wcomult.tikz}{}{\input{./tikz/wcomult.tikz}}
\tikzset{x=1em, y=1.5ex}
}
\newcommand{\Wcounit}{
\tikzset{x=1em, y=2.1ex}
\begin{tikzpicture}
	\begin{pgfonlayer}{nodelayer}
		\node [style=white-dot] (37) at (0.75, 0) {};
		\node [style=none] (43) at (0.25, 0) {};
		\node [style=none] (44) at (-0.5, 0) {};
	\end{pgfonlayer}
	\begin{pgfonlayer}{edgelayer}
		\draw (43.center) to (37);
		\draw [->] (44.center) to (43.center);
	\end{pgfonlayer}
\end{tikzpicture}
}
\tikzset{x=1em, y=1.5ex}
\newcommand{\Wcounitn}[1]{\genericcounitn{ihwhite}}
\newcommand\idzero{
\tikzset{x=1em, y=2.1ex}
\InputIfFileExists{empty-diag.tikz}{}{\input{./tikz/empty-diag.tikz}}
\tikzset{x=1em, y=1.5ex}
} % PUT TIKZ
\newcommand{\idone}{
\tikzset{x=1em, y=2.1ex}
}
\tikzset{x=1em, y=1.5ex}
\newcommand{\sym}{
  \tikz {
    \draw (0,  0.5) .. controls (0.5,  0.5) and (0.5, -0.5) .. (1, -0.5);
    \draw (0, -0.5) .. controls (0.5, -0.5) and (0.5,  0.5) .. (1,  0.5);
  }
}
\newcommand\scalar[1]{
  \tikz {
    \node[ha] (ha) {$#1$};
    \draw (ha.west) -- ++(-0.75, 0);
    \draw (ha.east) -- ++(0.75, 0);
  }
}
\tikzset{x=1em, y=1.5ex}
\definecolor{light-gray}{gray}{.5}
\tikzstyle{none}=[inner sep=0pt]
\tikzstyle{plain}=[inner sep=0pt]
\tikzstyle{black}=[circle, draw=black, fill=black, inner sep=0pt, minimum size=4pt]
\tikzstyle{black-faded}=[circle, draw=light-gray, fill=light-gray, inner sep=0pt, minimum size=4pt]
\tikzstyle{white-dot}=[circle, draw=black, fill=white, inner sep=0pt, minimum size=4.5pt]
\tikzstyle{white-faded}=[circle, draw=light-gray, fill=white, inner sep=0pt, minimum size=4.5pt]
\tikzstyle{delay}=[fill=black, regular polygon, regular polygon sides=3,rotate=-90, scale=.55]
\tikzstyle{delay-op}=[fill=black, regular polygon, regular polygon sides=3,rotate=90, scale=.55]
\tikzstyle{reg}=[draw, fill=white, rounded rectangle, rounded rectangle left arc=none, minimum height=1.2em, minimum width=1.4em, node font={\scriptsize}]
\tikzstyle{coreg}=[draw, fill=white, rounded rectangle, rounded rectangle right arc=none, minimum height=1.2em, minimum width=1.4em, node font={\scriptsize}]
\tikzstyle{basicb}=[draw, fill=white, rectangle, rounded corners, minimum height=1.2em, minimum width=1.4em]
\tikzstyle{smallb}=[draw, fill=white, rectangle, rounded corners, minimum height=1.2em, minimum width=1.4em, node font={\scriptsize}]
\tikzstyle{rcoreg}=[draw=red, fill=white, rounded rectangle, rounded rectangle right arc=none, minimum height=1.2em, minimum width=1.4em, node font={\scriptsize}]
\tikzstyle{regb}=[draw, fill=black, rounded rectangle, rounded rectangle left arc=none, minimum height=1.2em, minimum width=1.4em, node font={\scriptsize}]
\tikzstyle{regbw}=[draw, left color=black, right color=white, middle color=white, rounded rectangle, rounded rectangle left arc=none, minimum height=1.2em, minimum width=1.4em, node font={\scriptsize}]
\tikzstyle{regwb}=[draw, left color=white, right color=black, middle color=white, rounded rectangle, rounded rectangle left arc=none, minimum height=1.2em, minimum width=1.4em, node font={\scriptsize}]
\tikzstyle{coregb}=[draw, fill=black, rounded rectangle, rounded rectangle right arc=none, minimum height=1.2em, minimum width=1.4em, node font={\scriptsize}]
\tikzstyle{coregbw}=[draw, left color=black, right color=white, middle color=white, rounded rectangle, rounded rectangle right arc=none, minimum height=1.2em, minimum width=1.4em, node font={\scriptsize}]
\tikzstyle{coregwb}=[draw, left color=white, right color=black, middle color=white, rounded rectangle, rounded rectangle right arc=none, minimum height=1.2em, minimum width=1.4em, node font={\scriptsize}]
\tikzstyle{rn}=[circle, draw=red, fill=red, inner sep=0pt, minimum size=4pt]
\tikzstyle{wrn}=[circle, draw=red, fill=white, inner sep=0pt, minimum size=4pt]
\tikzstyle{place}=[circle, draw=black, fill=white, inner sep=0pt, minimum size=9pt]
\tikzstyle{act}=[circle, draw=black, fill=white, inner sep=0pt, minimum size=4.5pt]
\tikzstyle{coact}=[draw, fill=white, rounded rectangle, rounded rectangle right arc=none, minimum height=.7em, minimum width=.9em, node font={\scriptsize}]
\tikzstyle{basic rounded box}=[draw, fill=white, rectangle, rounded corners, minimum height=1.2em, minimum width=1.4em]
\tikzstyle{small rounded box}=[draw, fill=white, rectangle, rounded corners, minimum height=1.2em, minimum width=1.4em, node font={\scriptsize}]
\tikzset{
BWmatrix/.pic={
    \coordinate (center) at (0,0);
    \filldraw[fill=white, draw=black, line width=1pt] (.5,0)
        [rounded corners=14pt] -- (1,0)
        [rounded corners=14pt] -- (1,1)
        [rounded corners=0pt] -- (.5,1)
        [rounded corners=0pt] -- cycle;
    \filldraw[fill=black, draw=black, line width=1pt] (0,0)
        -- (.5,0)
        -- (.5,1)
        -- (0,1)
        -- cycle;
   },
pics/BWmatrix/.default=0.2
}
\tikzstyle{pl}=[circle,thick,draw=black!75,fill=white,minimum size=17pt]
\tikzstyle{port}=[circle, fill,inner sep=1.2pt]
\tikzstyle{transition}=[rectangle,thick,draw=black!75,
\tikzstyle{arrow}=[->]
\newcommand{\diagbox}[3]{
\tikzset{x=1em, y=2.1ex}
\begin{tikzpicture}
	\begin{pgfonlayer}{nodelayer}
		\node [style=none] (0) at (-0.75, 0.5) {};
		\node [style=none] (1) at (-0.25, 1) {};
		\node [style=none] (2) at (-0.75, -0.5) {};
		\node [style=none] (3) at (0.75, -0.5) {};
		\node [style=none] (4) at (-0.25, -1) {};
		\node [style=none] (5) at (0.75, 0.5) {};
		\node [style=none] (6) at (2.5, -0) {};
		\node [style=none] (7) at (0.75, -0) {};
		\node [style=none] (8) at (0.25, -1) {};
		\node [style=none] (9) at (0.25, 1) {};
		\node [style=none] (10) at (0, -0) {$#1$};
		\node [style=none] (11) at (2.25, 0.5) {\scriptsize $#3$};
		\node [style=none] (12) at (-2.25, 0.5) {\scriptsize $#2$};
		\node [style=none] (13) at (-2.5, -0) {};
		\node [style=none] (14) at (-0.75, -0) {};
	\end{pgfonlayer}
	\begin{pgfonlayer}{edgelayer}
		\draw [in=180, out=0, looseness=1.25] (7.center) to (6.center);
		\draw [semithick, in=0, out=-90, looseness=1.00] (3.center) to (8.center);
		\draw [semithick, in=-90, out=180, looseness=1.00] (4.center) to (2.center);
		\draw [semithick, in=180, out=90, looseness=1.00] (0.center) to (1.center);
		\draw [semithick, in=90, out=0, looseness=1.00] (9.center) to (5.center);
		\draw [semithick] (1.center) to (9.center);
		\draw [semithick] (5.center) to (3.center);
		\draw [semithick] (8.center) to (4.center);
		\draw [semithick] (2.center) to (0.center);
		\draw [in=180, out=0, looseness=1.25] (13.center) to (14.center);
	\end{pgfonlayer}
\end{tikzpicture}
\tikzset{x=1em, y=1.5ex}
}
\newcommand{\dbox}[1]{
\tikzset{x=1em, y=2.1ex}
\begin{tikzpicture}
	\begin{pgfonlayer}{nodelayer}
		\node [style=none] (6) at (1.5, 0) {};
		\node [style=basicb] (10) at (0, 0) {$#1$};
		\node [style=none] (11) at (-1.25, 0) {};
		\node [style=none] (13) at (-2, 0) {};
		\node [style=none] (14) at (2, 0) {};
	\end{pgfonlayer}
	\begin{pgfonlayer}{edgelayer}
		\draw [->] (13.center) to (11.center);
		\draw (6.center) to (14.center);
		\draw (11.center) to (10.center);
		\draw [->] (10.center) to (6.center);
	\end{pgfonlayer}
\end{tikzpicture}
\tikzset{x=1em, y=1.5ex}
}
\newcommand{\traceaction}[5]{
\tikzset{x=1em, y=2.1ex}
\begin{tikzpicture}
	\begin{pgfonlayer}{nodelayer}
		\node [style=none] (0) at (-0.75, 1.25) {};
		\node [style=none] (1) at (0.75, 1) {};
		\node [style=none] (2) at (-0.25, 1.75) {};
		\node [style=none] (3) at (-0.75, -0.25) {};
		\node [style=none] (4) at (0.75, -0.25) {};
		\node [style=none] (5) at (-0.25, -0.75) {};
		\node [style=none] (6) at (-0.75, 1) {};
		\node [style=none] (7) at (0.75, 1.25) {};
		\node [style=none] (8) at (2.5, 0) {};
		\node [style=none] (9) at (0.75, 0) {};
		\node [style=none] (10) at (0.25, -0.75) {};
		\node [style=none] (11) at (0.25, 1.75) {};
		\node [style=none] (12) at (0, 0.5) {$#1$};
		\node [style=none] (13) at (4, 0.5) {\scriptsize $#3$};
		\node [style=none] (14) at (2.5, 2.5) {};
		\node [style=none] (15) at (-1.5, 2.5) {};
		\node [style=none] (16) at (-2.75, 0.5) {\scriptsize $#2$};
		\node [style=none] (17) at (-1.75, 0) {};
		\node [style=none] (18) at (-0.75, 0) {};
		\node [style=none] (20) at (2.5, 1) {};
		\node [style=none] (21) at (3.5, 2.75) {\scriptsize $#4$};
		\node [style=none] (22) at (-1.5, 1) {};
		\node [style=reg] (23) at (1.75, 1) {$#5$};
		\node [style=none] (24) at (4.25, 0) {};
		\node [style=none] (25) at (-3, 0) {};
	\end{pgfonlayer}
	\begin{pgfonlayer}{edgelayer}
		\draw [semithick, in=0, out=-90] (4.center) to (10.center);
		\draw [semithick, in=-90, out=180] (5.center) to (3.center);
		\draw [semithick, in=180, out=90] (0.center) to (2.center);
		\draw [semithick, in=90, out=0] (11.center) to (7.center);
		\draw [semithick] (2.center) to (11.center);
		\draw [semithick] (7.center) to (4.center);
		\draw [semithick] (10.center) to (5.center);
		\draw [semithick] (3.center) to (0.center);
		\draw (15.center) to (14.center);
		\draw (6.center) to (22.center);
		\draw (1.center) to (20.center);
		\draw [->, bend right=90, looseness=2.25] (20.center) to (14.center);
		\draw [->, bend right=90, looseness=2.25] (15.center) to (22.center);
		\draw (8.center) to (24.center);
		\draw [->] (9.center) to (8.center);
		\draw (17.center) to (18.center);
		\draw [->] (25.center) to (17.center);
	\end{pgfonlayer}
\end{tikzpicture}
\tikzset{x=1em, y=1.5ex}}
\newcommand{\myeq}[1]{\mathrel{\overset{\makebox[0pt]{\mbox{\normalfont\tiny\sffamily (#1)}}}{=}}}
\newcommand{\myleq}[1]{\mathrel{\overset{\makebox[0pt]{\mbox{\normalfont\tiny\sffamily (#1)}}}{\leq}}}
\newcommand{\mygeq}[1]{\mathrel{\overset{\makebox[0pt]{\mbox{\normalfont\tiny\sffamily (#1)}}}{\geq}}}
\newcommand{\N}{\mathbb{N}}
\newcommand{\from}{\mathrel{:}\,}
\newcommand{\Aut}{\mathsf{Aut_{\scriptscriptstyle{\Sigma}}}}
\newcommand{\poi}{\,;\,}
\newcommand{\adjto}{\,\lower1pt\hbox{$\dashv$}\,}
\newcommand{\Rel}{\mathsf{Rel}}
\newcommand{\BProf}{\mathsf{Prof}_\mathbb{B}}
\newcommand{\eqKa}{=_{\scriptscriptstyle KDA}}
\newcommand{\leqKa}{\leq_{\scriptscriptstyle KDA}}
\newcommand{\Syn}{\mathsf{Syn}}
\newcommand{\Sem}{\mathsf{Sem}}
\newcommand{\eqE}[1]{=_{#1}}
\def\moverlay{\mathpalette\mov@rlay}
\def\mov@rlay#1#2{\leavevmode\vtop{%
\baselineskip\z@skip \lineskiplimit-\maxdimen
\ialign{\hfil$#1##$\hfil\cr#2\crcr}}}
\newcommand\twarr[2]{%
\mathrel{\mathop{\moverlay{\scriptstyle\xrightarrow{\,#1\,}\cr{\lower.2em\hbox{$\scriptstyle{}_{#2}$}}}}}}
\newcommand\twarrw[2]{%
\mathrel{\mathop{\moverlay{\scriptstyle\Longrightarrow\cr{\lower-.6em\hbox{$\scriptstyle{}_{#1}$}}
\cr{\lower.3em\hbox{$\scriptstyle{}_{#2}$}}}}}}
\newcommand{\dtransw}[2]{\raise1pt\hbox{$\;\twarrw{#1}{#2}\;$}}
\newcommand{\diagregexp}[1]{
\begin{tikzpicture}
	\begin{pgfonlayer}{nodelayer}
		\node [style=none] (0) at (1.5, 0) {};
		\node [style=rcoreg] (1) at (0, 0) {{\color{red} $e$}};
	\end{pgfonlayer}
	\begin{pgfonlayer}{edgelayer}
		\draw [red] (1) to (0.center);
	\end{pgfonlayer}
\end{tikzpicture}}
\newcommand{\smalldiag}[1]{\begin{tikzpicture}
	\begin{pgfonlayer}{nodelayer}
		\node [style=none] (0) at (-0.5, 0.5) {};
		\node [style=none] (1) at (-0.25, 0.75) {};
		\node [style=none] (2) at (-0.5, -0.5) {};
		\node [style=none] (3) at (0.5, -0.5) {};
		\node [style=none] (4) at (-0.25, -0.75) {};
		\node [style=none] (5) at (0.5, 0.5) {};
		\node [style=none] (6) at (1.5, 0) {};
		\node [style=none] (7) at (0.5, 0) {};
		\node [style=none] (8) at (0.25, -0.75) {};
		\node [style=none] (9) at (0.25, 0.75) {};
		\node [style=none] (10) at (0, 0) {$#1$};
		\node [style=none] (11) at (-1.5, 0) {};
		\node [style=none] (12) at (-0.5, 0) {};
	\end{pgfonlayer}
	\begin{pgfonlayer}{edgelayer}
		\draw [in=180, out=0, looseness=1.25] (7.center) to (6.center);
		\draw [semithick, in=0, out=-90] (3.center) to (8.center);
		\draw [semithick, in=-90, out=180] (4.center) to (2.center);
		\draw [semithick, in=180, out=90] (0.center) to (1.center);
		\draw [semithick, in=90, out=0] (9.center) to (5.center);
		\draw [semithick] (1.center) to (9.center);
		\draw [semithick] (5.center) to (3.center);
		\draw [semithick] (8.center) to (4.center);
		\draw [semithick] (2.center) to (0.center);
		\draw [in=180, out=0, looseness=1.25] (11.center) to (12.center);
	\end{pgfonlayer}
\end{tikzpicture}
}
\begin{document}

\title[A Finite Axiomatisation of Finite-State Automata Using String Diagrams]{A Finite Axiomatisation of Finite-State Automata \texorpdfstring{\\}{} Using String Diagrams}
%\titlecomment{{\lsuper*}OPTIONAL comment concerning the title, \eg, if a variant or an extended abstract of the paper has appeared elsewhere.}

\author[R.~Piedeleu]{Robin Piedeleu\lmcsorcid{0000-0002-3945-2704}}[a]%required
\address{\lsuper{a}University College London, United Kingdom}	%optional
\email{\{r.piedeleu,f.zanasi\}@ucl.ac.uk}  %optional

\author[F.~Zanasi]{Fabio Zanasi\lmcsorcid{0000-0001-6457-1345}}[a,b]	%optional
\address{\lsuper{b}University of Bologna, Italy}	%optional

\begin{abstract}
	\noindent We develop a fully diagrammatic approach to finite-state automata, based on reinterpreting their usual state-transition graphical representation as a two-dimensional syntax of string diagrams. In this setting, we are able to provide a complete equational theory for language equivalence, with two notable features. First, the proposed axiomatisation is finite. Second, the Kleene star is a derived concept, as it can be decomposed into more primitive algebraic blocks.
\end{abstract}

\maketitle

\section{Introduction}\label{S:one}

Finite-state automata are one of the most studied structures in theoretical computer science, with an illustrious history and roots reaching far beyond, in the work of biologists, psychologists, engineers and mathematicians. Kleene~\cite{kleene1951representation} introduced regular expressions to give finite-state automata an algebraic presentation, motivated by the study of (biological) neural networks~\cite{mcculloch1943logical}. They are the terms freely generated by the following grammar:
\begin{equation*}
%\label{regexp}
e, f ::= e + f \mid ef \mid e^* \mid 0 \mid 1 \mid a\in A
\end{equation*}
Equational properties of regular expressions were studied by Conway~\cite{conway2012regular} who introduced the term \emph{Kleene algebra}: this is an idempotent semiring with an operation $(-)^*$ for iteration, called the (Kleene) star. The equational theory of Kleene algebra is now well-understood, and multiple complete axiomatisations, both for language and relational models, have been given. Crucially, Kleene algebra is not finitely-based: no finite equational theory can appropriately capture the behaviour of the star~\cite{redko1964defining}. Instead, there are purely  equational infinitary axiomatisations~\cite{krob1991complete,bloom1993equational} and finitary implicational theories, like that of Kozen~\cite{kozen1994completeness}.

Since then, much research has been devoted to extending Kleene algebra with operations capturing richer patterns of behaviour, useful in program verification. Examples include conditional branching (Kleene algebra with tests~\cite{kozen1997kleene}, and its recent guarded version~\cite{smolka2019guarded}),  concurrent computation (CKA~\cite{hoare2009concurrent,KappeB0Z18}), and specification of message-passing behaviour in networks (NetKAT~\cite{anderson2014netkat}).

The meta-theory of the formalisms above relies on a three step methodology: (1) given an operational model (e.g., finite-state automata), (2) devise a syntax (regular expressions) that is sufficiently expressive to capture the class of behaviours of the operational model (regular languages), and (3) find a complete axiomatisation (Kleene algebra) for the given semantics.

In this paper, we open up a direct path from (1) to (3). Instead of thinking of automata as a combinatorial model, we formalise them as a bona-fide (two-dimensional) syntax, using the well-established mathematical theory of \emph{string diagrams} for monoidal categories~\cite{Selinger2009}. This approach lets us axiomatise the behaviour of automata directly, freeing us from the necessity of compressing them down to a one-dimensional notation like regular expressions.

This perspective not only sheds new light on a venerable topic, but has significant consequences. First, as our most important contribution, we are able to provide a \emph{finite and purely equational} axiomatisation of finite-state automata, up to language equivalence. This does not contradict the impossibility of finding a finite basis for Kleene algebra, as the algebraic setting is different: our result gives a finite presentation as a symmetric monoidal category, while the impossibility result prevents any such presentation to exist as an algebraic theory (in the standard sense). In other words, there is no finite axiomatisation based on terms (\emph{tree}-like structures), but we demonstrate that there is one based on string diagrams (\emph{graph}-like structures).

Secondly, embracing the two-dimensional nature of automata guarantees a strong form of compositionality that the one-dimensional syntax of regular expressions does not have. In the string diagrammatic setting, automata may have multiple inputs and outputs and, as a result, can be decomposed into subcomponents that retain a meaningful interpretation. For example, if we split the automata below left, the resulting components are still valid string diagrams within our syntax, below right:
\begin{equation}\label{ex:decompose-automaton}

\tikzset{x=1em, y=2.1ex}
\InputIfFileExists{ex-automaton-graph-split.tikz}{}{\input{./tikz/ex-automaton-graph-split.tikz}}
\tikzset{x=1em, y=1.5ex}
 \qquad\mapsto\qquad 
\tikzset{x=1em, y=2.1ex}
\InputIfFileExists{ex-automaton-diagram-multiple.tikz}{}{\input{./tikz/ex-automaton-diagram-multiple.tikz}}
\tikzset{x=1em, y=1.5ex}
\quad 
\tikzset{x=1em, y=2.1ex}
\InputIfFileExists{ex-automaton-diagram-multiple-1.tikz}{}{\input{./tikz/ex-automaton-diagram-multiple-1.tikz}}
\tikzset{x=1em, y=1.5ex}

\end{equation}
In line with the compositional approach, it is significant that the Kleene star can be decomposed into more elementary building blocks (which come together to form a feedback loop):
\begin{equation}\label{eq:star-decomposed}
e^* \quad \mapsto\quad 
\tikzset{x=1em, y=2.1ex}
\InputIfFileExists{star-decomposed.tikz}{}{\input{./tikz/star-decomposed.tikz}}
\tikzset{x=1em, y=1.5ex}

\end{equation}

\noindent This opens up for interesting possibilities when studying extensions of Kleene algebra within the same approach---we elaborate on this in Section~\ref{sec:conclusion}.

Finally, we believe our proof of completeness is of independent interest, as it relies on fully diagrammatic reformulation of Brzozowski's minimisation algorithm~\cite{brzozowski1962canonical}. In the string diagrammatic setting, the symmetries of the equational theory give this procedure a particularly elegant and simple form. Because all of the axioms involved in the determinisation procedure come with a dual, a co-determinisation procedure can be defined immediately by simply reversing the former. This reduces the proof of completeness to a proof that determinisation can be performed diagrammatically. Moreover, note that our completeness proof goes through a richer language, with additional algebraic operations that are adjoint (in a sense that we explain in Section~\ref{sec:axioms}) to those that allow us to express standard automata. Within this extended language, we are able to derive a completeness result for (diagrams corresponding to) automata, but leave open the completeness of the full language. We will come back to this point in Section~\ref{sec:conclusion}.

\medskip

This is not the first time that automata and regular languages are recast into a categorical mould. The \emph{iteration theories}~\cite{bloom1993iteration} of Bloom and {\'E}sik, \emph{sharing graphs}~\cite{Hasegawa97recursionfrom} of Hasegawa or \emph{network algebras}~\cite{stefanescu2000network} of Stefanescu are all categorical frameworks designed to reason about iteration or recursion, that have found fruitful applications in this domain. They are based on a notion of parameterised fixed-point  which defines a categorical \emph{trace} in the sense of~\cite{Joyal_tracedcategories}. While our proposal bears resemblance to (and is inspired by) this prior work, it goes beyond in one fundamental aspect: it is the first to give a \emph{finite} complete axiomatisation of automata up to language equivalence.

A second difference is methodological: our syntax does not feature any primitive for iteration or recursion. In particular, the star is a derived concept, in the sense that it is decomposable into more elementary operations~\eqref{eq:star-decomposed}. Categorically, our starting point is a compact-closed rather than  traced category.

We elaborate further on the relation between ours and existing work in Section~\ref{sec:conclusion}.

\paragraph{Conference version.} This work is based on the conference paper~\cite{piedeleu2021string}. It amends a mistaken completeness claim made in that paper and proposes a new approach to the same question, based on a different syntax. We detail the relationship between the two papers in Section~\ref{sec:conclusion} and where clarifications are necessary in the main text.

\section{Syntax and semantics}\label{sec:syntax-semantics}

Following a standard methodology (recalled in Appendix~\ref{app:method}), we will define two symmetric monoidal categories (SMCs), one serving as syntax, the other as semantics. Moreover, to guarantee a compositional interpretation, we will define a symmetric monoidal functor between the two.

\paragraph{Syntax.}
We fix an alphabet $\Sigma$. We call $\Syn$ the strict SMC freely generated by the following objects and morphisms:
\begin{itemize}
\item two generating objects $\objr$ (`right') and $\objl$ (`left') with their identity morphisms depicted respectively as $\arrowright$ and $\arrowleft$.
\item the following generating morphisms, depicted as \emph{string diagrams}~\cite{Selinger2009}:
\begin{equation}%
\label{eq:syn1}

\tikzset{x=1em, y=2.1ex}
\InputIfFileExists{lr-copy.tikz}{}{\input{./tikz/lr-copy.tikz}}
\tikzset{x=1em, y=1.5ex}
\quad
\tikzset{x=1em, y=2.1ex}
}
\tikzset{x=1em, y=1.5ex}
\quad
\tikzset{x=1em, y=2.1ex}
\InputIfFileExists{lr-merge.tikz}{}{\input{./tikz/lr-merge.tikz}}
\tikzset{x=1em, y=1.5ex}
\quad
\tikzset{x=1em, y=2.1ex}
}
\tikzset{x=1em, y=1.5ex}
 \quad
\tikzset{x=1em, y=2.1ex}
\InputIfFileExists{cap-down.tikz}{}{\input{./tikz/cap-down.tikz}}
\tikzset{x=1em, y=1.5ex}
 \quad
\tikzset{x=1em, y=2.1ex}
\InputIfFileExists{cup-down.tikz}{}{\input{./tikz/cup-down.tikz}}
\tikzset{x=1em, y=1.5ex}
\quad \scalar{a} \quad (a\in \Sigma)
\end{equation}
\begin{equation}%
\label{eq:white-gen}
\Wmult \quad \Wunit\quad  \Wcomult \quad \Wcounit
\end{equation}
\end{itemize}

\paragraph{Semantics.} We first define the semantics for string diagrams simply as a mapping from the set of generators to relations between tuples of languages over $\Sigma$, and then discuss how to extend it to a functor from $\Syn$ to a category that we will define below. Let $\Lang := 2^{\Sigma^\star}$. A diagram with $m$ ports on the left and $n$ on the right, is interpreted as a subset of $\Lang^m\times\Lang^n$.
\begin{gather}%
\label{def:lr-copy} \nonumber
\sem{
\tikzset{x=1em, y=2.1ex}
\InputIfFileExists{lr-copy.tikz}{}{\input{./tikz/lr-copy.tikz}}
\tikzset{x=1em, y=1.5ex}
} = \left\{\big(L, (K_1,K_2)\big)\mid L\subseteq K_i,\, i=1,2 \text{ and } L,K_1, K_2 \in\Lang \right\}
\\%
\label{def:lr-merge} \nonumber
\sem{
\tikzset{x=1em, y=2.1ex}
\InputIfFileExists{lr-merge.tikz}{}{\input{./tikz/lr-merge.tikz}}
\tikzset{x=1em, y=1.5ex}
} = \left\{\big((L_1,L_2), K\big)\mid L_i\subseteq K,\, i=1,2 \text{ and } L_1, L_2, K \in\Lang \right\}
\\%
\label{def:cup} \nonumber
\quad \sem{
\tikzset{x=1em, y=2.1ex}
}
\tikzset{x=1em, y=1.5ex}
\;} = \left\{(L, \bullet)\mid L \in\Lang \right\} \qquad
\sem{
\tikzset{x=1em, y=2.1ex}
\InputIfFileExists{cup-down.tikz}{}{\input{./tikz/cup-down.tikz}}
\tikzset{x=1em, y=1.5ex}
} = \left\{(\bullet, (L,K)) \mid L\subseteq K \mid L,K \in\Lang \right\}
\\%
\label{def:cap} \nonumber
\quad \sem{
\tikzset{x=1em, y=2.1ex}
}
\tikzset{x=1em, y=1.5ex}
\;} = \left\{( \bullet, K)\mid K \in\Lang \right\} \qquad
\sem{
\tikzset{x=1em, y=2.1ex}
\InputIfFileExists{cap-down.tikz}{}{\input{./tikz/cap-down.tikz}}
\tikzset{x=1em, y=1.5ex}
} = \left\{((L,K), \bullet) \mid K\subseteq L ,\; L,K \in\Lang \right\}
\\
\nonumber
\sem{\scalar{a}} = \left\{(L, K)\mid La\subseteq K,\; L,K \in\Lang \right\} \qquad (a\in\Sigma)\\%
\label{def:blackid} \nonumber
\sem{\arrowright} = \left\{(L, K)\mid L\subseteq K,\; L,K \in\Lang \right\}\\ \nonumber \sem{\arrowleft} = \left\{(L, K)\mid  K\subseteq L,\; L,K \in\Lang \right\}
\\%
\label{eq:intersection}
\sem{\Wmult} = \{((L_1,L_2), K) \mid L_1\cap L_2\subseteq K,\; L_1,L_2,K\in\Lang \} \qquad \sem{\Wunit} = \left\{\left(\bullet, \Sigma^\star\right)\right\}
\\%
\label{eq:union}
\sem{\Wcomult} = \{(L,(K_1,K_2)) \mid L \subseteq K_1\cup K_2,\; L,K_1,K_2\in\Lang\} \qquad \sem{\Wcounit} = \left\{\left(\varnothing, \bullet\right)\right\}
\end{gather}
In a nutshell, the generating diagrams denote operations that relate tuples of languages: $
\tikzset{x=1em, y=2.1ex}
\InputIfFileExists{lr-copy.tikz}{}{\input{./tikz/lr-copy.tikz}}
\tikzset{x=1em, y=1.5ex}
$ represents copying, $
\tikzset{x=1em, y=2.1ex}
}
\tikzset{x=1em, y=1.5ex}
$ discarding, $
\tikzset{x=1em, y=2.1ex}
\InputIfFileExists{cup-down.tikz}{}{\input{./tikz/cup-down.tikz}}
\tikzset{x=1em, y=1.5ex}
$ and $
\tikzset{x=1em, y=2.1ex}
\InputIfFileExists{cap-down.tikz}{}{\input{./tikz/cap-down.tikz}}
\tikzset{x=1em, y=1.5ex}
$ feed back outputs into inputs and vice-versa, and $\scalar{a}$ represents the action of each letter of $\Sigma$ on the set of languages, by concatenation on the right. These are the generators that allow us to encode automata, as we will see in Section~\ref{sec:encoding}. The other generators, $\Wmult,\Wunit, \Wcomult, \Wcounit$, represent operations that are adjoint to their black counterparts, in a sense that we will explain in Section~\ref{sec:axioms}. The directed syntax highlights the dual roles played by the two generating objects, representing inclusion and reverse inclusion of languages respectively.

\begin{rem}
A word of warning: in the conference paper~\cite{piedeleu2021string} on which this work is based, we use a different three sorted syntax, with an additional red wire denoting the set of regular expressions. These acted on the set of languages via a white node whose appearance is reminiscent of the white nodes of the syntax above. However, the reader should not confuse them---their semantics are totally unrelated and, as we will see, so are their equational properties.
\end{rem}

In order for the mapping $\sem{\cdot}$ to be functorial, we now introduce a suitable target SMC for the semantics. Interestingly, this will not be the category $\Rel$ of sets and relations: indeed, the identity morphisms $\arrowright$ and $\arrowleft$ are not interpreted as identities of $\Rel$ (since they denote the order on the set of languages). Instead, the semantic domain will be the category $\BProf$ of \emph{Boolean(-enriched) profunctors}~\cite{fong2018seven} (also variously called relational profunctors~\cite{hyland2003glueing} or weakening relations~\cite{moshier2015coherence} in the literature).

\begin{defi}\label{def:bool-prof}
Given two preorders $(X, \leq_X)$ and $(Y, \leq_Y)$, a \emph{Boolean profunctor} $R\from X \to Y$ is a relation $R\subseteq X \times Y$ such that if $(x,y)\in R \text{ and } x'\leq_X x,\; y\leq_Y y' \text{ then } (x',y')\in R$.
\end{defi}
Preorders and Boolean profunctors form a SMC $\BProf$ with composition given by relational composition. The identity for an object $(X, \leq_X)$ is the order relation $\leq_X$ itself. The monoidal product is the usual product of preorders, where $(x,y)\leq (x',y')$ iff $x\leq_X x'$ and $y\leq_Y y'$. For more details on Boolean profunctors, including applications to engineering design, we refer the reader to~\cite[Chapter 4]{fong2018seven}.
Since relations can be ordered by inclusion in a way that is compatible with composition, they form a \emph{bicategory}, and so does $\BProf$. The bicategorical structure is rather simple as, for any two morphisms, the set of 2-cells between them forms a partial order. This also means that $\BProf$ is an \emph{order-enriched} (1-)category. In fact, it is a \emph{Cartesian bicategory}~\cite{Carboni1987}.
\begin{rem}\label{rmk:monotone-maps}
Note that every monotone map $f\from (X,\leq_X)\to (Y,\leq_Y)$ can be turned into a monotone relations in two different ways: the relation $Yf := \{(x,y)\mid f(x)\leq y\}$ with type $Yf\from (X,\leq_X)\to (Y,\leq_Y)$ and $Y^{op}f := \{(y,x)\mid y\leq f(x)\}$ with type $(Y,\leq_Y)\to (X,\leq_X)$. This implies that $\BProf$ contains (two different copies of) the SMC of monotone maps as a monoidal sub-category.
\end{rem}
The features of our diagrammatic syntax reflect the rich structure of the profunctor semantics. Indeed, the order relation is built into the wires $\arrowright$ and $\arrowleft$. The two possible directions represent the identities on the set of languages ordered by inclusion, and on the same set equipped with the reverse order, respectively.
\begin{prop}
$\sem{\cdot}$ defines a symmetric monoidal functor of type $\Syn \to \BProf$.
\end{prop}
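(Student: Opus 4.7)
The strategy is to exploit the universal property of the free strict SMC $\Syn$: any assignment of generating objects to objects of a target SMC, and generating morphisms to morphisms of matching type, extends uniquely to a strict symmetric monoidal functor. So the plan is to (i) specify the object map, (ii) check that the semantic clauses given for the generators do land in $\BProf$ with the right domain and codomain, and (iii) invoke the universal property.

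For (i), I would send $\objr$ to the preorder $(\Lang, \subseteq)$ and $\objl$ to $(\Lang, \supseteq)$, extending monoidally by taking products of preorders. This is forced by the semantics of the identity wires $\arrowright$ and $\arrowleft$, which are exactly the order relation $\subseteq$ and its reverse (which is the identity of $(\Lang,\supseteq)$ in $\BProf$). For (ii), I would go through each generator in~\eqref{eq:syn1}--\eqref{eq:white-gen} and check two things: its arity matches the claimed type (a diagram with $m$ left ports and $n$ right ports should go to a profunctor whose domain is the product of the $m$ ``left'' orders and whose codomain is the product of the $n$ ``right'' orders), and its defining relation is downward-closed in the domain order and upward-closed in the codomain order (cf.\ Definition~\ref{def:bool-prof}).

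The checks in (ii) are short but must be done carefully because the two objects carry opposite orders. For the ``black'' family, copying, merging, discarding, generating, and the cups/caps all have defining conditions of the form $L \subseteq K_i$ or $L_i \subseteq K$, and monotonicity follows immediately: replacing an input on an $\objr$-wire by a smaller language, or an output on an $\objr$-wire by a larger one (and dually for $\objl$-wires, where the order is flipped, which matches the direction of the cups and caps), preserves the inequalities. The letter generator $\scalar{a}$ needs the observation that $La \subseteq K$ is downward-closed in $L$ and upward-closed in $K$, which follows because right concatenation is monotone. For the ``white'' family, I would check that $L_1 \cap L_2 \subseteq K$ is downward-closed in each $L_i$ and upward-closed in $K$, that $L \subseteq K_1 \cup K_2$ is downward-closed in $L$ and upward-closed in each $K_i$, and that the units $\{(\bullet,\Sigma^\star)\}$ and $\{(\varnothing,\bullet)\}$, viewed as subsets closed under making the right component larger (resp.\ the left component smaller), are profunctors from the terminal preorder.

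Once (i) and (ii) are established, step (iii) is automatic: the universal property of $\Syn$ yields a unique strict symmetric monoidal functor extending the assignment. The only genuine subtlety, and the main place one can slip, is the bookkeeping around the two dual orders on $\Lang$: a left port carrying the object $\objl$ contributes the order $\supseteq$ to the domain (or $\subseteq$ to the codomain), which is why the cups $\tikzfig{cup-down}$ and caps $\tikzfig{cap-down}$, whose defining condition is $L \subseteq K$, are nevertheless profunctors of the stated types. I would make this variance explicit once at the beginning and then the per-generator checks become mechanical.
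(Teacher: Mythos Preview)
Your proposal is correct and follows the same approach as the paper: the paper's proof simply states that it suffices to check that each generator defines a Boolean profunctor, and verifies this for $\scalar{a}$ as a representative example (using exactly the monotonicity-of-concatenation argument you give). You are more explicit than the paper in spelling out the object assignment $\objr \mapsto (\Lang,\subseteq)$, $\objl \mapsto (\Lang,\supseteq)$ and in tracking the variance for the cups and caps, but this is elaboration of the same idea rather than a different route.
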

\begin{proof}
It suffices to check that the interpretation of all generators define Boolean profunctors. It is clear that all generators satisfy the condition of Definition~\ref{def:bool-prof}. For example, the action generator $\scalar{a}$ is a Boolean profunctor: if $(L,K)$ are such that $La\subseteq K$ and, moreover we have $L'\subseteq L$ and $K\subseteq K'$, then $L'a\subseteq La\subseteq K\subseteq K'$ by monotony of concatenation of languages.
\end{proof}
In particular, because $\Syn$ is free, we can unambiguously assign meaning to any composite diagram from the semantics of its components using composition and the monoidal product in $\BProf$:
\begin{align*}
\sem{c\poi d} =\sem{
\tikzset{x=1em, y=2.1ex}
\InputIfFileExists{seq-compose.tikz}{}{\input{./tikz/seq-compose.tikz}}
\tikzset{x=1em, y=1.5ex}
} &= \left\{(L,K)\mid \exists M \, (L,M)\in\sem{\smalldiag{c}}, (M,K)\in\sem{\smalldiag{{\scriptstyle d}}}\right\}\\\sem{c_1\oplus c_2} = \sem{\,
\tikzset{x=1em, y=2.1ex}
\InputIfFileExists{par-compose.tikz}{}{\input{./tikz/par-compose.tikz}}
\tikzset{x=1em, y=1.5ex}
\,} &= \left\{\big((L_1,L_2),(K_1, K_2)\big)\mid (L_i,K_i)\in\sem{\smalldiag{{\scriptstyle c_i}}},\, i=1,2\right\}
\end{align*}
Single wires labelled by a list $X$ of generating objects (here, $\objr$ and $\objl$) represent $|X|$ parallel ordered wires, labelled from top to bottom with the elements of $X$.

\begin{exa}\label{ex:star-semantics}
We include here a worked out example to show how to compute the behaviour of a composite diagram which, as we will see, represents (the action by concatenation of) the regular language $a^* = \{\epsilon, a, aa, \dots\}$. To reason about complex diagrams, it is easier to assign variable names to each wire: let us call $N$ to the top wire of the feedback loop, and $M$ to the middle wire joining $
\tikzset{x=1em, y=2.1ex}
\InputIfFileExists{lr-merge.tikz}{}{\input{./tikz/lr-merge.tikz}}
\tikzset{x=1em, y=1.5ex}
$ to $
\tikzset{x=1em, y=2.1ex}
\InputIfFileExists{lr-copy.tikz}{}{\input{./tikz/lr-copy.tikz}}
\tikzset{x=1em, y=1.5ex}
$. After simplifying a few redundant constraints, we get
\begin{equation*}
\sem{
\tikzset{x=1em, y=2.1ex}
\InputIfFileExists{star-ex.tikz}{}{\input{./tikz/star-ex.tikz}}
\tikzset{x=1em, y=1.5ex}
} \begin{array}{l}
=  \{(L,K)\mid \exists M.N.\; \, L, N\subseteq M,\; Ma \subseteq N\,, M\subseteq K\}\\
=  \{(L,K)\mid \exists M.\; \, L \subseteq M,\; Ma \subseteq M\, M\subseteq K\}\\
=  \{(L,K)\mid \exists M.\; \,  L\cup Ma\subseteq M\; M\subseteq K\}.
\end{array}
\end{equation*}
Call this diagram $d$.
By Arden's lemma~\cite{arden1961delayed}, $La^*$ is the \emph{smallest} solution of the language inequality $L\cup Ma\subseteq M$; thus $\exists M.\; \,  L\cup Ma\subseteq M \Leftrightarrow \exists M.\; \,  La^* \subseteq M$ and
\[ \sem{d} =  \{(L,K)\mid \exists M.\; \,  La^* \subseteq M,\; M\subseteq K\} =  \{(L,K)\mid\,  La^* \subseteq K\}\text{.}\]
\end{exa}

\section{Inequational theory}\label{sec:axioms}

\begin{figure}
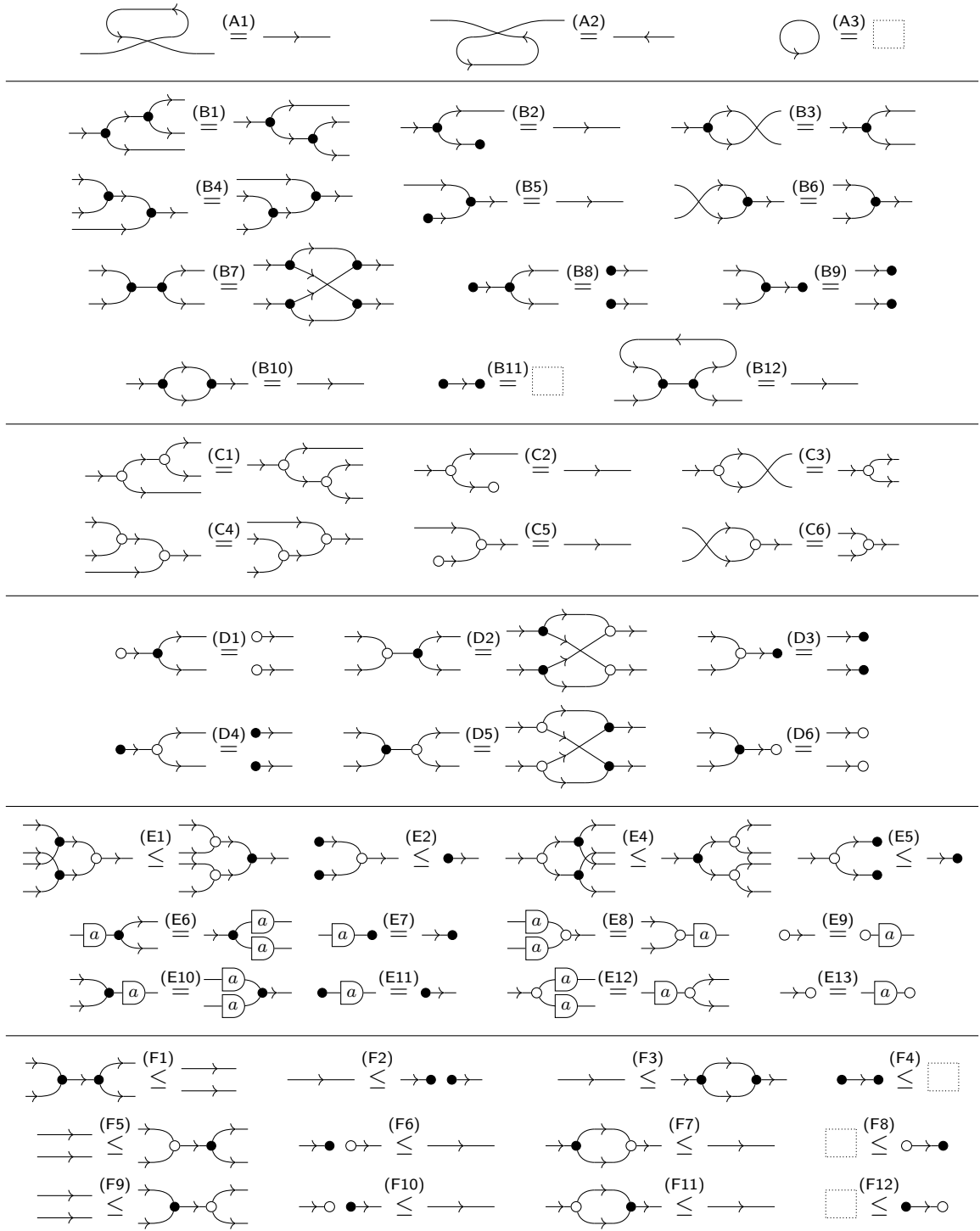

\vspace{-1cm}
\begin{equation*}\label{ax:compact-closed}

\tikzset{x=1em, y=2.1ex}
\InputIfFileExists{yanking-bis.tikz}{}{\input{./tikz/yanking-bis.tikz}}
\tikzset{x=1em, y=1.5ex}
\;\myeq{A1}\;\arrowright\quad \qquad \quad
\tikzset{x=1em, y=2.1ex}
\InputIfFileExists{yanking.tikz}{}{\input{./tikz/yanking.tikz}}
\tikzset{x=1em, y=1.5ex}
\;\myeq{A2}\;\arrowleft\quad \qquad \quad 
\tikzset{x=1em, y=2.1ex}
\InputIfFileExists{loop.tikz}{}{\input{./tikz/loop.tikz}}
\tikzset{x=1em, y=1.5ex}
\;\myeq{A3}\; 
\tikzset{x=1em, y=2.1ex}
\InputIfFileExists{empty-diag.tikz}{}{\input{./tikz/empty-diag.tikz}}
\tikzset{x=1em, y=1.5ex}

\end{equation*}

\hrule

\begin{equation*}\label{ax:rel}

\tikzset{x=1em, y=2.1ex}
\InputIfFileExists{co-associativity.tikz}{}{\input{./tikz/co-associativity.tikz}}
\tikzset{x=1em, y=1.5ex}
 \; \myeq{B1} \; 
\tikzset{x=1em, y=2.1ex}
\InputIfFileExists{co-associativity-1.tikz}{}{\input{./tikz/co-associativity-1.tikz}}
\tikzset{x=1em, y=1.5ex}
 \qquad 
\tikzset{x=1em, y=2.1ex}
\InputIfFileExists{right-co-unitality.tikz}{}{\input{./tikz/right-co-unitality.tikz}}
\tikzset{x=1em, y=1.5ex}
 \; \myeq{B2} \; \arrowright  \qquad 
\tikzset{x=1em, y=2.1ex}
\InputIfFileExists{co-commutativity.tikz}{}{\input{./tikz/co-commutativity.tikz}}
\tikzset{x=1em, y=1.5ex}
\; \myeq{B3}\; 
\tikzset{x=1em, y=2.1ex}
\InputIfFileExists{large-copy.tikz}{}{\input{./tikz/large-copy.tikz}}
\tikzset{x=1em, y=1.5ex}

\end{equation*}
\begin{equation*}

\tikzset{x=1em, y=2.1ex}
\InputIfFileExists{associativity.tikz}{}{\input{./tikz/associativity.tikz}}
\tikzset{x=1em, y=1.5ex}
 \; \myeq{B4} \; 
\tikzset{x=1em, y=2.1ex}
\InputIfFileExists{associativity-1.tikz}{}{\input{./tikz/associativity-1.tikz}}
\tikzset{x=1em, y=1.5ex}
 \qquad  
\tikzset{x=1em, y=2.1ex}
\InputIfFileExists{right-unitality.tikz}{}{\input{./tikz/right-unitality.tikz}}
\tikzset{x=1em, y=1.5ex}
\; \myeq{B5} \; \arrowright \qquad 
\tikzset{x=1em, y=2.1ex}
\InputIfFileExists{commutativity.tikz}{}{\input{./tikz/commutativity.tikz}}
\tikzset{x=1em, y=1.5ex}
\; \myeq{B6}\; 
\tikzset{x=1em, y=2.1ex}
\InputIfFileExists{large-merge.tikz}{}{\input{./tikz/large-merge.tikz}}
\tikzset{x=1em, y=1.5ex}

\end{equation*}
\begin{equation*}

\tikzset{x=1em, y=2.1ex}
\InputIfFileExists{bimonoid.tikz}{}{\input{./tikz/bimonoid.tikz}}
\tikzset{x=1em, y=1.5ex}
\; \myeq{B7} \;
\tikzset{x=1em, y=2.1ex}
\InputIfFileExists{bimonoid-1.tikz}{}{\input{./tikz/bimonoid-1.tikz}}
\tikzset{x=1em, y=1.5ex}
 \qquad \quad
\tikzset{x=1em, y=2.1ex}
\InputIfFileExists{copy-co-delete.tikz}{}{\input{./tikz/copy-co-delete.tikz}}
\tikzset{x=1em, y=1.5ex}
\; \myeq{B8} \;
\tikzset{x=1em, y=2.1ex}
\InputIfFileExists{copy-co-delete-1.tikz}{}{\input{./tikz/copy-co-delete-1.tikz}}
\tikzset{x=1em, y=1.5ex}
\qquad\quad
\tikzset{x=1em, y=2.1ex}
\InputIfFileExists{merge-delete.tikz}{}{\input{./tikz/merge-delete.tikz}}
\tikzset{x=1em, y=1.5ex}
\; \myeq{B9} \;
\tikzset{x=1em, y=2.1ex}
\InputIfFileExists{merge-delete-1.tikz}{}{\input{./tikz/merge-delete-1.tikz}}
\tikzset{x=1em, y=1.5ex}

\end{equation*}
\begin{equation*}
 
\tikzset{x=1em, y=2.1ex}
\InputIfFileExists{idempotence.tikz}{}{\input{./tikz/idempotence.tikz}}
\tikzset{x=1em, y=1.5ex}
\; \myeq{B10} \;\arrowright \qquad \quad 
\tikzset{x=1em, y=2.1ex}
\InputIfFileExists{bone.tikz}{}{\input{./tikz/bone.tikz}}
\tikzset{x=1em, y=1.5ex}
\; \myeq{B11} \; 
\tikzset{x=1em, y=2.1ex}
\InputIfFileExists{empty-diag.tikz}{}{\input{./tikz/empty-diag.tikz}}
\tikzset{x=1em, y=1.5ex}
 \qquad 
\tikzset{x=1em, y=2.1ex}
\InputIfFileExists{feedback.tikz}{}{\input{./tikz/feedback.tikz}}
\tikzset{x=1em, y=1.5ex}
\; \myeq{B12} \; \arrowright
\end{equation*}

\vspace{.5em}
\hrule
\begin{equation*}

\tikzset{x=1em, y=2.1ex}
\InputIfFileExists{w-coassociative.tikz}{}{\input{./tikz/w-coassociative.tikz}}
\tikzset{x=1em, y=1.5ex}
\:\myeq{C1}\: 
\tikzset{x=1em, y=2.1ex}
\InputIfFileExists{w-coassociative-1.tikz}{}{\input{./tikz/w-coassociative-1.tikz}}
\tikzset{x=1em, y=1.5ex}
\qquad
%\tikzfig{w-left-counital}\:\myeq{C2}\:
 
\tikzset{x=1em, y=2.1ex}
\InputIfFileExists{w-right-counital.tikz}{}{\input{./tikz/w-right-counital.tikz}}
\tikzset{x=1em, y=1.5ex}
 \:\myeq{C2}\: 
\tikzset{x=1em, y=2.1ex}
\InputIfFileExists{lr-order.tikz}{}{\input{./tikz/lr-order.tikz}}
\tikzset{x=1em, y=1.5ex}

\qquad 
\tikzset{x=1em, y=2.1ex}
\InputIfFileExists{w-cocommutative.tikz}{}{\input{./tikz/w-cocommutative.tikz}}
\tikzset{x=1em, y=1.5ex}
\:\myeq{C3}\: 
\tikzset{x=1em, y=2.1ex}
\InputIfFileExists{wcomult.tikz}{}{\input{./tikz/wcomult.tikz}}
\tikzset{x=1em, y=1.5ex}

\end{equation*}
\begin{equation*}

\tikzset{x=1em, y=2.1ex}
\InputIfFileExists{w-associative.tikz}{}{\input{./tikz/w-associative.tikz}}
\tikzset{x=1em, y=1.5ex}
\:\myeq{C4}\: 
\tikzset{x=1em, y=2.1ex}
\InputIfFileExists{w-associative-1.tikz}{}{\input{./tikz/w-associative-1.tikz}}
\tikzset{x=1em, y=1.5ex}
\qquad
%\tikzfig{w-left-unital}\:\myeq{C5}\:

\tikzset{x=1em, y=2.1ex}
\InputIfFileExists{w-right-unital.tikz}{}{\input{./tikz/w-right-unital.tikz}}
\tikzset{x=1em, y=1.5ex}
\:\myeq{C5}\: 
\tikzset{x=1em, y=2.1ex}
\InputIfFileExists{lr-order.tikz}{}{\input{./tikz/lr-order.tikz}}
\tikzset{x=1em, y=1.5ex}
\qquad 
\tikzset{x=1em, y=2.1ex}
\InputIfFileExists{w-commutative.tikz}{}{\input{./tikz/w-commutative.tikz}}
\tikzset{x=1em, y=1.5ex}
\:\myeq{C6}\: 
\tikzset{x=1em, y=2.1ex}
\InputIfFileExists{wmult.tikz}{}{\input{./tikz/wmult.tikz}}
\tikzset{x=1em, y=1.5ex}

\end{equation*}

\vspace{.5em}
\hrule

\begin{equation*}

\tikzset{x=1em, y=2.1ex}
\InputIfFileExists{wb-copy-top.tikz}{}{\input{./tikz/wb-copy-top.tikz}}
\tikzset{x=1em, y=1.5ex}
 \:\myeq{D1}\: 
\tikzset{x=1em, y=2.1ex}
\InputIfFileExists{wunit-2.tikz}{}{\input{./tikz/wunit-2.tikz}}
\tikzset{x=1em, y=1.5ex}
 \qquad 
\tikzset{x=1em, y=2.1ex}
\InputIfFileExists{wb-bimonoid.tikz}{}{\input{./tikz/wb-bimonoid.tikz}}
\tikzset{x=1em, y=1.5ex}
\: \myeq{D2}\: 
\tikzset{x=1em, y=2.1ex}
\InputIfFileExists{wb-bimonoid-1.tikz}{}{\input{./tikz/wb-bimonoid-1.tikz}}
\tikzset{x=1em, y=1.5ex}
\qquad 
\tikzset{x=1em, y=2.1ex}
\InputIfFileExists{wb-intersect-counit.tikz}{}{\input{./tikz/wb-intersect-counit.tikz}}
\tikzset{x=1em, y=1.5ex}
 \:\myeq{D3}\: 
\tikzset{x=1em, y=2.1ex}
\InputIfFileExists{bcounit-2.tikz}{}{\input{./tikz/bcounit-2.tikz}}
\tikzset{x=1em, y=1.5ex}

\end{equation*}
\begin{equation*}

\tikzset{x=1em, y=2.1ex}
\InputIfFileExists{bw-counit-comult.tikz}{}{\input{./tikz/bw-counit-comult.tikz}}
\tikzset{x=1em, y=1.5ex}
\: \myeq{D4}\: 
\tikzset{x=1em, y=2.1ex}
\InputIfFileExists{bunit-2.tikz}{}{\input{./tikz/bunit-2.tikz}}
\tikzset{x=1em, y=1.5ex}
\qquad 
\tikzset{x=1em, y=2.1ex}
\InputIfFileExists{bw-bimonoid.tikz}{}{\input{./tikz/bw-bimonoid.tikz}}
\tikzset{x=1em, y=1.5ex}
\: \myeq{D5}\: 
\tikzset{x=1em, y=2.1ex}
\InputIfFileExists{bw-bimonoid-1.tikz}{}{\input{./tikz/bw-bimonoid-1.tikz}}
\tikzset{x=1em, y=1.5ex}
\qquad 
\tikzset{x=1em, y=2.1ex}
\InputIfFileExists{bw-mult-counit.tikz}{}{\input{./tikz/bw-mult-counit.tikz}}
\tikzset{x=1em, y=1.5ex}
 \:\myeq{D6}\: 
\tikzset{x=1em, y=2.1ex}
\InputIfFileExists{wcounit-2.tikz}{}{\input{./tikz/wcounit-2.tikz}}
\tikzset{x=1em, y=1.5ex}

\end{equation*}

\vspace{.5em}
\hrule

\begin{equation*}\label{eq:cartesian-bicat}

\tikzset{x=1em, y=2.1ex}
\InputIfFileExists{bmult-wmult.tikz}{}{\input{./tikz/bmult-wmult.tikz}}
\tikzset{x=1em, y=1.5ex}
\: \myleq{E1} \:
\tikzset{x=1em, y=2.1ex}
\InputIfFileExists{wmult-bmult.tikz}{}{\input{./tikz/wmult-bmult.tikz}}
\tikzset{x=1em, y=1.5ex}
 \quad 
\tikzset{x=1em, y=2.1ex}
\InputIfFileExists{bunit-2-wmult.tikz}{}{\input{./tikz/bunit-2-wmult.tikz}}
\tikzset{x=1em, y=1.5ex}
\:\myleq{E2}\:\Bunit \quad 
\tikzset{x=1em, y=2.1ex}
\InputIfFileExists{wcomult-bcomult.tikz}{}{\input{./tikz/wcomult-bcomult.tikz}}
\tikzset{x=1em, y=1.5ex}
\: \myleq{E4} \:
\tikzset{x=1em, y=2.1ex}
\InputIfFileExists{bcomult-wcomult.tikz}{}{\input{./tikz/bcomult-wcomult.tikz}}
\tikzset{x=1em, y=1.5ex}
 \quad 
\tikzset{x=1em, y=2.1ex}
\InputIfFileExists{wcomult-bcounit.tikz}{}{\input{./tikz/wcomult-bcounit.tikz}}
\tikzset{x=1em, y=1.5ex}
\:\myleq{E5}\:\Bcounit
\end{equation*}
\begin{equation*}\label{eq:scalar-bcomult-wmult} 
\tikzset{x=1em, y=2.1ex}
\InputIfFileExists{copy-atom.tikz}{}{\input{./tikz/copy-atom.tikz}}
\tikzset{x=1em, y=1.5ex}
 \:\myeq{E6}\: 
\tikzset{x=1em, y=2.1ex}
\InputIfFileExists{copy-atom-1.tikz}{}{\input{./tikz/copy-atom-1.tikz}}
\tikzset{x=1em, y=1.5ex}
 \quad 
\tikzset{x=1em, y=2.1ex}
\InputIfFileExists{delete-atom.tikz}{}{\input{./tikz/delete-atom.tikz}}
\tikzset{x=1em, y=1.5ex}
 \:\myeq{E7}\: \Bcounit \qquad 
\tikzset{x=1em, y=2.1ex}
\InputIfFileExists{atoms-wmult.tikz}{}{\input{./tikz/atoms-wmult.tikz}}
\tikzset{x=1em, y=1.5ex}
 \:\myeq{E8}\: 
\tikzset{x=1em, y=2.1ex}
\InputIfFileExists{wmult-atom.tikz}{}{\input{./tikz/wmult-atom.tikz}}
\tikzset{x=1em, y=1.5ex}
 \qquad \Wunit  \:\myeq{E9}\: 
\tikzset{x=1em, y=2.1ex}
\InputIfFileExists{wunit-atom.tikz}{}{\input{./tikz/wunit-atom.tikz}}
\tikzset{x=1em, y=1.5ex}

\end{equation*}
\begin{equation*}\label{eq:scalar-bmult-wcomult} 
\tikzset{x=1em, y=2.1ex}
\InputIfFileExists{co-copy-atom.tikz}{}{\input{./tikz/co-copy-atom.tikz}}
\tikzset{x=1em, y=1.5ex}
 \:\myeq{E10}\: 
\tikzset{x=1em, y=2.1ex}
\InputIfFileExists{co-copy-atom-1.tikz}{}{\input{./tikz/co-copy-atom-1.tikz}}
\tikzset{x=1em, y=1.5ex}
 \quad 
\tikzset{x=1em, y=2.1ex}
\InputIfFileExists{co-delete-atom.tikz}{}{\input{./tikz/co-delete-atom.tikz}}
\tikzset{x=1em, y=1.5ex}
 \:\myeq{E11}\: \Bunit \qquad  
\tikzset{x=1em, y=2.1ex}
\InputIfFileExists{wcomult-atoms.tikz}{}{\input{./tikz/wcomult-atoms.tikz}}
\tikzset{x=1em, y=1.5ex}
\:\myeq{E12}\: 
\tikzset{x=1em, y=2.1ex}
\InputIfFileExists{atom-wcomult.tikz}{}{\input{./tikz/atom-wcomult.tikz}}
\tikzset{x=1em, y=1.5ex}
 \qquad \Wcounit  \:\myeq{E13}\: 
\tikzset{x=1em, y=2.1ex}
\InputIfFileExists{atom-wcounit.tikz}{}{\input{./tikz/atom-wcounit.tikz}}
\tikzset{x=1em, y=1.5ex}

\end{equation*}

\vspace{.5em}
\hrule

\begin{equation*}

\tikzset{x=1em, y=2.1ex}
\InputIfFileExists{bmult-bcomult.tikz}{}{\input{./tikz/bmult-bcomult.tikz}}
\tikzset{x=1em, y=1.5ex}
 \:\myleq{F1}\: 
\tikzset{x=1em, y=2.1ex}
\InputIfFileExists{id-2.tikz}{}{\input{./tikz/id-2.tikz}}
\tikzset{x=1em, y=1.5ex}
\qquad  \idone \:\myleq{F2}\:\Bcounit\:\:\Bunit\qquad\quad \idone \:\myleq{F3}\: 
\tikzset{x=1em, y=2.1ex}
\InputIfFileExists{bcomult-bmult.tikz}{}{\input{./tikz/bcomult-bmult.tikz}}
\tikzset{x=1em, y=1.5ex}
 \qquad  
\tikzset{x=1em, y=2.1ex}
\InputIfFileExists{bunit-bcounit.tikz}{}{\input{./tikz/bunit-bcounit.tikz}}
\tikzset{x=1em, y=1.5ex}
 \:\myleq{F4}\: 
\tikzset{x=1em, y=2.1ex}
\InputIfFileExists{empty-diag.tikz}{}{\input{./tikz/empty-diag.tikz}}
\tikzset{x=1em, y=1.5ex}

\end{equation*}
\begin{equation*}

\tikzset{x=1em, y=2.1ex}
\InputIfFileExists{id-2.tikz}{}{\input{./tikz/id-2.tikz}}
\tikzset{x=1em, y=1.5ex}
\:\myleq{F5} \:
\tikzset{x=1em, y=2.1ex}
\InputIfFileExists{wmult-bcomult.tikz}{}{\input{./tikz/wmult-bcomult.tikz}}
\tikzset{x=1em, y=1.5ex}
 \qquad \Bcounit\:\:\Wunit \:\myleq{F6}\: \idone\qquad
\tikzset{x=1em, y=2.1ex}
\InputIfFileExists{bcomult-wmult.tikz}{}{\input{./tikz/bcomult-wmult.tikz}}
\tikzset{x=1em, y=1.5ex}
\:\myleq{F7}\: \idone \qquad  
\tikzset{x=1em, y=2.1ex}
\InputIfFileExists{empty-diag.tikz}{}{\input{./tikz/empty-diag.tikz}}
\tikzset{x=1em, y=1.5ex}
 \:\myleq{F8}\:
\tikzset{x=1em, y=2.1ex}
\InputIfFileExists{wunit-bcounit.tikz}{}{\input{./tikz/wunit-bcounit.tikz}}
\tikzset{x=1em, y=1.5ex}

\end{equation*}
\begin{equation*}

\tikzset{x=1em, y=2.1ex}
\InputIfFileExists{id-2.tikz}{}{\input{./tikz/id-2.tikz}}
\tikzset{x=1em, y=1.5ex}
\:\myleq{F9}\:
\tikzset{x=1em, y=2.1ex}
\InputIfFileExists{bmult-wcomult.tikz}{}{\input{./tikz/bmult-wcomult.tikz}}
\tikzset{x=1em, y=1.5ex}
 \qquad  \Wcounit\:\:\Bunit \:\myleq{F10}\: \idone\qquad
\tikzset{x=1em, y=2.1ex}
\InputIfFileExists{wcomult-bmult.tikz}{}{\input{./tikz/wcomult-bmult.tikz}}
\tikzset{x=1em, y=1.5ex}
 \:\myleq{F11}\: \idone \qquad  
\tikzset{x=1em, y=2.1ex}
\InputIfFileExists{empty-diag.tikz}{}{\input{./tikz/empty-diag.tikz}}
\tikzset{x=1em, y=1.5ex}
 \:\myleq{F12}\:
\tikzset{x=1em, y=2.1ex}
\InputIfFileExists{bunit-wcounit.tikz}{}{\input{./tikz/bunit-wcounit.tikz}}
\tikzset{x=1em, y=1.5ex}

\end{equation*}

%\vspace{.5em}
%\hrule

%\begin{equation*}
%\Bcomult \:\myleq{G1}\:\Wcomult \qquad \Wcounit\:\myleq{G2}\: \Bcounit
%\qquad \quad
%\Bmult \:\myleq{G3}\: \Wmult \qquad \Wunit \:\myleq{G4}\: \Bunit
%\end{equation*}

\caption{Theory of Kleene Diagram Algebra (KDA).}\label{fig:axioms}
\vspace{-0.5cm}
\end{figure}

In Figure~\ref{fig:axioms} we introduce KDA, the  theory of \emph{Kleene Diagram Algebra}, on $\Syn$. Once we have shown how to encode automata into it, we will show that it is \emph{complete} for equivalence of automata-diagrams (Definition~\ref{def:automata-diagram} below). We explain some salient features of KDA below. As explained in Appendix~\ref{app:method}, we use equality as a shorthand for two inequalities.

\begin{itemize}
\item (A1)-(A2) relate $
\tikzset{x=1em, y=2.1ex}
\InputIfFileExists{cap-down.tikz}{}{\input{./tikz/cap-down.tikz}}
\tikzset{x=1em, y=1.5ex}
$ and $
\tikzset{x=1em, y=2.1ex}
\InputIfFileExists{cup-down.tikz}{}{\input{./tikz/cup-down.tikz}}
\tikzset{x=1em, y=1.5ex}
$, allowing us to bend and straighten wires at will. This makes $\Syn$ modulo (A1)-(A2), a \emph{compact closed category}~\cite{kelly1980compactclosed}. (A3) allows us to eliminate isolated loops.
\item The B block states that $
\tikzset{x=1em, y=2.1ex}
\InputIfFileExists{lr-copy.tikz}{}{\input{./tikz/lr-copy.tikz}}
\tikzset{x=1em, y=1.5ex}
,
\tikzset{x=1em, y=2.1ex}
}
\tikzset{x=1em, y=1.5ex}
$ forms a cocommutative comonoid (B1)-(B3), while $
\tikzset{x=1em, y=2.1ex}
\InputIfFileExists{lr-merge.tikz}{}{\input{./tikz/lr-merge.tikz}}
\tikzset{x=1em, y=1.5ex}
,
\tikzset{x=1em, y=2.1ex}
}
\tikzset{x=1em, y=1.5ex}
$ form a commutative monoid (B4)-(B6). By (co)commutativity, the (co)unitality axiom also holds with the (co)unit plugged into the other wire. More generally, as is usual in standard algebra reason tacitly modulo (co)commutativity, (co)associativity and (co)unitality axioms whenever convenient.  Moreover, $
\tikzset{x=1em, y=2.1ex}
\InputIfFileExists{lr-copy.tikz}{}{\input{./tikz/lr-copy.tikz}}
\tikzset{x=1em, y=1.5ex}
,
\tikzset{x=1em, y=2.1ex}
}
\tikzset{x=1em, y=1.5ex}
,
\tikzset{x=1em, y=2.1ex}
\InputIfFileExists{lr-merge.tikz}{}{\input{./tikz/lr-merge.tikz}}
\tikzset{x=1em, y=1.5ex}
,
\tikzset{x=1em, y=2.1ex}
}
\tikzset{x=1em, y=1.5ex}
$ together form an idempotent bimonoid (B7)-(B11).  Incidentally, note that (B1)-(B11) axiomatise the SMC of finite sets and relations, with monoidal product given by the disjoint sum. (B12)  allows us to eliminate trivial feedback loops, extending the previous axiomatisation to the traced SMC of sets and relations.
\item The C block makes $(\Wmult, \Wunit)$ into a commutative monoid and $(\Wcomult, \Wcounit)$ into a cocommutative comonoid. As for the black generators, we will reason tacitly modulo (co)commutativity, (co)associativity and (co)unitality axioms whenever convenient.
\item The D block states that both $(\Bcomult, \Bcounit,\Wmult, \Wunit)$ and $(\Wcomult, \Wcounit,\Bmult, \Bunit)$ form two bimonoids.
\item The E block encodes fundamental axioms of Cartesian bicategories~\cite{Carboni1987}. They are lax versions of distributive laws, of $\Wmult$ over $\Bmult$ and $\Wcomult$ over $\Bcomult$ (as well as their units and counits). We also have equalities that force the $\scalar{a}$ to distribute over the other operations. Semantically, this corresponds to the fact that the $\scalar{a}$ are lattice homomorphisms.

\item The F block state a number of adjunctions in the 2-categorical sense\footnote{In this setting, the 2-cells are simply inclusions, so the reader can also think about these adjunctions simply as Galois connections.}: two morphisms $f\from X \to Y$ and $g\from Y \to X$ are adjoint if $id_X \leq f\poi g$ and $g\poi f \leq id_Y$. We write $f\vdash g$ and say that $f$ is left ajdoint to $g$. The situation for KDA is summarised by the following six adjunctions:
\begin{equation}\label{eq:adjunctions-ternary}
\Wmult \,\dashv\, \Bcomult \,\dashv\, \Bmult \,\dashv\, \Wcomult
\end{equation}
\begin{equation}\label{eq:adjunctions-unary}
\Wunit \,\dashv\, \Bcounit \,\dashv\,\Bunit \,\dashv\, \Wcounit
\end{equation}
The central adjunctions involving only $\Bcomult, \Bcounit,\Bmult, \Bunit$ are the key defining adjunctions of Cartesian bicategories. The remaining adjunctions hold whenever the supporting poset is a lattice (has binary meets and joins), which is the case for the set of (regular) languages over a given alphabet, as it is closed under union and intersection. To better understand where these adjunctions come from, it is helpful to adopt a semantic point of view. For example, recall that  $\sem{\Wcomult} = \{(L,(K_1,K_2)) \mid L \subseteq K_1\cup K_2\}$ and $\sem{\Bcomult} = \left\{\big(L, (K_1,K_2)\big)\mid L\subseteq K_i,\, i=1,2\right\} = \left\{\big(L, (K_1,K_2)\big)\mid L\subseteq K_1\cap K_2\right\}$. Thus, one can see the adjunction $\Wmult \dashv \Bcomult$ as arising from the duality between the two different ways of turning intersection---a monotone map---into a monotone relation, \emph{cf.} Remark~\ref{rmk:monotone-maps}. These two embeddings of the same monotone map always give rise to an adjunction of this form.
Note that we can strengthen some of the inequalities in this block to equalities: the equalities for (F6),(F8),(F10), and (F12) can all be derived.
\item The equational theory contains a number of important symmetries: many equations also hold when taking the horizontal reflection of the diagrams involved, e.g., (B1) and (B4) or (D2) and (D5). This will play an important role in our proof of completeness in Section~\ref{sec:completeness}.
\item Finally, the proposed axiomatisation is \emph{not} minimal. For example (F3) and (F4) are obviously subsumed by (B10) and (B11).
Similarly, the equations of the E block could be weakened to inequalities. As we will frequently make use of several symmetries of the equational theory, for the reader's convenience we have preferred to add these redundant axioms to Fig.~\ref{fig:axioms}, instead of scattering them in different subsequent lemmas. They also serve to highlight common algebraic structures that occur in related theories (\emph{e.g.} bimonoids).
\end{itemize}

\noindent
From the data of the set of generators and the relations of Fig.~\ref{fig:axioms}, we can construct a partial order on each homset of $\Syn$ as explained in Appendix~\ref{sec:smit}. First we build a preorder on each homset by closing KDA under $\oplus$ and taking the reflexive and transitive closure of the resulting relation. Then we obtain a partial order by quotienting the resulting pre-order to impose anti-symmetry. Below, we will call $\leqKa$ the resulting order on each homset.

We are interested in the properties of those diagrams that correspond to automata. As we will see in the next section, there is a close relationship between diagrams composed exclusively of the generators in~\eqref{eq:syn1} and standard automata, justifying the following definition.
\begin{defi}\label{def:automata-diagram}
An \emph{automaton-diagram} is a morphism of $\Syn$ built from the generators of~\eqref{eq:syn1}, namely \[\Bcomult, \Bcounit, \Bmult, \Bunit, \scalar{a}, 
\tikzset{x=1em, y=2.1ex}
\InputIfFileExists{cap-down.tikz}{}{\input{./tikz/cap-down.tikz}}
\tikzset{x=1em, y=1.5ex}
, 
\tikzset{x=1em, y=2.1ex}
\InputIfFileExists{cup-down.tikz}{}{\input{./tikz/cup-down.tikz}}
\tikzset{x=1em, y=1.5ex}
.\]
We call $\Aut$ the corresponding monoidal subcategory.
\end{defi}
\begin{rem}
Note that, as anticipated in the introduction, constructing automata does not require the white generators of $\Syn$, as defined in~\eqref{eq:white-gen}. However, together with the corresponding adjoint structure in the inequational theory, they are essential to the completeness proof given below.
\end{rem}

We can now state our soundness and completeness result for automata-diagrams.
\begin{thm}[Soundness and Completeness]\label{thm:completeness}
For any two automata-diagrams $d$ and $d'$,
\[\sem{d}\subseteq \sem{d'} \text{ if and only if } d \leqKa d'\text{.}\]
\end{thm}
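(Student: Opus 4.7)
My plan is to address the two directions separately. Soundness (right-to-left) reduces to routine verification that each axiom of Figure~\ref{fig:axioms} is valid in $\BProf$ under $\sem{\cdot}$. The compact-closed axioms (A1)--(A3) hold because $\BProf$ itself is compact closed. The B block follows from the fact that $\Lang$, ordered by inclusion, carries a natural idempotent commutative bimonoid structure from $\cup$ and $\cap$, viewed as monotone relations. The C and D blocks encode that $(\Lang,\cap,\Sigma^\star,\cup,\varnothing)$ is a bounded distributive lattice, yielding two bimonoid structures. The action axioms (E6)--(E13) follow because concatenation with $a$ is a lattice homomorphism. The adjunction block (F) is valid because, as noted in Remark~\ref{rmk:monotone-maps}, every monotone map between lattices embeds in $\BProf$ in two dual ways that together form an adjunction; applied to the meets and joins of $\Lang$, this gives exactly the chains~\eqref{eq:adjunctions-ternary} and~\eqref{eq:adjunctions-unary}.

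For completeness, my plan is to adapt Brzozowski's minimisation algorithm to the string-diagrammatic setting. The high-level strategy is: (i) reduce the inequational statement to an equational one, then (ii) show that any two automata-diagrams with the same semantics can be proved KDA-equal by reducing both to the same minimal deterministic form using only the axioms of KDA. For step (i), I use that $\sem{d}\subseteq\sem{d'}$ can be re-expressed, via the white join structure $\Wcomult,\Wcounit$, as an equality of diagrams in $\Syn$; crucially, even though the result concerns only automata-diagrams (in $\Aut$), the argument is allowed to pass through the richer language $\Syn$, as the introduction already flags. For step (ii), I first put an arbitrary automaton-diagram into a canonical ``states-and-transitions'' normal form whose shape mirrors a nondeterministic finite-state automaton (a single sum of black-monoid bone-components, glued to scalar boxes labelled by letters of $\Sigma$, closed by cup/cap), essentially using the B block and the compact-closed axioms (A1)--(A3).

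The core of the argument is a diagrammatic subset construction. I show that one can eliminate nondeterminism in such a normal form by systematically rewriting using the adjunctions between the black and white generators (the F block), the distributive-law axioms of the D and E blocks, and the bimonoid axioms. The result is a deterministic automaton-diagram that is KDA-equal to the original. Once this is in place, the symmetry of the axioms of Figure~\ref{fig:axioms} under horizontal reflection (which swaps $\Bcomult\leftrightarrow\Bmult$, $\Wcomult\leftrightarrow\Wmult$, and $\tikzfig{cup-down}\leftrightarrow\tikzfig{cap-down}$) gives co-determinisation for free: the reversed subset construction is proved correct by applying the same proof with every equation replaced by its horizontal dual. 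Chaining determinisation, diagrammatic reversal (via the compact-closed structure), determinisation again, and reversal again yields Brzozowski's minimisation procedure, carried out entirely inside KDA. Since the output is unique up to isomorphism of deterministic automata, which in turn is expressible by renaming state-components using the B block, two diagrams with equal semantics are KDA-equal, and the inequational version follows from step (i).

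The main obstacle is the diagrammatic subset construction. The challenge is to organise the rewrites so that the branching/merging encoded by the black $\Bmult$ and $\Bcomult$ is replaced by a choice among ``subsets of states'' represented through the white structure, while preserving the overall language semantics at every rewrite step. I expect this to require a careful inductive construction on the number of nondeterministic branches, with the F-block adjunctions providing the crucial Galois-connection-style cancellations that let one commute white meets past black joins. A secondary subtlety is making sure that the ``reduction to equality'' in step (i) stays within the class of diagrams for which this subset construction applies, so that the Brzozowski pipeline can be invoked symmetrically on both $d$ and $d'$.
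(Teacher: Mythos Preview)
Your overall strategy matches the paper's: soundness by axiom-checking, completeness by a diagrammatic Brzozowski minimisation built on a subset construction that exploits the black/white adjunctions, with co-determinisation obtained from determinisation by the horizontal symmetry of the axioms. Two concrete points, however, would break the argument as stated.

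First, in step (i) you propose encoding $\sem{d}\subseteq\sem{d'}$ as an equality using the \emph{white} join structure. This takes the resulting diagram outside $\Aut$, so the determinisation/minimisation pipeline---which is only established for automata-diagrams---no longer applies; this is precisely the ``secondary subtlety'' you flag, but your proposal does not resolve it. The paper's fix is to use the \emph{black} semilattice instead: one shows $c\leqKa d$ iff the diagram $\Bcomult;(c\oplus d);\Bmult$ equals $c$, both semantically and in KDA (via (F1)--(F4) and global (cpy)/(del)). Since $\Bcomult$ and $\Bmult$ are automata generators, both sides remain in $\Aut$ and the pipeline applies. Second, your Brzozowski chain is in the wrong order: determinise--reverse--determinise--reverse ends with a reversal, producing a co-deterministic representation to which uniqueness of minimal DFAs does not apply. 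The correct order (as in the paper) is reverse--determinise--reverse--determinise; equivalently, co-determinise then determinise. Finally, two steps you leave implicit are made explicit in the paper: a preliminary reduction of arbitrary automata-diagrams to type $\objr\to\objr$ (first via compact closure, then via a global distributivity theorem showing left-to-right diagrams are determined by their $\objr\to\objr$ coefficients), and the subset construction itself, which is not an induction on nondeterministic branches but Kozen's algebraic construction using the powerset membership matrix $p_s$ together with a bisimulation lemma $\hat d^{\,*}\,p_s = p_s\,d^{*}$ derived from the generalised adjunctions of Lemmas~\ref{lemma:adjoint-bb-wb}--\ref{lemma:adjoint-bw-bb}.
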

The \emph{soundness} of $\leqKa$ for the chosen interpretation is not difficult to show and involves a routine verification that all the axioms in Fig.~\ref{fig:axioms} hold in the semantics. We show (D2) here as an example.  We have
\begin{align*}
\sem{\,
\tikzset{x=1em, y=2.1ex}
\InputIfFileExists{wb-bimonoid.tikz}{}{\input{./tikz/wb-bimonoid.tikz}}
\tikzset{x=1em, y=1.5ex}
\,} &= \{((L_1, L_2), (K_1, K_2)) \mid \exists M.\; L_1 \cap L_2 \subseteq M \subseteq K_1\cap K_2\} \\
&= \{((L_1, L_2), (K_1, K_2)) \mid L_1 \cap L_2 \subseteq K_1\cap K_2\}\\
& \subseteq \left\{((L_1, L_2), (K_1, K_2)) \mid \;\begin{array}{l}
L_1 \subseteq L_1 \cup (K_1\cap K_2),\,
L_2\subseteq L_2 \cup (K_1\cap K_2),\\
L_1\cap L_2 \subseteq K_1,\, (K_1\cap K_2)\cap (K_1\cap K_2) \subseteq K_2,
\end{array} \right\}\\
& \subseteq \left\{((L_1, L_2), (K_1, K_2)) \mid \exists M_1.M_2.M_3.M_4.\;\begin{array}{l}
L_1 \subseteq M_1\cup M_3, \\
L_2\subseteq M_2\cup M_4,\\
 M_1\cap M_2 \subseteq K_1,\\
 M_3\cap M_4 \subseteq K_2
\end{array} \right\}\\
&= \sem{\,
\tikzset{x=1em, y=2.1ex}
\InputIfFileExists{wb-bimonoid-1.tikz}{}{\input{./tikz/wb-bimonoid-1.tikz}}
\tikzset{x=1em, y=1.5ex}
\,}
\end{align*}
and, conversely
\[
\sem{\,
\tikzset{x=1em, y=2.1ex}
\InputIfFileExists{wb-bimonoid-1.tikz}{}{\input{./tikz/wb-bimonoid-1.tikz}}
\tikzset{x=1em, y=1.5ex}
\,} = \left\{((L_1, L_2), (K_1, K_2)) \mid \exists M_1.M_2.M_3.M_4.\;\begin{array}{l}
L_1 \subseteq M_1\cup M_3, \\
L_2\subseteq M_2\cup M_4,\\
 M_1\cap M_2 \subseteq K_1,\\
 M_3\cap M_4 \subseteq K_2
\end{array} \right\}
\]
\begin{align*}
\qquad & \subseteq \left\{((L_1, L_2), (K_1, K_2)) \mid \exists M_1.M_2.M_3.M_4.\; \begin{array}{l} L_1 \cap L_2 \\ \subseteq (M_1\cup M_3)\cap (M_2\cup M_4)\\ \subseteq (M_1\cap M_2)\cup (M_3\cap M_4)\\\subseteq K_1\cap K_2\end{array}\right\}\\
& \subseteq \{((L_1, L_2), (K_1, K_2)) \mid \exists M.\; L_1 \cap L_2 \subseteq M \subseteq K_1\cap K_2\}\\
& = \sem{\,
\tikzset{x=1em, y=2.1ex}
\InputIfFileExists{wb-bimonoid.tikz}{}{\input{./tikz/wb-bimonoid.tikz}}
\tikzset{x=1em, y=1.5ex}
\,}
\end{align*}
\begin{rem}
The two black generating objects are not \emph{discrete} in the terminology of Carboni and Walters~\cite{Carboni1987}: this means that $
\tikzset{x=1em, y=2.1ex}
\InputIfFileExists{lr-copy.tikz}{}{\input{./tikz/lr-copy.tikz}}
\tikzset{x=1em, y=1.5ex}
, 
\tikzset{x=1em, y=2.1ex}
}
\tikzset{x=1em, y=1.5ex}
$ and $
\tikzset{x=1em, y=2.1ex}
\InputIfFileExists{lr-merge.tikz}{}{\input{./tikz/lr-merge.tikz}}
\tikzset{x=1em, y=1.5ex}
, 
\tikzset{x=1em, y=2.1ex}
}
\tikzset{x=1em, y=1.5ex}
$ do not satisfy the Frobenius law. In fact, because they already form a bimonoid, satisfying the Frobenius law would trivialise the equational theory, making any two diagrams of the same type equal.
\end{rem}
\begin{rem}\label{rmk:different-monoid}
The atomic actions $\scalar{a} \,(a\in\Sigma)$ compose freely. This is because, as we study automata, we are interested in the \emph{free} monoid $\Sigma^*$ over $\Sigma$. However, nothing would prevent us from modelling other structures. Free commutative monoids (powers of $\N$), whose rational subsets correspond to semilinear sets~\cite[Chapter 11]{conway2012regular} would be of particular interest.
\end{rem}
\begin{rem}
Semantically, the generators of automata-diagrams allow us to specify systems of linear language inequalities of the form $La\subseteq K$. The addition of the white nodes extends the expressiveness of our calculus, giving us the ability to specify systems involving intersection and union on both sides of an inequality. While it is clear that this is a strictly more expressive calculus---for example, the relations that interpret any of the white generators cannot be expressed using only the black ones---we leave the precise characterisation of the image of $\sem{\cdot}$ for future work.
\end{rem}
\begin{rem}
We have already explained that the adjunctions between the white and black nodes hold whenever the underlying poset is a lattice. In fact, the calculus we give in this paper could be interpreted over an arbitrary lattice, with the semantics of each $\scalar{a}$ ($a\in\Sigma$) given as a lattice endomorphism.

This last claim also shows that KDA is incomplete for the given interpretation. Indeed, the lattice of languages (and that of regular languages) is a Boolean algebra. This additional structure can be captured equationally, by making $(\Wcomult, \Wcounit,\Wmult, \Wunit)$ a Frobenius algebra. However, the defining equations of Frobenius algebras cannot be derived from KDA\@. One way to show this is to devise a counter-model by interpreting the calculus over a different lattice, which is not complemented.

Note that this is a feature, not a bug. We are hoping that the methods of this paper can be translated to other settings, \emph{e.g.} to automata-like structures or modal logics that do not require the underlying semantics to be Boolean.

\end{rem}

We will now write $\leqKa$ (resp. $\eqKa$) simply as $\leq$ (resp. $=$) to simplify notation, and say that diagrams $c$ and $d$ of the same type are \emph{equal} when $c\eqKa d$.

\section{Encoding regular expressions and automata}\label{sec:encoding}

A major appeal of our approach is that both regular expressions and automata can be uniformly represented in the graphical language of string diagrams, and the translation of one into the other becomes a (in)equational derivation in KDA\@. In fact, we will see there is a close resemblance between automata and the shape of the string diagrams interpreting them---the main difference being that string diagrams can be composed.

In this section we describe how regular expressions (resp.\ automata) can be encoded as string diagrams, such that their semantics corresponds in a precise way to the languages that they describe (resp.\ recognise).

\subsection{From regular expressions to string diagrams}

We can define an encoding $\transreg{-}$ of regular expressions into string diagrams of $\Aut$ inductively as follows:
\begin{align}
\transreg{e+f}&\; =\;  
\tikzset{x=1em, y=2.1ex}
\InputIfFileExists{sum.tikz}{}{\input{./tikz/sum.tikz}}
\tikzset{x=1em, y=1.5ex}
\quad &\transreg{0} \;=\; 
\tikzset{x=1em, y=2.1ex}
\begin{tikzpicture}
	\begin{pgfonlayer}{nodelayer}
		\node [style=black] (0) at (0.5, 0) {};
		\node [style=black] (1) at (-0.5, 0) {};
		\node [style=none] (2) at (-1.75, 0) {};
		\node [style=none] (3) at (1.75, 0) {};
	\end{pgfonlayer}
	\begin{pgfonlayer}{edgelayer}
		\draw (1) to (2.center);
		\draw (3.center) to (0);
	\end{pgfonlayer}
\end{tikzpicture}
}
\tikzset{x=1em, y=1.5ex}
  \nonumber\\
\transreg{ef}&\;= \; 
\tikzset{x=1em, y=2.1ex}
\InputIfFileExists{product.tikz}{}{\input{./tikz/product.tikz}}
\tikzset{x=1em, y=1.5ex}
  &\transreg{1} \;=\;  
\tikzset{x=1em, y=2.1ex}
}
\tikzset{x=1em, y=1.5ex}
 \nonumber\\%
\label{eq:regexpstar}
\transreg{e^*}&\; =\; 
\tikzset{x=1em, y=2.1ex}
\InputIfFileExists{star.tikz}{}{\input{./tikz/star.tikz}}
\tikzset{x=1em, y=1.5ex}
 &  \transreg{a} \;=\;   \scalar{a}
\end{align}
For example,
%Consider $ab(a+ab)^*$ and let $L$ be $ \sem{ab(a+ab)^*}$ the associated regular language. Following the mapping rules above, we obtain the following diagram:
\begin{equation}\label{ex:regexp-diagram}
\transreg{ab(a+ab)^*} \; =\; 
\tikzset{x=1em, y=2.1ex}
\InputIfFileExists{ex-regexp.tikz}{}{\input{./tikz/ex-regexp.tikz}}
\tikzset{x=1em, y=1.5ex}

\end{equation}

\noindent Let $\semreg{e} \in \Lang$ be the standard semantics of a regular expression $e$, defined inductively as follows:
\begin{align*}
\semreg{e+f} = \semreg{e} \cup \semreg{f} \quad
\semreg{ef} = \{vw \mid v\in \semreg{e}, w\in \semreg{f}\} \\
\semreg{1} = \{\varepsilon\} \qquad \semreg{0} = \varnothing
\qquad \semreg{a} =  \{a\} (a\in\alphabet)  \qquad
\semreg{e^*} = \bigcup_{n\in\N}\semreg{e^n} \quad
\end{align*}
where $e^{n+1} := ee^{n}$ and $e^0 := 1$.
As expected, the translation preserves the language interpretation of regular expressions in a sense that the following proposition makes precise.
\begin{prop}\label{thm:diagram-regexp}
For $e, f$ two regular expressions, $\semreg{e} = \semreg{f}$ iff $\sem{\transreg{e}} = \sem{\transreg{f}}$.
\end{prop}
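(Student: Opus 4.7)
The plan is to prove the following key lemma by structural induction on $e$, from which the proposition follows immediately: for every regular expression $e$,
\[\sem{\transreg{e}} = \{(L, K) \in \Lang\times\Lang \mid L\cdot\semreg{e} \subseteq K\},\]
where $L\cdot S := \{vw \mid v\in L,\, w\in S\}$. Given the lemma, the forward direction of the proposition is pure substitution: if $\semreg{e} = \semreg{f}$, both $\sem{\transreg{e}}$ and $\sem{\transreg{f}}$ are given by the same formula. For the converse, if $\sem{\transreg{e}} = \sem{\transreg{f}}$, then $(\{\varepsilon\}, \semreg{e}) \in \sem{\transreg{e}} = \sem{\transreg{f}}$, so $\semreg{f} = \{\varepsilon\}\cdot\semreg{f} \subseteq \semreg{e}$; by symmetry $\semreg{e} = \semreg{f}$.

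For the base cases of the induction, one reads off the semantics directly from Section~\ref{sec:syntax-semantics}: $\sem{\transreg{1}} = \{(L,K)\mid L\subseteq K\}$ matches $L\cdot\{\varepsilon\}\subseteq K$; $\sem{\transreg{0}}$ is all of $\Lang\times\Lang$, matching $L\cdot\varnothing\subseteq K$; and $\sem{\transreg{a}} = \sem{\scalar{a}} = \{(L,K)\mid La\subseteq K\}$ matches $L\cdot\{a\}\subseteq K$. For the sum, combining the functoriality of $\sem{\cdot}$ with the induction hypothesis applied to $\transreg{e}$ and $\transreg{f}$ placed in parallel between $\Bcomult$ and $\Bmult$ gives $\sem{\transreg{e+f}} = \{(L,K) \mid L\semreg{e}\subseteq K \text{ and } L\semreg{f}\subseteq K\}$, which equals $\{(L,K) \mid L(\semreg{e}\cup\semreg{f})\subseteq K\}$. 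For the product, sequential composition yields $\sem{\transreg{ef}} = \{(L,K) \mid \exists M.\; L\semreg{e}\subseteq M \text{ and } M\semreg{f}\subseteq K\}$, and the witness $M := L\semreg{e}$ collapses this to $\{(L,K) \mid L\semreg{e}\semreg{f}\subseteq K\}$.

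The star case is the technical heart of the argument and the main obstacle, but it closely parallels Example~\ref{ex:star-semantics}, which handles the special case $e = a$. Plugging the inductive hypothesis $\sem{\transreg{e}} = \{(L',K')\mid L'\semreg{e}\subseteq K'\}$ into the feedback loop in~\eqref{eq:regexpstar}, one obtains
\[\sem{\transreg{e^*}} = \{(L,K) \mid \exists M.\; L\cup M\semreg{e}\subseteq M \text{ and } M\subseteq K\}.\]
Arden's lemma says that the least solution of $L \cup X\semreg{e}\subseteq X$ in $X$ is $L\semreg{e}^{*} = L\semreg{e^*}$. Hence the existential quantifier can be eliminated: such an $M$ exists with $M \subseteq K$ if and only if $L\semreg{e^*}\subseteq K$, so $\sem{\transreg{e^*}} = \{(L,K)\mid L\semreg{e^*}\subseteq K\}$, closing the induction. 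The only real subtlety is the appeal to Arden's lemma to pass from an existentially-quantified upper bound to the closed-form one involving the Kleene star, which is already implicitly treated in the worked example.
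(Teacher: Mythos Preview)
Your proposal is correct and follows essentially the same approach as the paper: both establish the key lemma $\sem{\transreg{e}} = \{(L,K)\mid L\semreg{e}\subseteq K\}$ by structural induction, handling the star case via Arden's lemma exactly as in Example~\ref{ex:star-semantics}. Your explicit argument deriving the equivalence from the lemma (via the witness $(\{\varepsilon\},\semreg{e})$) is a detail the paper leaves implicit, but otherwise the proofs coincide.
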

\begin{proof}
To prove the statement, it is enough to show that
 $\sem{\transreg{e}} = \left\{(L,K)\mid L \semreg{e} \,  \subseteq K\right\}$.
We do so by induction on the structure of regular expressions. Note that we write ``$\poi$'' for relational composition, from left to right: $R\poi S = \{(x,z)\mid \exists y, (x,y)\in R, (y,z)\in S\}$.

The proposition holds by definition for the generators: $\sem{\transreg{a}} = \{(L,K)\mid La\subseteq K\}$. There are three inductive cases to consider. Assume that $e$ and $f$ satisfy the proposition.
\begin{itemize}
\item For the $ef$ case, $\sem{\transreg{ef}} = \sem{\transreg{e}}\poi \sem{\transreg{f}} = \{(L,K)\mid L\semreg{e}\subseteq K\}\poi \{(L,K)\mid L\semreg{f}\subseteq K\}$. Hence, by monotony of the product, we have $\sem{\transreg{ef}} =\{(L,K)\mid L\semreg{e}\semreg{f}\subseteq K\} = \{(L,K)\mid \semreg{ef}L\subseteq K\}$.
\item For the case of $e+f$ we have \begin{align*}
\sem{\transreg{e+f}} &= \left\{(L,K)\mid \exists K_1, K_2, L_1, L_2.\;\begin{array}{l} K_1, K_2\subseteq K, \\
L\subseteq L_1, L_2,
\\L_1\semreg{e}\subseteq K_1,
\\L_2\semreg{f}\subseteq K_2\end{array}\right\}\\
&= \left\{(L,K)\mid \exists L_1, L_2.\; \begin{array}{l}L\subseteq L_1, L_2,
\\L_1\semreg{e}\subseteq K,
\\L_2\semreg{f}\subseteq K\end{array}\right\}\\
& = \left\{(L,K)\mid \exists L_1, L_2.\;\begin{array}{l} L\subseteq L_1, L_2,
\\ L_1\semreg{e} \cup L_2\semreg{f} \subseteq K\end{array}\right\}\\
&= \{(L,K)\mid L\semreg{e} \cup L\semreg{f} \subseteq K\} \\
&= \{(L,K)\mid L(\semreg{e}\cup \semreg{f}) \subseteq K\}\\
& =  \{(L,K)\mid L\semreg{e+f} \subseteq K\}
\end{align*}
\item Finally, for $e^*$,
\begin{align*}
\sem{\transreg{e^*}} &=  \{(L,K)\mid \exists M,N.\; \, M,L\subseteq N, N\semreg{e} \subseteq M, N\subseteq K\}\\
& = \{(L,K)\mid \exists N.\;  N\semreg{e} \subseteq N, L\subseteq N\subseteq K\}\\
& = \{(L,K)\mid \exists N.\;  L\cup N\semreg{e} \subseteq N, L\subseteq N\subseteq K\}\\
& \myeq{$\star$} \{(L,K)\mid \exists N.\; L \semreg{e}^* \subseteq N, L\subseteq N\subseteq K\}\\
& = \{(L,K)\mid \exists N.\; L\semreg{e^*} \subseteq N, L\subseteq N\subseteq K\}\\
& = \{(L,K)\mid L\semreg{e^*} \subseteq K\}
\end{align*}
where the starred equation is a consequence of Arden's lemma~\cite{arden1961delayed}: $A^*B$ is the smallest solution (for $X$) of the language equation $B\cup AX\subseteq X$, where we write $A^*$ for the language $\bigcup_{n\geq 0}A^n$.
\qedhere
\end{itemize}
\end{proof}
\begin{rem}
Regular expressions can also be interpreted as binary relations over an arbitrary set: such an interpretation is given by a mapping of each letter to a binary relation on some set, and extended inductively to all regexes (with the sum interpreted as union, product as relational composition, $1$ as the identity, $0$ as the empty relation, and the star as the reflexive, transitive closure). We can write $\mathsf{Rel}\vDash e=f$ if every relational interpretation identifies $e$ and $f$. The statement and proof of Proposition~\ref{thm:diagram-regexp} follow a slight modification of the proof that, $\mathsf{Rel}\vDash e=f$ implies $\semreg{e} = \semreg{f}$\footnote{We thank the anonymous reviewer for pointing this out. The earliest reference we could find is~\cite[p.24]{pratt1980dynamic}}. The standard argument proceeds as follows:  define a map $\sigma : \Sigma \to 2^{\Sigma^\star\times\Sigma^\star}$ given by $\sigma(a) = \{(w, wa) \mid w\in \Sigma^\star\}$, which can be extended inductively to a map $\hat\sigma$ defined over all regexes. Then, one can show by induction that $\hat\sigma(e) = \{(w,wu) \mid u\in \semreg{e}\}$ so that, in particular, $\semreg{e} = \{w \mid (\epsilon,w) \in \hat\sigma(e)\}$. The version presented above modifies this idea by adding inclusion where necessary, to turn all the relevant relations into \emph{monotone} relations.
\end{rem}
From a diagrammatic perspective, regular expressions correspond to diagrams that enforce a restricted form of composition. They can be characterised in the syntax as the image of $\transreg{\cdot}$ or, equivalently, as those diagrams of type $\objr\to\objr$ built inductively from the following three operations
\[
 
\tikzset{x=1em, y=2.1ex}
\InputIfFileExists{sum.tikz}{}{\input{./tikz/sum.tikz}}
\tikzset{x=1em, y=1.5ex}
\qquad\quad   
\tikzset{x=1em, y=2.1ex}
\InputIfFileExists{product.tikz}{}{\input{./tikz/product.tikz}}
\tikzset{x=1em, y=1.5ex}
 \qquad\quad 
\tikzset{x=1em, y=2.1ex}
\InputIfFileExists{star.tikz}{}{\input{./tikz/star.tikz}}
\tikzset{x=1em, y=1.5ex}

\]
starting from the basic diagrams $\scalar{a}$, $
\tikzset{x=1em, y=2.1ex}
}
\tikzset{x=1em, y=1.5ex}
$, and $
\tikzset{x=1em, y=2.1ex}
}
\tikzset{x=1em, y=1.5ex}
$. In what follows, we will refer to any diagram of this form as a \emph{regex-diagram}.

\subsection{From automata to string diagrams}\label{sec:automata-diag}

Example~\eqref{ex:regexp-diagram} suggests that the string diagram $\transreg{e}$ corresponding to a regular expression $e$ looks a lot like a nondeterministic finite-state automaton (NFA) for $e$. In fact, the translation $\transreg{-}$ can be seen as the diagrammatic counterpart of Thompson's construction~\cite{thompson1968programming} that builds an NFA from a regular expression.

We can generalise the encoding of regular expressions and translate NFA directly into string diagrams, in at least two ways. The first is to encode an NFA as the diagrammatic counterpart of its transition relation. The second is to translate directly its graph representation into the diagrammatic syntax.

\paragraph{Encoding the transition relation.}\label{sec:encoding-relations} This is a simple variant of the  translation of matrices over semirings that has appeared in several places in the literature~\cite{Lack2004a,ZanasiThesis}.

Let $A$ be an NFA with set of states $Q$, initial state $q_0\in Q$, accepting states $F\subseteq Q$ and transition relation $\delta\subseteq Q\times \Sigma\times Q$.
We can represent $\delta$
as a string diagram $d$ with $|Q|$ incoming wires on the left and $|Q|$ outgoing wires on the right.
The left $j$-th port of $d$ is connected to the $i$-th port on the right through an $\scalar{a}$ whenever $(q_i,a,q_j)\in\delta$. To accommodate nondeterminism, when the same two ports are connected by several different letters of $\Sigma$, we join these using 
\tikzset{x=1em, y=2.1ex}
\InputIfFileExists{lr-copy.tikz}{}{\input{./tikz/lr-copy.tikz}}
\tikzset{x=1em, y=1.5ex}
 and 
\tikzset{x=1em, y=2.1ex}
\InputIfFileExists{lr-merge.tikz}{}{\input{./tikz/lr-merge.tikz}}
\tikzset{x=1em, y=1.5ex}
. When $(q_i, \epsilon, q_j)\in\delta$, the two ports are simply connected via a plain identity wire. If there is no tuple in $\delta$ such that $(q_i, a, q_j)\in\delta$ for any $a$, the two corresponding ports are disconnected, using $\Bcounit\; \Bunit$ if necessary.

For example, the transition relation of an NFA with three states and \[\delta = \{((q_0, a, q_1), (q_1, b, q_2), (q_2, a, q_1), (q_2, a, q_2))\}\] (disregarding the initial and accepting states for the moment) is depicted below. Conversely, given such a diagram, we can recover $\delta$ by collecting $\Sigma$-weighted paths from left to right ports.
\[d = 
\tikzset{x=1em, y=2.1ex}
\InputIfFileExists{ex-matrix-nfa.tikz}{}{\input{./tikz/ex-matrix-nfa.tikz}}
\tikzset{x=1em, y=1.5ex}
\]

To deal with the initial state, we add an additional incoming wire connected to the right port corresponding to the initial state of the automaton. Similarly, for accepting states we add an additional outgoing wire, connected to the left ports corresponding to each accepting state, via 
\tikzset{x=1em, y=2.1ex}
\InputIfFileExists{lr-merge.tikz}{}{\input{./tikz/lr-merge.tikz}}
\tikzset{x=1em, y=1.5ex}
 if there is more than one.

Finally, we trace out the $|Q|$ wires of the diagrammatic transition relation to obtain the associated string diagram. In other words, for a NFA with initial state  $q_0$, set of accepting states $F$, transition relation $\delta$, we obtain the string diagram below, where  $d$ is the diagrammatic counterpart of $\delta$ as defined above, $e$ is the injection of a single wire as the first amongst $|Q|$ wires, and $f$ discards all wires that are not associated to states in $F$ with $
\tikzset{x=1em, y=2.1ex}
}
\tikzset{x=1em, y=1.5ex}
$, and applies $
\tikzset{x=1em, y=2.1ex}
\InputIfFileExists{lr-merge.tikz}{}{\input{./tikz/lr-merge.tikz}}
\tikzset{x=1em, y=1.5ex}
$ to merge them into a single outgoing wire.
\[
\tikzset{x=1em, y=2.1ex}
\InputIfFileExists{automata-rep.tikz}{}{\input{./tikz/automata-rep.tikz}}
\tikzset{x=1em, y=1.5ex}
\]

For example, if $A$ with $\delta$ as above has initial state $q_0$ and set of accepting states $\{q_2\}$, we get the diagram below left; if instead, all states are accepting, we obtain the diagram below right:
\begin{equation*}\label{eq:nfa-reps}

\tikzset{x=1em, y=2.1ex}
\InputIfFileExists{ex-rep-star.tikz}{}{\input{./tikz/ex-rep-star.tikz}}
\tikzset{x=1em, y=1.5ex}
 \qquad\quad  
\tikzset{x=1em, y=2.1ex}
\InputIfFileExists{ex-rep-all-accept.tikz}{}{\input{./tikz/ex-rep-all-accept.tikz}}
\tikzset{x=1em, y=1.5ex}

\end{equation*}
The correctness of this simple translation is justified by a semantic correspondence between the language recognised by a given NFA $A$ and the denotation of the corresponding string diagram.
\begin{prop}\label{thm:nfa-to-diag}
Given an NFA $A$ which recognises the language $L$, let $c_A$ be its associated string diagram, constructed as above. Then $\sem{c_A} = \left\{(K,K')\mid LK \subseteq K'\right\}$.
\end{prop}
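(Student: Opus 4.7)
My approach is compositional, following the template of Example~\ref{ex:star-semantics}: compute the semantics of each ingredient of $c_A$, substitute into the outer trace, and eliminate the existential quantifiers using a matrix-valued form of Arden's lemma.

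First, I would compute the semantics of the three blocks that build $c_A$. Using functoriality and the interpretations of the generators $\scalar{a}$, $\Bcomult$, $\Bmult$, $\Bunit$, $\Bcounit$, one shows
\[
  \sem{d} = \bigl\{\bigl((K_j)_j,(K'_i)_i\bigr) \;\big|\; \forall i,\, \textstyle\bigcup_{(q_i,a,q_j)\in\delta} K_j a \subseteq K'_i \bigr\},
\]
i.e., the $i$-th right port is forced to collect the contribution $K_j a$ from every transition $(q_i,a,q_j)$. The injection $e$ routes the input $K$ to the initial-state port (enforcing $K \subseteq M_{i_0}$), while freely existentially quantifying the remaining ports through $\Bunit$; the projection $f$ extracts each accepting port (enforcing $M_i \subseteq K'$ for $i\in F$, via $\Bmult$) and discards the rest via $\Bcounit$.

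Closing up with the cup/cap pair that implements the trace---which, via axioms (A1)--(A2), identifies the $k$-th left port with the $k$-th right port---and composing, one arrives at
\[
  \sem{c_A} = \Bigl\{(K,K') \;\Big|\; \exists (M_i)_i:\; K \subseteq M_{i_0},\; \forall i,\, \textstyle\bigcup_{(q_i,a,q_j)\in\delta} M_j a \subseteq M_i,\; \forall i\in F,\, M_i \subseteq K'\Bigr\}.
\]
Since all constraints are monotone in the $M_i$'s, the existential is satisfied precisely when the \emph{least} solution of the inclusions on $(M_i)_i$ is contained in $K'$ at every accepting port. By a matrix form of Arden's lemma, this least solution is $M_i = K \cdot W_i$, where $W_i = \{w \in \Sigma^\star : A \text{ has a run from } q_0 \text{ to } q_i \text{ on } w\}$; since $L = \bigcup_{i\in F} W_i$ is precisely the language recognised by $A$, the family of accepting constraints collapses to the single inclusion relating $K$, $L$ and $K'$ claimed by the proposition.

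The main obstacle is the matrix-valued Arden's lemma invoked in the last step: that the coupled inclusion system on $(M_i)_i$ admits a least solution, and that this least solution coincides with the language of the NFA. I would prove it directly by induction on word length, matching runs of $A$ of length $n$ against derivations of $w \in M_i$ from the inclusions, with a routine minimality argument in the reverse direction. Throughout, the orientation has to be tracked carefully, since the asymmetric semantics $\sem{\scalar{a}} = \{(L,K) \mid La \subseteq K\}$ of the letter generators dictates the side on which $L$ is concatenated.
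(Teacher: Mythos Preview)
Your proposal is correct and takes essentially the same approach as the paper. The paper likewise computes the semantics of the traced transition block to obtain $\sem{d^\star}=\{(\mathbf{K},\mathbf{K}')\mid \mathbf{D}^\star\mathbf{K}\subseteq\mathbf{K}'\}$ via a matrix form of Arden's lemma, and then pre- and post-composes with $\sem{e}$ and $\sem{f}$ to isolate the initial and accepting states; the only difference in packaging is that, where you propose a direct induction on word length matching runs of $A$ against derivations, the paper simply cites Kozen's result that square language matrices form a Kleene algebra, from which the matrix Arden lemma is immediate.
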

\begin{proof}
This is the diagrammatic counterpart of the representation of automata as matrices of regular expressions given in~\cite[Definition 12]{kozen1994completeness}.

We write $\mathbf{K}$ for a vector of languages $(K_1,\dots, K_{|Q|})$ and $\mathbf{A}$ for a square matrix of languages; let $\mathbf{A}\mathbf{K}$ be the language vector resulting from applying $\mathbf{A}$ to $\mathbf{K}$ in the obvious way (note that we use standard matrix multiplication order, which is the opposite of the diagrammatic order). By~\cite[Theorem 11]{kozen1994completeness}, square language matrices form a Kleene algebra, with the composition as product, component-wise union as sum and the star defined as in~\cite[Lemma 10]{kozen1994completeness}. We also write write $\mathbf{K}\subseteq \mathbf{K}'$ if the inclusions all hold component-wise. Furthermore, Arden's lemma holds in this more general setting: the least solution of the language-matrix equation $\mathbf{B}\cup \mathbf{A}\mathbf{X}\subseteq \mathbf{X}$ is $\mathbf{X} = \mathbf{A}^*\mathbf{B}$. This is another consequence of the fact that matrices of languages also form a Kleene algebra~\cite[Theorem 11]{kozen1994completeness}.

Now, for a given automaton $A$ we construct the diagram below as explained above:
\[c_A \; = \quad 
\tikzset{x=1em, y=2.1ex}
\InputIfFileExists{automata-rep.tikz}{}{\input{./tikz/automata-rep.tikz}}
\tikzset{x=1em, y=1.5ex}
\]
with $d$ the diagram encoding the transition relation of $A$, $e_0$ the diagram encoding its initial state, and $f$ the diagram encoding its set of final states. Let $\sem{d} = \mathbf{D}$ be the language matrix obtained from $A$ by letting $\mathbf{D}_{ij} = \{a\}$ if $(q_i, a, q_j)$ is in the transition relation of $A$.
We proceed as in Example~\ref{ex:star-semantics}. First, we have
\[\sem{
\tikzset{x=1em, y=2.1ex}
\InputIfFileExists{d-star-Q-states.tikz}{}{\input{./tikz/d-star-Q-states.tikz}}
\tikzset{x=1em, y=1.5ex}
}
\begin{array}{l}
=  \{(\mathbf{K},\mathbf{K}')\mid \exists \mathbf{M},\mathbf{N}, \; \mathbf{M},\mathbf{K}\subseteq \mathbf{N},\; \mathbf{D}\mathbf{N}\subseteq \mathbf{M},\; \mathbf{N}\subseteq \mathbf{K}'\}\\
=  \{(\mathbf{K},\mathbf{K}')\mid \exists \mathbf{N}, \; \mathbf{D}\mathbf{N}\subseteq \mathbf{N},\; \mathbf{K}\subseteq \mathbf{N}\subseteq \mathbf{K}'\}\\
= \{(\mathbf{K},\mathbf{K}')\mid \exists \mathbf{N}, \; \mathbf{K}\cup \mathbf{D}\mathbf{N}\subseteq \mathbf{N}, \; \mathbf{N}\subseteq \mathbf{K}'\}\\
\myeq{$\star$} \{(\mathbf{K},\mathbf{K}')\mid \exists \mathbf{N}, \;  \mathbf{D}^*\mathbf{K}\subseteq \mathbf{N}, \mathbf{N}\subseteq \mathbf{K}'\}\\
= \{(\mathbf{K},\mathbf{K}')\mid  \mathbf{D}^*\mathbf{K}\subseteq \mathbf{K}'\}
\end{array}
\]
where the starred step holds by the matrix-form of Arden's lemma. Then, $\sem{e}$ and $\sem{f}$ pick out the component languages of $\mathbf{D}^*$ that correspond to the initial state of $A$ and each final state, and takes their union. Thus, we get
\[\sem{c_A} = \sem{
\tikzset{x=1em, y=2.1ex}
\InputIfFileExists{automata-rep.tikz}{}{\input{./tikz/automata-rep.tikz}}
\tikzset{x=1em, y=1.5ex}
}\begin{array}{l}
=\sem{e}\poi \{(\mathbf{K},\mathbf{K}')\mid  \mathbf{D}^*\mathbf{K}\subseteq \mathbf{K}'\}\poi \sem{f}\\ = \left\{(K,K')\mid LK \subseteq K'\right\}
\end{array} \]
where $L$ is the language accepted by the original automaton.
\end{proof}

\paragraph{From graphs to string diagrams.} The second way of translating automata into string diagrams mimics more directly the usual representation of automata as graphs. The idea (which should be sufficiently intuitive to not need to be made formal here) is, for each state, to use 
\tikzset{x=1em, y=2.1ex}
\InputIfFileExists{lr-merge.tikz}{}{\input{./tikz/lr-merge.tikz}}
\tikzset{x=1em, y=1.5ex}
 to represent incoming edges, and  
\tikzset{x=1em, y=2.1ex}
\InputIfFileExists{lr-copy.tikz}{}{\input{./tikz/lr-copy.tikz}}
\tikzset{x=1em, y=1.5ex}
 to represent outgoing edges. As above, labels $a \in A$ will be modelled using $\scalar{a}$. For example, the graph and the associated string diagram corresponding with the NFA above are
\begin{equation}\label{eq:ex-automaton-diagram}

\tikzset{x=1em, y=2.1ex}
\InputIfFileExists{ex-automaton-graph.tikz}{}{\input{./tikz/ex-automaton-graph.tikz}}
\tikzset{x=1em, y=1.5ex}
 \quad \mapsto \quad 
\tikzset{x=1em, y=2.1ex}
\InputIfFileExists{ex-automaton-diagram.tikz}{}{\input{./tikz/ex-automaton-diagram.tikz}}
\tikzset{x=1em, y=1.5ex}

\end{equation}
Note that the initial state (which we indicate with an arrow pointing down and into a state) of the automaton corresponds to the left interface of the string diagram, and the accepting state (which we indicate with an arrow pointing down and out of a state) to the right interface of the same diagram. As before, when there are multiple accepting states, they all connect to a single right interface, via $
\tikzset{x=1em, y=2.1ex}
\InputIfFileExists{lr-merge.tikz}{}{\input{./tikz/lr-merge.tikz}}
\tikzset{x=1em, y=1.5ex}
$. For example, if we make all states accepting in the automaton above, we get the following diagrammatic representation:
\begin{align*}

\tikzset{x=1em, y=2.1ex}
\InputIfFileExists{ex-automaton-graph-multiple-accept.tikz}{}{\input{./tikz/ex-automaton-graph-multiple-accept.tikz}}
\tikzset{x=1em, y=1.5ex}
\quad & \mapsto \quad 
\tikzset{x=1em, y=2.1ex}
\InputIfFileExists{ex-automaton-diagram-multiple-accept.tikz}{}{\input{./tikz/ex-automaton-diagram-multiple-accept.tikz}}
\tikzset{x=1em, y=1.5ex}

\end{align*}

\subsection{From string diagrams to automata}%
\label{sec:diagram-to-automata}

The previous discussion shows how NFAs can be seen as string diagrams of type $\objr\to \objr$. The converse is also true: we now show how to extract an automaton from any automaton-diagram $d \colon \objr\to \objr$, such that the language the automaton recognises matches the semantics of $d$.

In order to phrase this correspondence formally, we need to introduce some terminology. We call \emph{left-to-right} those automata-diagrams whose domain and co-domain contain only $\objr$, i.e.\ their type is of the form $\objr^{n} \to \objr^{m}$. The idea is that, in any such string diagram, the $n$ left interfaces act as \emph{inputs} of the computation, and the $m$ right interfaces act as \emph{outputs}. For instance,~\eqref{eq:ex-automaton-diagram} is a left-to-right diagram $\objr \to \objr$.
We call \emph{block} of a certain subset of generators a diagram composed only of these generators (using both $\poi$ and $\oplus$), possibly including some permutation of the wires.
\begin{defi}\label{def:matrix-diagram}
A \emph{matrix-diagram}
is a left-to-right diagram
that factors as a composition of a block of $
\tikzset{x=1em, y=2.1ex}
\InputIfFileExists{lr-copy.tikz}{}{\input{./tikz/lr-copy.tikz}}
\tikzset{x=1em, y=1.5ex}
, 
\tikzset{x=1em, y=2.1ex}
}
\tikzset{x=1em, y=1.5ex}
$, followed by a block of $\scalar{a}$ for $a\in \Sigma$ and finally, a block of $
\tikzset{x=1em, y=2.1ex}
\InputIfFileExists{lr-merge.tikz}{}{\input{./tikz/lr-merge.tikz}}
\tikzset{x=1em, y=1.5ex}
, 
\tikzset{x=1em, y=2.1ex}
}
\tikzset{x=1em, y=1.5ex}
$, such that any path from a left port to a right port passes through \emph{at most one} $\scalar{a}$.
\end{defi}
\noindent To each matrix-diagram $d$ we can associate a unique transition relation $\delta$ by gathering paths from each input to each output: $(q_i, a, q_j)\in\delta$ if there is $\scalar{a}$ joining the $i$-th input to the $j$-th output.
\noindent
A transition relation is \emph{$\epsilon$-free} if it does not contain the empty word. It is \emph{deterministic} if it is $\epsilon$-free and, for each $i$ and each $a\in\Sigma$ there is at most one $j$ such that $(q_i, a, q_j)\in\delta$. We will apply these terms to matrix-diagrams and the associated transition relation interchangeably. The example of Section~\ref{sec:automata-diag} below, with the three blocks highlighted, is a matrix-diagram.
\begin{center}

\tikzset{x=1em, y=2.1ex}
\InputIfFileExists{ex-matrix-nfa-blocks.tikz}{}{\input{./tikz/ex-matrix-nfa-blocks.tikz}}
\tikzset{x=1em, y=1.5ex}

\end{center}
It is $\epsilon$-free but not deterministic since there are two $a$-labelled transitions starting from the third input.

We also call \emph{relation-diagram} a matrix-diagram that contain no $\scalar{a}$. Intuitively, in the absence of the $\scalar{a}$ generators, the corresponding theory is simply that of Boolean matrices, \emph{i.e.} relations. We now introduce representations of (automata-)diagrams, the diagrammatic counterpart of Kozen's automata in matrix form (written $(u,M,v)$, with semantics $uM^*v$ in~\cite[Definition 12]{kozen1994completeness}).

\begin{defi}[Representation]\label{def:representation}
For a diagram\footnote{Representations could also be defined for arbitrary left-to-right diagrams $\objr^m\to \objr^n$, but we will only need them to connect diagrams and automata, so it is sufficient to consider the $n=m=1$ case for our purpose.} $c\from \objr\to \objr$, a \emph{representation} is a triple $(e,d,f)$ of an $\epsilon$-free matrix-diagram $d\from \objr^{l}\to \objr^{l}$ representing the transition dynamics, and two relation-diagrams $e\from \objr\to \objr^{l}$, and $f\from \objr^{l}\to \objr$ representing the initial and the final states respectively, such that
\[\dbox{c}  \; = \; 
\tikzset{x=1em, y=2.1ex}
\InputIfFileExists{automata-rep-star.tikz}{}{\input{./tikz/automata-rep-star.tikz}}
\tikzset{x=1em, y=1.5ex}
 \quad \text{ where } \dbox{d^*} := 
\tikzset{x=1em, y=2.1ex}
\InputIfFileExists{d-star.tikz}{}{\input{./tikz/d-star.tikz}}
\tikzset{x=1em, y=1.5ex}
\]
It is a \emph{deterministic representation} if moreover $d$ is a deterministic matrix-diagram and there is only one right-port connected to the only left-port of $e$ (\emph{i.e.}, there is exactly one initial state).
\end{defi}
For example, given the string diagram below on the left, we can use the axioms of KDA to rewrite it to an equivalent diagram from which a representation can easily be read---the highlighted matrix-diagram corresponds to the same transition matrix $d$ as in the example above:
\begin{equation}

\tikzset{x=1em, y=2.1ex}
\InputIfFileExists{ex-automaton-diagram.tikz}{}{\input{./tikz/ex-automaton-diagram.tikz}}
\tikzset{x=1em, y=1.5ex}
\quad\eqKa
\tikzset{x=1em, y=2.1ex}
\InputIfFileExists{ex-rep-star.tikz}{}{\input{./tikz/ex-rep-star.tikz}}
\tikzset{x=1em, y=1.5ex}

\end{equation}
From a diagram $c : \objr\to \objr$  with representation %
$(e,d,f)$, we can construct an NFA as follows:
\begin{itemize}
\item its state set is $Q = \{q_1, \dots, q_l\}$, i.e., there is one state for each wire of $d\from \objr^l\to\objr^l$;
\item its transition relation built from $d$ as described above;
\item its initial state is the only non-zero coefficient of $e\from \objr\to \objr^l$, i.e., the only wire in the codomain of $e$ connected to the single wire in the domain;
\item its final states $F$ are those $q_j$ for which the $j$-th coefficient of $f\from \objr^l\to \objr$ is non-zero, i.e., the wires of the domain of $f$ connected to its single codomain wire.
\end{itemize}
The construction above is the inverse of that of Section~\ref{sec:automata-diag}. The link between the constructed automaton and the original string diagram $c$ is summarised in the following statement, which is a straightforward corollary of Proposition~\ref{thm:nfa-to-diag}.
\begin{prop}\label{thm:diagram-nfa}
For a diagram $c\from \objr\to \objr$ with a representation $\hat{c} = (e,d,f)$, let $A_{\hat{c}}$ be the associated automaton, constructed as above. Then $\hat{L}$ is the language recognised by $A_{\hat{c}}$ iff $\sem{c} = \sem{e;d^*;f} = \left\{(K,K')\mid \hat{L}K \subseteq K'\right\}$.
\end{prop}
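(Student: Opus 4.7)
The statement is essentially a corollary of Proposition~\ref{thm:nfa-to-diag}, obtained by unpacking the definitions and observing that the two constructions fit together.

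The plan is to show that the construction of $A_{\hat c}$ from the representation $\hat c = (e,d,f)$ is the inverse of the construction that turns an NFA $A$ into its associated diagram $c_A$ in Section~\ref{sec:automata-diag}. Concretely, I would check item by item that: (i) the state set $\{q_1,\dots,q_l\}$ matches the $l$ internal wires of $d$ that are traced out; (ii) the transition relation obtained by reading off $\Sigma$-weighted left-to-right paths in the matrix-diagram $d$ coincides, by Definition~\ref{def:matrix-diagram}, with the $\delta$ used when encoding an NFA as a matrix-diagram in Section~\ref{sec:encoding-relations}; (iii) the unique right port of $e$ that carries the single left wire picks out exactly the initial state; and (iv) the right wires of $f$ that are merged into its single outgoing wire are exactly the final states in $F$. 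Together these show that applying the encoding of Section~\ref{sec:automata-diag} to $A_{\hat c}$ produces precisely the diagram $e\poi d^*\poi f$ up to the canonical reshuffling of (co)monoid structure.

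With this identification in hand, I would apply Proposition~\ref{thm:nfa-to-diag} to the NFA $A_{\hat c}$, which gives
\[
\sem{e\poi d^*\poi f} \;=\; \sem{c_{A_{\hat c}}} \;=\; \{(K,K')\mid \hat L K \subseteq K'\},
\]
where $\hat L$ is the language recognised by $A_{\hat c}$. Since $c$ admits $(e,d,f)$ as a representation, Definition~\ref{def:representation} yields $c = e\poi d^*\poi f$ in KDA, so by soundness $\sem{c} = \sem{e\poi d^*\poi f}$. The `only if' direction follows immediately; the `if' direction uses the fact that the relation $\{(K,K')\mid L K\subseteq K'\}$ determines $L$ uniquely (e.g.\ by taking $K=\{\varepsilon\}$ and $K'=L$), so two diagrams with the same denotation must correspond to NFAs recognising the same language.

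The main obstacle is purely bookkeeping: making rigorous the claim that the NFA-to-diagram and diagram-to-NFA constructions are mutually inverse. This requires being careful about the ordering of wires (since the representation fixes an enumeration of the states via the positions of wires in $d : \objr^l \to \objr^l$), about the use of $\Bmult$ to merge multiple final states into a single right interface, and about the fact that the construction of $A_{\hat c}$ only uses the \emph{combinatorial} data of $(e,d,f)$ (which $\scalar{a}$ connects which pair of ports, and which ports are connected by identity in $e$ and $f$), so it is well-defined once a representation is chosen. Once these correspondences are spelled out, the proposition is an immediate consequence of Proposition~\ref{thm:nfa-to-diag}.
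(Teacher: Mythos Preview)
Your proposal is correct and matches the paper's approach exactly: the paper does not give an explicit proof but simply remarks that the construction of $A_{\hat c}$ is the inverse of that of Section~\ref{sec:automata-diag} and that the statement is ``a straightforward corollary of Proposition~\ref{thm:nfa-to-diag}''. Your write-up just fleshes out this corollary, including the bookkeeping on wires/states and the observation that the relation $\{(K,K')\mid LK\subseteq K'\}$ determines $L$ uniquely, which handles the biconditional.
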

The next proposition is crucial: it states that a representation can be extracted from any diagram $\objr\to\objr$.
\begin{prop}\label{thm:traceform}
Any automaton-diagram $\objr\to \objr$ has a representation.
\end{prop}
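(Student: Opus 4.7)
The plan is to prove a slightly generalised statement by structural induction: \emph{every} left-to-right automaton-diagram $c \from \objr^n \to \objr^m$ admits a generalised representation $(e,d,f)$ with $e \from \objr^n \to \objr^l$ and $f \from \objr^l \to \objr^m$ relation-diagrams and $d \from \objr^l \to \objr^l$ an $\epsilon$-free matrix-diagram, such that $c = e \poi d^* \poi f$. The proposition is then the $n = m = 1$ case. I proceed by induction on the number of generators appearing in $c$.

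For the base case, each generator of $\Aut$ admits a representation with a trivial matrix-diagram: for $\Bcomult, \Bcounit, \Bmult, \Bunit$ and for $\tikzfig{cap-down}, \tikzfig{cup-down}$, one takes $l=0$, so $d$ is empty and $e,f$ are the generator itself suitably padded; for $\scalar{a}$ one takes $l=1$ and $d = \scalar{a}$, with $e$ and $f$ identity wires. For the inductive step I consider the outermost operation. Parallel composition is easy: given representations $(e_i,d_i,f_i)$ for $c_i$, the triple $(e_1 \oplus e_2, d_1 \oplus d_2, f_1 \oplus f_2)$ represents $c_1 \oplus c_2$, where $d_1 \oplus d_2$ is still $\epsilon$-free because pass-through wires cannot cross the parallel divide.

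Sequential composition is the more delicate case. Given $(e_1,d_1,f_1)$ and $(e_2,d_2,f_2)$ representing $c_1$ and $c_2$, the obvious candidate matrix-diagram has shape $d_1 \oplus d_2$ together with a block $f_1 \poi e_2$ connecting the outputs of $d_1$ to the inputs of $d_2$. Since $f_1 \poi e_2$ is a relation-diagram (contains no $\scalar{a}$), it introduces $\epsilon$-transitions and so must be eliminated. To do this, I use the E block axioms, which let each $\scalar{a}$ commute past $\Bcomult, \Bcounit, \Bmult, \Bunit$ (E6)--(E7), (E10)--(E11), to push the bridge $f_1 \poi e_2$ across the final layer of $\scalar{a}$-transitions of $d_1$. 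The result is a new block of $\scalar{a}$-transitions going directly from the old sources in $d_1$ to the old targets in $d_2$, restoring $\epsilon$-freeness. The bimonoid axioms (B1)--(B11) put the combined relation-diagram blocks back into the span/cospan shape required by Definition~\ref{def:matrix-diagram}.

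Finally, for the closure of the representation class under the compact closed structure, pre- or post-composing with $\tikzfig{cap-down}$ or $\tikzfig{cup-down}$ is handled by the yanking axioms (A1)--(A2), which let us straighten or bend any wire, and (B12), which lets us discard trivial feedback loops that may appear after folding. Geometrically, bending an interface wire through the $d^*$ construction produces a representation on an enlarged state space, and the isolated-loop axiom (A3) together with (B12) cleans up the result. The main obstacle throughout will be maintaining the $\epsilon$-freeness condition under sequential composition; the entire argument can be viewed as a diagrammatic reformulation of the classical $\epsilon$-elimination procedure for NFAs, with the E-block distributivities playing the role of the semantic step that replaces each chain $\scalar{a}$-then-$\epsilon$-path by a direct $\scalar{a}$-transition.
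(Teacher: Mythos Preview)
Your induction is not well-founded as stated. You generalise to diagrams $c \from \objr^n \to \objr^m$ and induct on the number of generators, but automaton-diagrams of this type may be built from $\tikzfig{cup-down}$ and $\tikzfig{cap-down}$, whose types involve $\objl$. Hence the immediate subterms of such a $c$ need not themselves be left-to-right, and your inductive hypothesis gives you nothing for them. The base case already exposes the tension: you claim cup and cap receive a representation with $l=0$ and ``$e,f$ the generator itself suitably padded'', but neither generator has type $\objr^n \to \objr^m$, and neither is a relation-diagram, so no such $e,f$ exist. The closing paragraph attempts to repair this by treating cup/cap as a separate closure step, but the sentence ``bending an interface wire through the $d^*$ construction produces a representation on an enlarged state space, and (A3) together with (B12) cleans up the result'' is precisely where the real content lives, and it is not carried out. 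Concretely: tracing a wire through $e\poi d^*\poi f$ inserts a new block $f_1\poi e_1$ of $\epsilon$-transitions into the transition matrix; eliminating these requires the same argument you sketched for sequential composition, not merely the loop-removal axioms.

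The paper avoids all of this by passing through a weaker intermediate invariant. It first shows (Lemma~\ref{lem:traceform}) that every automaton-diagram $\objr^n\to\objr^m$ equals the trace of a \emph{relation}-diagram $r$ followed by a single block of $\scalar{a}$'s. This trace canonical form \emph{is} closed under $\poi$ and $\oplus$---the inductive cases are one-liners precisely because no $\epsilon$-freeness condition is imposed---and the cups/caps appearing in the induction are absorbed into the surrounding trace. Only once this global form is reached does the paper decompose $r$ into its four submatrices (Lemma~\ref{lem:submatrices}), rearrange using (co-cpy) and (A1)--(A2), and perform a single $\epsilon$-elimination pass at the end by pushing the $\scalar{a}$'s through via (E10). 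Your $\epsilon$-elimination idea is correct and coincides with this final step; the difference is that the paper defers it until a form has been reached where it applies uniformly, rather than attempting to maintain $\epsilon$-freeness through every composition and through the compact-closed structure simultaneously.
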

\noindent We will need to prove a few preliminary results before tackling the proof of Proposition~\ref{thm:traceform}. The following lemma will also be needed in the determinisation procedure of Section~\ref{sec:determinisation}.
The next theorem establishes completeness for a restricted fragment of our language, corresponding to matrices of finite languages. More precisely, every diagram formed only of the generators $
\tikzset{x=1em, y=2.1ex}
\InputIfFileExists{lr-copy.tikz}{}{\input{./tikz/lr-copy.tikz}}
\tikzset{x=1em, y=1.5ex}
, 
\tikzset{x=1em, y=2.1ex}
}
\tikzset{x=1em, y=1.5ex}
$, $
\tikzset{x=1em, y=2.1ex}
\InputIfFileExists{lr-merge.tikz}{}{\input{./tikz/lr-merge.tikz}}
\tikzset{x=1em, y=1.5ex}
, 
\tikzset{x=1em, y=2.1ex}
}
\tikzset{x=1em, y=1.5ex}
$, $\scalar{a}$ is interpreted via $\sem{\cdot}$ as a matrix with coefficients in $\mathbb{B}(\Sigma^*)$, the semiring of \emph{finite} sets of words over the alphabet $\Sigma$.
\begin{thm}[Matrix completeness]\label{thm:matrix-completeness}
 For any two diagrams $c,d$ formed only of the generators $
\tikzset{x=1em, y=2.1ex}
\InputIfFileExists{lr-copy.tikz}{}{\input{./tikz/lr-copy.tikz}}
\tikzset{x=1em, y=1.5ex}
, 
\tikzset{x=1em, y=2.1ex}
}
\tikzset{x=1em, y=1.5ex}
$, $
\tikzset{x=1em, y=2.1ex}
\InputIfFileExists{lr-merge.tikz}{}{\input{./tikz/lr-merge.tikz}}
\tikzset{x=1em, y=1.5ex}
, 
\tikzset{x=1em, y=2.1ex}
}
\tikzset{x=1em, y=1.5ex}
$, $\scalar{a}$, we have $\sem{c} = \sem{d}$ iff $c=d$.
\end{thm}
\begin{proof}
This result is particular case of a standard fact, that can be found for matrices over a ring in~\cite[Chapter 3]{ZanasiThesis}. However, the relevant proof of~\cite[Proposition 3.9]{ZanasiThesis} does not make use of additive inverses and generalises without any difficulty to arbitrary semirings. The required axioms are (B1)-(B11) and (E6-11).
\end{proof}

\noindent The equalities (E6-7) and (E10-11) in Fig.~\ref{fig:axioms}
can be extended to any matrix-diagram.
\begin{lem}[Matrix distributivity]\label{lem:matrix-copy}
Any matrix-diagram $d\from \objr^m\to \objr^n$ (cf. Definition~\ref{def:matrix-diagram}) satisfies
\begin{align*}

\tikzset{x=1em, y=2.1ex}
\InputIfFileExists{global-copy.tikz}{}{\input{./tikz/global-copy.tikz}}
\tikzset{x=1em, y=1.5ex}
\quad \myeq{cpy} \quad 
\tikzset{x=1em, y=2.1ex}
\InputIfFileExists{global-copy-1.tikz}{}{\input{./tikz/global-copy-1.tikz}}
\tikzset{x=1em, y=1.5ex}
 \qquad\qquad 
\tikzset{x=1em, y=2.1ex}
\InputIfFileExists{global-delete.tikz}{}{\input{./tikz/global-delete.tikz}}
\tikzset{x=1em, y=1.5ex}
\quad \myeq{del} \quad 
\tikzset{x=1em, y=2.1ex}
\InputIfFileExists{global-delete-1.tikz}{}{\input{./tikz/global-delete-1.tikz}}
\tikzset{x=1em, y=1.5ex}
\\

\tikzset{x=1em, y=2.1ex}
\InputIfFileExists{global-merge.tikz}{}{\input{./tikz/global-merge.tikz}}
\tikzset{x=1em, y=1.5ex}
\quad \myeq{co-cpy} \quad 
\tikzset{x=1em, y=2.1ex}
\InputIfFileExists{global-merge-1.tikz}{}{\input{./tikz/global-merge-1.tikz}}
\tikzset{x=1em, y=1.5ex}
\qquad\qquad  
\tikzset{x=1em, y=2.1ex}
\begin{tikzpicture}
	\begin{pgfonlayer}{nodelayer}
		\node [style=none] (16) at (-0.75, 0.5) {\scriptsize $n$};
		\node [style=none] (17) at (-1.25, 0) {};
		\node [style=black] (18) at (-2.25, 0) {};
		\node [style=none] (19) at (-0.5, 0) {};
	\end{pgfonlayer}
	\begin{pgfonlayer}{edgelayer}
		\draw (17.center) to (19.center);
		\draw [->] (18) to (17.center);
	\end{pgfonlayer}
\end{tikzpicture}
}
\tikzset{x=1em, y=1.5ex}
\quad \myeq{co-del}\quad 
\tikzset{x=1em, y=2.1ex}
\InputIfFileExists{global-co-delete-1.tikz}{}{\input{./tikz/global-co-delete-1.tikz}}
\tikzset{x=1em, y=1.5ex}

\end{align*}
\end{lem}
\begin{proof}
This lemma can be easily proved by induction, using axioms (B1)-(B11) and local distributivity (equalities (E6-7) and (E10-11)) as base case. But it is also an immediate consequence of Theorem~\ref{thm:matrix-completeness} and of the equivalent semantic statement for matrices with coefficients in $\mathbb{B}(\Sigma^*)$.
\end{proof}
\noindent Given a matrix-diagram $d\from \objr^{l+m}\to \objr^{p+n}$, we will write $d_{ij}$, with $i=l,n$ and $j=p,m$, to refer to the diagram obtained from discarding all but the left $i$-ports with $\Bunit$ and all but the right $j$-ports with $\Bcounit$. For example,
\[d_{m,p} = \quad 
\tikzset{x=1em, y=2.1ex}
\InputIfFileExists{d-projection.tikz}{}{\input{./tikz/d-projection.tikz}}
\tikzset{x=1em, y=1.5ex}
\]
\noindent The following lemma states that this operation selects the corresponding submatrices of (the matrix corresponding to) $d$.
\begin{lem}\label{lem:submatrices}
For any matrix-diagram $d\from \objr^{l+n}\to \objr^{p+m}$, with $d_{ij}$ defined as above.
, we have
\[d = \quad 
\tikzset{x=1em, y=2.1ex}
\InputIfFileExists{biproduct-decomposition.tikz}{}{\input{./tikz/biproduct-decomposition.tikz}}
\tikzset{x=1em, y=1.5ex}
\]
\end{lem}
\begin{proof}
This could be proven from Lemma~\ref{lem:matrix-copy} but we can appeal once again to the corresponding fact for matrices over a semiring and to the completeness of our theory for matrices over  $\mathbb{B}(\Sigma^*)$ (Theorem~\ref{thm:matrix-completeness}) to deduce it immediately.
\end{proof}
\noindent Note that if we discard all but one port on the left and one port on the right, we pick out a dimension-one submatrix, i.e.\ a coefficient, of the corresponding matrix. Then, Lemma~\ref{lem:submatrices} is only saying that matrix-diagrams are fully characterised by their coefficients.

The following lemma establishes a useful form for diagrams.
\begin{lem}[Trace canonical form]\label{lem:traceform}
For any automaton-diagram $c\from \objr^n\to\objr^m$, we can always find a relation-diagram $r\from \objr^{l+n}\to\objr^{l+m}$ such that
\begin{equation*}
\diagbox{c}{n}{m} \quad = \quad \traceaction{r}{n}{m}{l}{x}
\end{equation*}
where $\scalar{x}^l$ denotes a vertical composite of $l$-many $\scalar{a}$ generators.
\end{lem}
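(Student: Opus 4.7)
The plan is to proceed by structural induction on the automaton-diagram $c$, using the compact-closed structure (axioms (A1)--(A3)) to freely bend and rearrange wires, so that every occurrence of $\scalar{a}$ can be pulled out of the diagram into a single parallel block $\scalar{x}^l$ closed by a feedback loop, leaving only relation-generators in $r$.

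\emph{Base cases.} The generators $\Bcomult,\Bcounit,\Bmult,\Bunit,\tikzfig{cap-down},\tikzfig{cup-down}$, as well as identities and symmetries, are themselves relation-diagrams, so they are of the desired form with $l=0$ and $r$ equal to the generator. For $\scalar{a}$, take $l = 1$ and let $r : \objr^{1+1} \to \objr^{1+1}$ be the symmetry; routing the single $\scalar{a}$ through the feedback loop and straightening the resulting cups and caps via (A1)--(A2) recovers $\scalar{a}$.

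\emph{Inductive cases.} Assume $c_1 : \objr^{n_1} \to \objr^{m_1}$ and $c_2 : \objr^{n_2} \to \objr^{m_2}$ admit trace canonical forms with relation-diagrams $r_1 : \objr^{l_1+n_1} \to \objr^{l_1+m_1}$ and $r_2 : \objr^{l_2+n_2} \to \objr^{l_2+m_2}$. For the monoidal product $c_1 \oplus c_2$, take $l = l_1 + l_2$ and assemble $r$ as $r_1 \oplus r_2$ pre- and post-composed with wire permutations that group the $l_1 + l_2$ loop-wires at the top and the remaining $n_i$ (respectively $m_i$) wires at the bottom; since all these permutations are relation-diagrams, so is $r$. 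For the sequential composition $c_1 \poi c_2$ (when $m_1 = n_2$), take $l = l_1 + l_2$ and build $r$ from the analogous composite of $r_1$ with $r_2$, again using symmetries to reshuffle the wires so that the two loop-blocks sit on top in parallel and the sequential ports of $c_1,c_2$ connect in the middle. In both cases, the trace equality is justified by moving cups/caps through the diagram using (A1)--(A2) and eliminating stray loops via (A3).

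\emph{Main obstacle.} The delicate point is to show that the local combinatorial manipulations really yield the global shape of the trace canonical form pictured via the \texttt{traceform} macro. The key observation is that, because compact closure holds on the nose in KDA, the trace is already internally definable and invariant under reshuffling of loop-wires by symmetries; this means we never need to commute a $\scalar{a}$ past a connective generator (an operation that would require the equalities in the E-block and would enlarge $r$). Instead, each $\scalar{a}$ is harvested at the leaves of the induction and only the purely relational part is re-plumbed, guaranteeing that the final $r$ contains no $\scalar{a}$ and is therefore a genuine relation-diagram.
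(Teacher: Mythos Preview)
Your induction is the same one the paper uses, and your base cases and monoidal-product case are fine. The gap is in the sequential-composition step. You produce $r$ as a rewiring of $r_1$ and $r_2$ with symmetries, and conclude that $r$ ``contains no $\scalar{a}$ and is therefore a genuine relation-diagram''. But that is not the definition: a relation-diagram is a \emph{matrix-diagram} with no letters, i.e.\ it must factor as a block of $\Bcomult,\Bcounit$ followed by a block of $\Bmult,\Bunit$. The composite you build has the shape (comonoid block)\,;\,(monoid block)\,;\,(comonoid block)\,;\,(monoid block), interleaved with symmetries, and this is \emph{not} in the required factored form.

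This matters downstream: the places where the lemma is used (e.g.\ global distributivity, and the subsequent extraction of a representation) rely on $r$ being a matrix-diagram so that matrix distributivity (Lemma~\ref{lem:matrix-copy}) and the submatrix decomposition (Lemma~\ref{lem:submatrices}) apply. The paper closes exactly this gap by invoking matrix completeness (Theorem~\ref{thm:matrix-completeness}): any diagram built from $\Bcomult,\Bcounit,\Bmult,\Bunit$ and symmetries denotes a Boolean matrix, hence is provably equal to a relation-diagram in the factored form. You need to invoke the same fact (or, equivalently, the bimonoid normalisation for $(\bullet,\bullet)$-matrices) at this point; compact closure and wire-bending alone do not give it to you.
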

\begin{proof}

We reason by structural induction on $\Aut$. For the base case, if $c$ is $\scalar{a}$, we have
\begin{equation*}
\scalar{a} \quad \myeq{A1} \quad 
\tikzset{x=1em, y=2.1ex}
\InputIfFileExists{trace-proof-base-1.tikz}{}{\input{./tikz/trace-proof-base-1.tikz}}
\tikzset{x=1em, y=1.5ex}
\quad=\quad 
\tikzset{x=1em, y=2.1ex}
\InputIfFileExists{trace-proof-base.tikz}{}{\input{./tikz/trace-proof-base.tikz}}
\tikzset{x=1em, y=1.5ex}

\end{equation*}
and every other generator is trivially in the right form, with the trace taken over the $0$ object (the empty list of generators).

There are two inductive cases to consider:
\begin{itemize}
\item $c$ is given by the sequential composition of two morphisms of the appropriate form (using the induction hypothesis). Then
\begin{align*}

\tikzset{x=1em, y=2.1ex}
\InputIfFileExists{trace-proof-comp.tikz}{}{\input{./tikz/trace-proof-comp.tikz}}
\tikzset{x=1em, y=1.5ex}
\quad & =\quad
\tikzset{x=1em, y=2.1ex}
\InputIfFileExists{trace-proof-comp-1.tikz}{}{\input{./tikz/trace-proof-comp-1.tikz}}
\tikzset{x=1em, y=1.5ex}
 \\
& = \quad 
\tikzset{x=1em, y=2.1ex}
\InputIfFileExists{trace-proof-comp-2.tikz}{}{\input{./tikz/trace-proof-comp-2.tikz}}
\tikzset{x=1em, y=1.5ex}

\end{align*}
Here, the composite of the two relation diagrams $s$ and $t$ is also equal to a relation diagram, $r$, by completeness of KDA for matrices over $\mathbb{B}(\Sigma^*)$ (Theorem~\ref{thm:matrix-completeness}) so a fortiori for Boolean matrices (those $\mathbb{B}(\Sigma^*)$-matrices that contain only $0$ or $\epsilon$ coefficients).
\item $c$ is given as the monoidal product of two morphisms of the appropriate form. Then
\begin{align*}

\tikzset{x=1em, y=2.1ex}
\InputIfFileExists{trace-proof-tensor.tikz}{}{\input{./tikz/trace-proof-tensor.tikz}}
\tikzset{x=1em, y=1.5ex}
\quad & =\quad
\tikzset{x=1em, y=2.1ex}
\InputIfFileExists{trace-proof-tensor-1.tikz}{}{\input{./tikz/trace-proof-tensor-1.tikz}}
\tikzset{x=1em, y=1.5ex}
\\
& = \quad 
\tikzset{x=1em, y=2.1ex}
\InputIfFileExists{trace-proof-tensor-2.tikz}{}{\input{./tikz/trace-proof-tensor-2.tikz}}
\tikzset{x=1em, y=1.5ex}

\end{align*}
where it is immediate that $r$ is a relation diagram, as product of the two relation diagrams $r_1$ and $r_2$. \qedhere
\end{itemize}
\end{proof}
\noindent We are now ready to prove that any automaton-diagram $c\from \objr \to\objr $ has a representation.
\begin{proof}[Proof of Proposition~\ref{thm:traceform}]
We first rewrite $c$ to trace canonical form (Lemma~\ref{lem:traceform}) \begin{equation}\label{eq:tracecanform}
\dbox{c}  \quad = \quad \traceaction{r}{}{}{k}{x}
\end{equation}
where the relation-diagram $r$ contains no $\scalar{a}$, and therefore factorises as a first layer of comonoid $\Bcomult, \Bcounit$ (potentially followed by some permutations) and a third layer of vertical compositions of the monoid $\Bmult, \Bunit$.

\noindent Then, we can decompose $r\from \objr^{k+1} \to \objr^{k+1}$ as in Lemma~\ref{lem:submatrices} to obtain
\begin{align*}
\dbox{c} \quad  &= 
\tikzset{x=1em, y=2.1ex}
\InputIfFileExists{trace-decomposition.tikz}{}{\input{./tikz/trace-decomposition.tikz}}
\tikzset{x=1em, y=1.5ex}
 \quad
\myeq{co-cpy} \quad 
\tikzset{x=1em, y=2.1ex}
\InputIfFileExists{proof-representation.tikz}{}{\input{./tikz/proof-representation.tikz}}
\tikzset{x=1em, y=1.5ex}
\\
&\myeq{B5-B6} 
\tikzset{x=1em, y=2.1ex}
\InputIfFileExists{proof-representation-2.tikz}{}{\input{./tikz/proof-representation-2.tikz}}
\tikzset{x=1em, y=1.5ex}

\quad\myeq{A1-A2}\quad 
\tikzset{x=1em, y=2.1ex}
\InputIfFileExists{proof-representation-3.tikz}{}{\input{./tikz/proof-representation-3.tikz}}
\tikzset{x=1em, y=1.5ex}
\\
&=:\; 
\tikzset{x=1em, y=2.1ex}
\InputIfFileExists{proof-representation-4.tikz}{}{\input{./tikz/proof-representation-4.tikz}}
\tikzset{x=1em, y=1.5ex}

\end{align*}
We are very close to obtaining the desired representation but the highlighted sub-diagram in the last diagram is not quite of the right form. Recall that $r_{k,k}$ and $r_{1,k}$ are relation-diagrams, which means that they factor as a block of $\Bcomult, \Bcounit$ composed sequentially with a block of $\Bmult, \Bunit$. Therefore, to obtain an $\epsilon$-free matrix-diagram we can push all the scalars in $\scalar{x}$ into $r_{k,k}$ and $r_{1,k}$ past the $\Bmult, \Bunit$ block, using
(E10)\footnote{This step implements the diagrammatic counterpart of a standard $\epsilon$-elimination procedure for NFA.}.  In doing so, we get two $\epsilon$-free matrix-diagrams---let us call them $d_{k,k}$ and $d_{1,k}$, respectively---and can write:
\[\dbox{c} \quad  \myeq{co-cpy} 
\tikzset{x=1em, y=2.1ex}
\InputIfFileExists{proof-representation-5.tikz}{}{\input{./tikz/proof-representation-5.tikz}}
\tikzset{x=1em, y=1.5ex}
 =: 
\tikzset{x=1em, y=2.1ex}
\InputIfFileExists{automaton-representation.tikz}{}{\input{./tikz/automaton-representation.tikz}}
\tikzset{x=1em, y=1.5ex}
 \]
for $l=k+1$. We can see directly from the form of this last expression that $(e, d, f)$ is a representation for $c$.
\end{proof}

\section{Completeness and Determinisation}\label{sec:completeness}

This section is devoted to prove our completeness result, Theorem~\ref{thm:completeness}. We use a normal form argument: more specifically we mimic automata-theoretic results to rewrite every string diagram to a normal form corresponding to a \emph{minimal} deterministic finite automaton (DFA). It is a standard result that, for a given regular language $L$, there is a minimal (in the number of states) DFA which recognises $L$ and that this DFA is unique up to renaming of the states. For a review of this fundamental result, we refer the reader to~\cite[\S 13-16]{kozen2012automata} There are several ways to obtain a minimal DFA that is language-equivalent to a given NFA\@. We will use Brzozowski's algorithm~\cite{brzozowski1962canonical}, which we implement in KDA itself as a sequence of diagrammatic (in)equalities. The proof proceeds in four distinct steps.
\begin{itemize}
\item We first show (Section~\ref{sec:simplify}) how the problem of completeness for all of $\Aut$ can be reduced to that of \emph{equality of $\objr\to\objr$ diagrams}.
\item We then give (Section~\ref{sec:determinisation}) a procedure to \emph{determinise} (the representation of) a diagram: this step consists in eliminating all subdiagrams that correspond to nondeterministic transitions in the associated automaton. For this, we build on the results of Section~\ref{sec:subset-construction}, in which we show that the standard subset construction can be carried out diagrammatically.
\item We use the previous step to implement a \emph{minimisation} procedure (Section~\ref{sec:minimisation}) from which we obtain a minimal representation for a given diagram: this is a representation whose associated automaton is minimal---with the fewest number of states---amongst DFAs that recognise the same language. To do this, we show how the four steps of Brzozowski's minimisation algorithm (reverse; determinise; reverse; determinise) translate into diagrammatic equational reasoning. Note that the first three steps taken together simply amount to applying in reverse the determinisation procedure we have already devised. That this is possible will be a consequence of the symmetry of~$\leqKa$.
\item Finally, from the uniqueness of minimal DFAs, any two diagrams that have the same denotation are both equal to the same minimal representation and we can derive completeness of~$\leqKa$ (Theorem~\ref{thm:completeness}).
\end{itemize}

\begin{rem}
At this point, it is helpful to explain the relationship between the completeness result of this paper and the erroneous claim of~\cite{piedeleu2021string}. In the present paper,  the white nodes play a double role: (1)~they allow us to reduce completeness to automata-diagrams of type $\objr \to \objr$, and (2)~to translate the use of non-equational axioms (in particular the induction axiom of Kozen's axiomatisation) in the proof below, using purely local and equational reasoning steps.

In~\cite{piedeleu2021string}, we used a different syntax: one with another generator (also represented by a white node) interpreted as the action of regular expressions on languages.  This additional generator allowed us to achieve~(1), but was not sufficient to guarantee~(2). Indeed, the proof of~(2) is based on an incorrect claim: the rewriting procedure that is supposed to implement determinisation, as explained in the proof of~\cite[Lemma 4]{piedeleu2021string}, makes unfounded assumptions on the shape of diagrams and is not guaranteed to return the desired determinisation. A counter-example is provided by the diagrammatic representation of, \emph{e.g.}, $(aa)^*(1+a)$. The corresponding diagram should be equal to that representing $a^*$, but this cannot be proven in the equational theory of that paper---we explain this further in Example~\ref{ex:(aa)*(1+a)} below.
\end{rem}

\subsection{Useful preliminaries and simplifying assumptions}\label{sec:simplify}

In this section, we use symmetries of the theory to make simplifying assumptions about the diagrams to consider in the completeness proof.

First, note that we need only consider equalities for completeness, since inequalities can be recovered from the semi-lattice structure of the binary operation defined by $\Bcomult$ and $\Bmult$, both semantically and syntactically as shown by the following two propositions.
\begin{prop}\label{thm:leq-from-eq-sem}
For any two diagrams $c,d$, $\sem{c}\subseteq \sem{d}$ if and only if $\sem{
\tikzset{x=1em, y=2.1ex}
\InputIfFileExists{conv-bcomult-bmult.tikz}{}{\input{./tikz/conv-bcomult-bmult.tikz}}
\tikzset{x=1em, y=1.5ex}
} = \sem{c}$.
\end{prop}
\begin{proof}
A routine calculation shows that $\sem{
\tikzset{x=1em, y=2.1ex}
\InputIfFileExists{conv-bcomult-bmult.tikz}{}{\input{./tikz/conv-bcomult-bmult.tikz}}
\tikzset{x=1em, y=1.5ex}
} = \sem{c}\cap \sem{d}$. So the result follows from $\sem{c}\cap \sem{d} = \sem{c} \Leftrightarrow \sem{c}\subseteq \sem{d}$.
\end{proof}
\begin{prop}\label{thm:leq-from-eq-syn}
For any two automata-diagrams $c,d$ we have $c\leq d$ iff $
\tikzset{x=1em, y=2.1ex}
\InputIfFileExists{conv-bcomult-bmult.tikz}{}{\input{./tikz/conv-bcomult-bmult.tikz}}
\tikzset{x=1em, y=1.5ex}
 = c$.
\end{prop}
\begin{proof}
If $c\leq d$, then $c \myleq{F4} 
\tikzset{x=1em, y=2.1ex}
\InputIfFileExists{conv-bcomult-bmult-c-1.tikz}{}{\input{./tikz/conv-bcomult-bmult-c-1.tikz}}
\tikzset{x=1em, y=1.5ex}
 \myleq{cpy} 
\tikzset{x=1em, y=2.1ex}
\InputIfFileExists{conv-bcomult-bmult-c.tikz}{}{\input{./tikz/conv-bcomult-bmult-c.tikz}}
\tikzset{x=1em, y=1.5ex}
 \leq  
\tikzset{x=1em, y=2.1ex}
\InputIfFileExists{conv-bcomult-bmult.tikz}{}{\input{./tikz/conv-bcomult-bmult.tikz}}
\tikzset{x=1em, y=1.5ex}
= c$. We also have $
\tikzset{x=1em, y=2.1ex}
\InputIfFileExists{conv-bcomult-bmult.tikz}{}{\input{./tikz/conv-bcomult-bmult.tikz}}
\tikzset{x=1em, y=1.5ex}
 \myleq{F2} 
\tikzset{x=1em, y=2.1ex}
\InputIfFileExists{conv-disconnect-d.tikz}{}{\input{./tikz/conv-disconnect-d.tikz}}
\tikzset{x=1em, y=1.5ex}
\myleq{del}  
\tikzset{x=1em, y=2.1ex}
\InputIfFileExists{conv-disconnect-d-1.tikz}{}{\input{./tikz/conv-disconnect-d-1.tikz}}
\tikzset{x=1em, y=1.5ex}
$
Note that we use (cpy) and (del) from Theorem~\ref{thm:copy-merge} below (but in fact, these inequalities and the statement of the proposition, holds for \emph{any} diagram. This can easily be proven by induction---the inductive cases are trivial, so we just have to check that the relevant inequalities holds for all generators. However, we will not need this more general fact here, as we only care about completeness for automata-diagrams).

Conversely if $
\tikzset{x=1em, y=2.1ex}
\InputIfFileExists{conv-bcomult-bmult.tikz}{}{\input{./tikz/conv-bcomult-bmult.tikz}}
\tikzset{x=1em, y=1.5ex}
 = c$, then we can reason as before:
$c = 
\tikzset{x=1em, y=2.1ex}
\InputIfFileExists{conv-bcomult-bmult.tikz}{}{\input{./tikz/conv-bcomult-bmult.tikz}}
\tikzset{x=1em, y=1.5ex}
\myleq{F2}
\tikzset{x=1em, y=2.1ex}
\InputIfFileExists{conv-bcomult-bmult-1.tikz}{}{\input{./tikz/conv-bcomult-bmult-1.tikz}}
\tikzset{x=1em, y=1.5ex}
\myleq{del} 
\tikzset{x=1em, y=2.1ex}
\InputIfFileExists{conv-bcomult-bmult-2.tikz}{}{\input{./tikz/conv-bcomult-bmult-2.tikz}}
\tikzset{x=1em, y=1.5ex}
 = d$.
\end{proof}

Then, we show that, without loss of generality, we can restrict our attention to diagrams of type $\objr\to\objr$. We proceed in two steps: (1)~from all $\Aut$ diagrams to left-to-right diagrams only, and from left-to-right diagrams to those of type $\objr\to\objr$.

\paragraph{From diagrams of $\Aut$ to left-to-right diagrams.} First, the following proposition implies that, without loss of generality, we need only consider to left-to-right diagrams (Section~\ref{sec:automata-diag}).
\begin{prop}\label{thm:left-to-right}
There are natural bijections between sets of string diagrams of the form
\begin{equation*}

\tikzset{x=1em, y=2.1ex}
\InputIfFileExists{wrong-way-left.tikz}{}{\input{./tikz/wrong-way-left.tikz}}
\tikzset{x=1em, y=1.5ex}
\quad\leftrightarrow\quad 
\tikzset{x=1em, y=2.1ex}
\InputIfFileExists{right-way-right.tikz}{}{\input{./tikz/right-way-right.tikz}}
\tikzset{x=1em, y=1.5ex}
 \quad\text{ and }\quad 
\tikzset{x=1em, y=2.1ex}
\InputIfFileExists{wrong-way-right.tikz}{}{\input{./tikz/wrong-way-right.tikz}}
\tikzset{x=1em, y=1.5ex}
\quad\leftrightarrow\quad 
\tikzset{x=1em, y=2.1ex}
\InputIfFileExists{right-way-left.tikz}{}{\input{./tikz/right-way-left.tikz}}
\tikzset{x=1em, y=1.5ex}

\end{equation*}
where $A,B, A_i, B_i$ represent lists of $\objr$ and $\objl$.
\end{prop}
\begin{proof}
This proposition holds in any compact-closed category and relies on the ability to bend wires using $
\tikzset{x=1em, y=2.1ex}
\InputIfFileExists{cap-down.tikz}{}{\input{./tikz/cap-down.tikz}}
\tikzset{x=1em, y=1.5ex}
$ and $
\tikzset{x=1em, y=2.1ex}
\InputIfFileExists{cup-down.tikz}{}{\input{./tikz/cup-down.tikz}}
\tikzset{x=1em, y=1.5ex}
$. Explicitly, given a diagram of the first form, we can obtain one of the second form as follows:
\begin{equation}

\tikzset{x=1em, y=2.1ex}
\InputIfFileExists{wrong-way-left.tikz}{}{\input{./tikz/wrong-way-left.tikz}}
\tikzset{x=1em, y=1.5ex}
\quad \mapsto\quad 
\tikzset{x=1em, y=2.1ex}
\InputIfFileExists{bent-wires.tikz}{}{\input{./tikz/bent-wires.tikz}}
\tikzset{x=1em, y=1.5ex}

\end{equation}
The inverse mapping is given by the same wiring with the opposite direction. That they are inverse transformations follows immediately from the defining equations of compact closed categories (A1)-(A2). The other bijection is constructed analogously.
\end{proof}
\noindent Intuitively, Proposition~\ref{thm:left-to-right} tells us that we can always bend incoming wires to the left and outgoing wires to the right before applying some equations, and recover the original orientation of the wires by bending them into their original place later.

\paragraph{From left-to-right to $\objr\to\objr$.} As we will now show, we can further restrict our attention to diagrams $\objr\to\objr$. For this we prove that any left-to-right diagram $\objr^m\to \objr^n$ is fully characterised by $n\times m$ diagrams $\objr\to\objr$ much like linear maps can be described by their coefficients in a given basis. Showing this amounts to proving that Lemma~\ref{lem:matrix-copy} extends to all left-to-right diagrams (so that the monoidal product is also a biproduct for the subcategory of left-to-right diagrams).

\begin{thm}[Global distributivity]\label{thm:copy-merge}
For any automaton-diagram $d\from \objr^m\to \objr^n$, we have
\begin{align*}

\tikzset{x=1em, y=2.1ex}
\InputIfFileExists{global-copy.tikz}{}{\input{./tikz/global-copy.tikz}}
\tikzset{x=1em, y=1.5ex}
\quad \myeq{cpy} \quad 
\tikzset{x=1em, y=2.1ex}
\InputIfFileExists{global-copy-1.tikz}{}{\input{./tikz/global-copy-1.tikz}}
\tikzset{x=1em, y=1.5ex}
 \qquad\qquad 
\tikzset{x=1em, y=2.1ex}
\InputIfFileExists{global-delete.tikz}{}{\input{./tikz/global-delete.tikz}}
\tikzset{x=1em, y=1.5ex}
\quad \myeq{del} \quad 
\tikzset{x=1em, y=2.1ex}
\InputIfFileExists{global-delete-1.tikz}{}{\input{./tikz/global-delete-1.tikz}}
\tikzset{x=1em, y=1.5ex}
\\

\tikzset{x=1em, y=2.1ex}
\InputIfFileExists{global-merge.tikz}{}{\input{./tikz/global-merge.tikz}}
\tikzset{x=1em, y=1.5ex}
\quad \myeq{co-cpy} \quad 
\tikzset{x=1em, y=2.1ex}
\InputIfFileExists{global-merge-1.tikz}{}{\input{./tikz/global-merge-1.tikz}}
\tikzset{x=1em, y=1.5ex}
\qquad\qquad  
\tikzset{x=1em, y=2.1ex}
}
\tikzset{x=1em, y=1.5ex}
\quad \myeq{co-del} \quad 
\tikzset{x=1em, y=2.1ex}
\InputIfFileExists{global-co-delete-1.tikz}{}{\input{./tikz/global-co-delete-1.tikz}}
\tikzset{x=1em, y=1.5ex}

\end{align*}
\end{thm}
\begin{proof}
According to Lemma~\ref{lem:traceform}, given $d$ as in the statement of the theorem, we can find a relation-diagram $r$ such that
\begin{equation}
\diagbox{d}{n}{m} \quad = \quad \traceaction{r}{n}{m}{l}{x}
\end{equation}
 Note first that, by Lemma~\ref{lem:matrix-copy}, any relation-diagram satisfies (cpy) and (del) so we will use these two equations for $r$ below.

\medskip

\noindent First, we prove both inequalities of (cpy).
\begin{itemize}
\item The first inequality requires the introduction of new black nodes, via the two axioms $
\tikzset{x=1em, y=2.1ex}
\InputIfFileExists{bmult-bcomult.tikz}{}{\input{./tikz/bmult-bcomult.tikz}}
\tikzset{x=1em, y=1.5ex}
 \:\myleq{F1}\: 
\tikzset{x=1em, y=2.1ex}
\InputIfFileExists{id-2.tikz}{}{\input{./tikz/id-2.tikz}}
\tikzset{x=1em, y=1.5ex}
$ and $\idone \:\myleq{F3}\: 
\tikzset{x=1em, y=2.1ex}
\InputIfFileExists{bcomult-bmult.tikz}{}{\input{./tikz/bcomult-bmult.tikz}}
\tikzset{x=1em, y=1.5ex}
$:
\begin{align*}

\tikzset{x=1em, y=2.1ex}
\InputIfFileExists{traceform-copy.tikz}{}{\input{./tikz/traceform-copy.tikz}}
\tikzset{x=1em, y=1.5ex}
\quad &\myleq{F3}\quad
\tikzset{x=1em, y=2.1ex}
\InputIfFileExists{traceform-copy-1.tikz}{}{\input{./tikz/traceform-copy-1.tikz}}
\tikzset{x=1em, y=1.5ex}
\\
&\myleq{E6}\quad
\tikzset{x=1em, y=2.1ex}
\InputIfFileExists{traceform-copy-2.tikz}{}{\input{./tikz/traceform-copy-2.tikz}}
\tikzset{x=1em, y=1.5ex}
\\
&\myeq{cpy}\quad
\tikzset{x=1em, y=2.1ex}
\InputIfFileExists{traceform-copy-3.tikz}{}{\input{./tikz/traceform-copy-3.tikz}}
\tikzset{x=1em, y=1.5ex}
\\
&\myeq{A1-A2}\quad
\tikzset{x=1em, y=2.1ex}
\InputIfFileExists{traceform-copy-4.tikz}{}{\input{./tikz/traceform-copy-4.tikz}}
\tikzset{x=1em, y=1.5ex}
\\
& \myleq{F1}
\tikzset{x=1em, y=2.1ex}
\InputIfFileExists{copy-traceform.tikz}{}{\input{./tikz/copy-traceform.tikz}}
\tikzset{x=1em, y=1.5ex}

\end{align*}
\item The reverse inequality requires the introduction and elimination of $\Wmult$, via the two axioms $
\tikzset{x=1em, y=2.1ex}
\InputIfFileExists{id-2.tikz}{}{\input{./tikz/id-2.tikz}}
\tikzset{x=1em, y=1.5ex}
\:\myleq{F5} \:
\tikzset{x=1em, y=2.1ex}
\InputIfFileExists{wmult-bcomult.tikz}{}{\input{./tikz/wmult-bcomult.tikz}}
\tikzset{x=1em, y=1.5ex}
$ and $d
\tikzset{x=1em, y=2.1ex}
\InputIfFileExists{bcomult-wmult.tikz}{}{\input{./tikz/bcomult-wmult.tikz}}
\tikzset{x=1em, y=1.5ex}
\:\myleq{F7}\: \idone$:
\begin{align*}

\tikzset{x=1em, y=2.1ex}
\InputIfFileExists{copy-traceform.tikz}{}{\input{./tikz/copy-traceform.tikz}}
\tikzset{x=1em, y=1.5ex}
\quad &\myleq{F5}\quad
\tikzset{x=1em, y=2.1ex}
\InputIfFileExists{copy-traceform-1.tikz}{}{\input{./tikz/copy-traceform-1.tikz}}
\tikzset{x=1em, y=1.5ex}
\\
&\myleq{A1-A2}\quad
\tikzset{x=1em, y=2.1ex}
\InputIfFileExists{copy-traceform-2.tikz}{}{\input{./tikz/copy-traceform-2.tikz}}
\tikzset{x=1em, y=1.5ex}
\\
&\myleq{cpy}\quad
\tikzset{x=1em, y=2.1ex}
\InputIfFileExists{copy-traceform-3.tikz}{}{\input{./tikz/copy-traceform-3.tikz}}
\tikzset{x=1em, y=1.5ex}
\\
&\myleq{E8}\quad
\tikzset{x=1em, y=2.1ex}
\InputIfFileExists{copy-traceform-4.tikz}{}{\input{./tikz/copy-traceform-4.tikz}}
\tikzset{x=1em, y=1.5ex}
\\
&\myleq{F7}\quad 
\tikzset{x=1em, y=2.1ex}
\InputIfFileExists{traceform-copy.tikz}{}{\input{./tikz/traceform-copy.tikz}}
\tikzset{x=1em, y=1.5ex}

\end{align*}
\end{itemize}
We can prove (del) in a similar way, as follows.
\begin{itemize}
\item The first inequality is the unary version of its (cpy) counterpart, using axioms $\idone \:\myleq{F2}\:\Bcounit\:\:\Bunit$ and $
\tikzset{x=1em, y=2.1ex}
\begin{tikzpicture}
	\begin{pgfonlayer}{nodelayer}
		\node [style=black] (8) at (0, 0) {};
		\node [style=black] (17) at (-1.5, 0) {};
		\node [style=none] (18) at (-0.5, 0) {};
	\end{pgfonlayer}
	\begin{pgfonlayer}{edgelayer}
		\draw [->] (17) to (18.center);
		\draw (18.center) to (8);
	\end{pgfonlayer}
\end{tikzpicture}
}
\tikzset{x=1em, y=1.5ex}
 \:\myleq{F4}\: 
\tikzset{x=1em, y=2.1ex}
\InputIfFileExists{empty-diag.tikz}{}{\input{./tikz/empty-diag.tikz}}
\tikzset{x=1em, y=1.5ex}
$:
\begin{align*}

\tikzset{x=1em, y=2.1ex}
\InputIfFileExists{traceform-del.tikz}{}{\input{./tikz/traceform-del.tikz}}
\tikzset{x=1em, y=1.5ex}
\quad &\myleq{F2}\quad
\tikzset{x=1em, y=2.1ex}
\InputIfFileExists{traceform-del-1.tikz}{}{\input{./tikz/traceform-del-1.tikz}}
\tikzset{x=1em, y=1.5ex}
\\
&\myleq{E7}\quad
\tikzset{x=1em, y=2.1ex}
\InputIfFileExists{traceform-del-2.tikz}{}{\input{./tikz/traceform-del-2.tikz}}
\tikzset{x=1em, y=1.5ex}
\\
&\myeq{del}\quad
\tikzset{x=1em, y=2.1ex}
\InputIfFileExists{traceform-del-3.tikz}{}{\input{./tikz/traceform-del-3.tikz}}
\tikzset{x=1em, y=1.5ex}
\\
&\myeq{A1-A2}\quad
\tikzset{x=1em, y=2.1ex}
\InputIfFileExists{traceform-del-4.tikz}{}{\input{./tikz/traceform-del-4.tikz}}
\tikzset{x=1em, y=1.5ex}
\\
& \myleq{F4}\quad 
\tikzset{x=1em, y=2.1ex}
\begin{tikzpicture}
	\begin{pgfonlayer}{nodelayer}
		\node [style=black] (49) at (0.75, 0) {};
		\node [style=none] (56) at (-1, 0.5) {\scriptsize  $n$};
		\node [style=none] (57) at (-1.25, 0) {};
		\node [style=none] (58) at (-0.25, 0) {};
	\end{pgfonlayer}
	\begin{pgfonlayer}{edgelayer}
		\draw (58.center) to (49);
		\draw [->] (57.center) to (58.center);
	\end{pgfonlayer}
\end{tikzpicture}
}
\tikzset{x=1em, y=1.5ex}

\end{align*}
\item The reverse inequality requires the introduction and elimination of $\Wunit$, using axioms $\Bcounit\:\:\Wunit \:\myleq{F6}\: \idone$ and $
\tikzset{x=1em, y=2.1ex}
\InputIfFileExists{empty-diag.tikz}{}{\input{./tikz/empty-diag.tikz}}
\tikzset{x=1em, y=1.5ex}
 \:\myleq{F8}\:
\tikzset{x=1em, y=2.1ex}
\begin{tikzpicture}
	\begin{pgfonlayer}{nodelayer}
		\node [style=black] (8) at (0, 0) {};
		\node [style=white-dot] (17) at (-1.5, 0) {};
		\node [style=none] (18) at (-0.5, 0) {};
	\end{pgfonlayer}
	\begin{pgfonlayer}{edgelayer}
		\draw [->] (17) to (18.center);
		\draw (18.center) to (8);
	\end{pgfonlayer}
\end{tikzpicture}
}
\tikzset{x=1em, y=1.5ex}
$:
\begin{align*}

\tikzset{x=1em, y=2.1ex}
}
\tikzset{x=1em, y=1.5ex}
\quad &\myleq{F6}\quad
\tikzset{x=1em, y=2.1ex}
\InputIfFileExists{del-traceform-1.tikz}{}{\input{./tikz/del-traceform-1.tikz}}
\tikzset{x=1em, y=1.5ex}
\\
&\myleq{A1-A2}\quad
\tikzset{x=1em, y=2.1ex}
\InputIfFileExists{del-traceform-2.tikz}{}{\input{./tikz/del-traceform-2.tikz}}
\tikzset{x=1em, y=1.5ex}
\\
&\myleq{del}\quad
\tikzset{x=1em, y=2.1ex}
\InputIfFileExists{del-traceform-3.tikz}{}{\input{./tikz/del-traceform-3.tikz}}
\tikzset{x=1em, y=1.5ex}
\\
&\myleq{E9}\quad
\tikzset{x=1em, y=2.1ex}
\InputIfFileExists{del-traceform-4.tikz}{}{\input{./tikz/del-traceform-4.tikz}}
\tikzset{x=1em, y=1.5ex}
\\
&\myleq{F8}\quad 
\tikzset{x=1em, y=2.1ex}
\InputIfFileExists{traceform-del.tikz}{}{\input{./tikz/traceform-del.tikz}}
\tikzset{x=1em, y=1.5ex}

\end{align*}
\end{itemize}
The other two equalities---(co-cpy) and (co-del)---can be proved by a symmetric argument, replacing $\Bmult$ with $\Bcomult$, $\Bunit$ with $\Bcounit$, axioms (F9) instead of (F5), (F11) instead of (F7), (F10) instead of (F6), and (F12) instead of (F8).
\end{proof}

For $d\from \objr^m\to \objr^n$, let $d_{ij}$ be the string diagram of type $\objr\to \objr$ obtained as in Lemma~\ref{lem:submatrices}, by discarding every input and output except the $i$-th input and $j$-th output, i.e., by composing every input with $
\tikzset{x=1em, y=2.1ex}
}
\tikzset{x=1em, y=1.5ex}
$ except the $i$-th one, and every output with $
\tikzset{x=1em, y=2.1ex}
}
\tikzset{x=1em, y=1.5ex}
$ except the $j$-th one. Theorem~\ref{thm:copy-merge} implies that left-to-right diagrams, like matrix-diagrams, are fully characterised by their $\objr\to\objr$ subdiagrams.
\begin{cor}\label{thm:1-to-1-restrict}
Given automata-diagrams $d,e\from \objr^m\to \objr^n$, $d\eqKa e$ iff $d_{ij} \eqKa e_{ij}$, for all $1\leq i\leq m$ and $1\leq j\leq n$.
\end{cor}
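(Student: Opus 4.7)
The forward direction is immediate: for any fixed $i,j$, the diagram $d_{ij}$ is obtained from $d$ by pre-composing with a fixed block of $\Bunit$'s and post-composing with a fixed block of $\Bcounit$'s, and $\leqKa$ is a congruence with respect to $\poi$ and $\oplus$. So $d \eqKa e$ yields $d_{ij} \eqKa e_{ij}$ for all $i,j$.

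For the converse, the plan is to show that any automaton-diagram $d\from \objr^m \to \objr^n$ can be reconstructed from its coefficients via a fixed composite of copy/merge/generate/discard morphisms and wire permutations that do not depend on $d$ itself. Concretely, I will prove the decomposition
\[
d \;\eqKa\; \bigl(\Delta_n^{\oplus m}\bigr) \poi \sigma \poi \Bigl(\bigoplus_{i=1}^{m}\bigoplus_{j=1}^{n} d_{ij}\Bigr) \poi \sigma' \poi \bigl(\nabla_m^{\oplus n}\bigr),
\]
where $\Delta_n\from \objr \to \objr^n$ is the iterated copy built from $\Bcomult$, $\nabla_m\from \objr^m \to \objr$ is the iterated merge built from $\Bmult$, and $\sigma,\sigma'$ are suitable permutations of wires routing the $mn$ intermediate strands into the correct $d_{ij}$ block. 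Once this formula is in place, the corollary follows: if $d_{ij} \eqKa e_{ij}$ for all $i,j$, then both $d$ and $e$ are equal under $\leqKa$ to the same composite built out of common coefficients and common fixed wiring, and hence $d \eqKa e$.

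The decomposition itself is established by a short induction, reducing first to the case $m=1$ and then dualising. For $d\from \objr \to \objr^n$, I would argue by induction on $n$ that $d \eqKa \Delta_n \poi (d_{1,1} \oplus \cdots \oplus d_{1,n})$: the base case $n=1$ is trivial, and the inductive step uses (cpy) from Theorem~\ref{thm:copy-merge} to move the outer $\Bcomult$ across $d$, after which the counit axiom (B2) collapses the redundant copy-then-discard in each branch, leaving exactly one $\Bcounit$ on each of $n-1$ outputs of each copy of $d$. For $d\from \objr^m \to \objr$, the dual argument using (co-cpy) and (B5) yields $d \eqKa (d_{1,1} \oplus \cdots \oplus d_{m,1}) \poi \nabla_m$. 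The general case $\objr^m \to \objr^n$ then follows by applying the first decomposition column-by-column on each of the $m$ inputs of the second.

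The main obstacle here is purely bookkeeping: specifying the permutations $\sigma, \sigma'$ precisely and checking that the applications of (cpy), (co-cpy), (B2) and (B5) line up to isolate each $d_{ij}$ in the intended slot. No conceptual difficulty is involved, as Theorem~\ref{thm:copy-merge} already does all of the heavy lifting; indeed, this is exactly the extension to arbitrary automata-diagrams of the analogous fact for matrix-diagrams recorded in Lemma~\ref{lem:submatrices}, and the present argument simply replays its proof now that the global distributivity laws are available everywhere in $\Aut$.
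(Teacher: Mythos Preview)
Your proposal is correct and follows exactly the route the paper intends: the corollary is stated without proof precisely because, once global distributivity (Theorem~\ref{thm:copy-merge}) is available, the biproduct decomposition of Lemma~\ref{lem:submatrices} extends verbatim from matrix-diagrams to all automata-diagrams, and you have spelled this out. The only remark is that your inductive step could be replaced by a single application of the $n$-ary version of (cpy)/(co-cpy) together with (co)unitality, but this is purely cosmetic.
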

Thus, as we claimed above, we can restrict our focus further to left-to-right $\objr\to\objr$ diagrams, without loss of generality. Therefore, to prove Theorem~\ref{thm:completeness}, we only need to to prove the following result.
\begin{thm}\label{thm:1-to-1-completeness}
For any two automata-diagrams $d, d'\from\objr\to\objr$,
\[\sem{d}= \sem{d'} \text{ if and only if } d \eqKa d'\text{.}\]
\end{thm}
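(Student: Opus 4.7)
My plan is to mimic Brzozowski's minimisation algorithm diagrammatically, rewriting each side to a canonical normal form corresponding to the minimal DFA for its denotation. Given $d,d'\from \objr\to\objr$ with $\sem{d}=\sem{d'}$, I would begin by invoking Proposition~\ref{thm:traceform} to extract representations $d\eqKa e\poi \delta^*\poi f$ and $d'\eqKa e'\poi (\delta')^*\poi f'$; by Proposition~\ref{thm:diagram-nfa}, the associated NFAs both recognise the language $L\df \sem{d}=\sem{d'}$. The goal is to derive, using only KDA-equalities, a common normal form from each side.

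The four Brzozowski stages---reverse, determinise, reverse, determinise---would then each be realised as a KDA-derivable rewrite on representations. Reversal is immediate from compact closure: bending every wire with $\tikzfig{cap-down}$ and $\tikzfig{cup-down}$ via (A1)--(A2) exchanges the roles of the initial-state and final-state blocks and transposes the transition matrix-diagram, yielding the diagrammatic counterpart of the reverse NFA. Determinisation would be supplied by the subset-construction procedure to be developed in Sections~\ref{sec:subset-construction}--\ref{sec:determinisation}: it introduces white-node subdiagrams to encode set-unions of original states as single composite states, propagates these unions across the transition block using the bimonoid laws (D1)--(D6) together with the distributivity axioms (E8)--(E9) for each $\scalar{a}$, and cleans up via the black/white adjunctions (F5)--(F12) until the resulting matrix-diagram is $\epsilon$-free and deterministic and the initial block singles out one state. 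Because the theory is symmetric under horizontal reflection (with the roles of $\Bcomult,\Bcounit,\Wmult,\Wunit$ and of $\Bmult,\Bunit,\Wcomult,\Wcounit$ swapped), the same derivation can be mirrored to carry out the second reverse-then-determinise pass on the reversed representation.

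I expect the main obstacle to lie precisely in turning the subset construction into equational KDA-rewrites rather than a meta-level algorithm operating on NFAs. This is exactly where the white generators earn their place in the syntax: the subset construction inherently needs to form, track and merge unions of state-sets, and only the white commutative monoid $(\Wmult,\Wunit)$, together with its adjoint interaction with the black comonoid via the F-block, gives the algebra required to perform these manipulations inside derivations. This is also the step that was erroneously claimed in the earlier syntax of~\cite{piedeleu2021string}, and the example $(aa)^*(1+a)$ cited there is the kind of witness that a sound determinisation derivation must now be able to handle.

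Once the four stages are in place, the argument concludes combinatorially. Both $d$ and $d'$ rewrite, through chains of KDA-equalities, to a representation of the unique minimal DFA that recognises $L$. Minimal DFAs are unique up to renaming of states, and a renaming corresponds in the diagrammatic setting to a permutation of the $l$ parallel wires of the transition matrix-diagram, which can be absorbed into the adjacent relation-diagrams $e$ and $f$ using (B1)--(B11). The two normal forms therefore coincide on the nose, yielding $d\eqKa d'$ and establishing Theorem~\ref{thm:1-to-1-completeness}.
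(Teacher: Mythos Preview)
Your proposal is correct and tracks the paper's strategy closely: extract representations via Proposition~\ref{thm:traceform}, run Brzozowski's algorithm diagrammatically using the determinisation developed in Sections~\ref{sec:subset-construction}--\ref{sec:determinisation}, and conclude from uniqueness of the minimal DFA.

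The one substantive difference is how reversal is handled. You propose to reverse by literally bending wires with $\tikzfig{cap-down}$ and $\tikzfig{cup-down}$ via (A1)--(A2), then determinise, then bend back. That does not quite typecheck: compact closure sends $\objr$-ports to $\objl$-ports, so the ``reversed'' diagram is no longer a representation in the sense of Definition~\ref{def:representation}, and you would still have to transport the determinisation argument across that change of types. The paper avoids this detour by never reversing explicitly. Instead it observes that every (in)equality used in the determinisation derivation has a horizontal mirror image already present in KDA (e.g.\ (co-cpy) in place of (cpy), (F9)--(F12) in place of (F5)--(F8), Lemma~\ref{lemma:adjoint-bw-bb} in place of Lemma~\ref{lemma:adjoint-bb-wb}), so the first three Brzozowski stages collapse into a single \emph{co-determinisation} procedure that directly produces a co-deterministic representation; ordinary determinisation (Corollary~\ref{thm:deterministic-rep}) is then applied once more. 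Your own remark about the symmetry of the theory under horizontal reflection is exactly the observation the paper uses, just applied more directly. A minor related point: the determinisation itself does not proceed by ``introducing white subdiagrams to encode unions and propagating via (D1)--(D6), (E8)--(E9)'' as you sketch; the paper instead builds the black membership matrix $p_s$, proves a bisimulation equality $\hat d^{*}\poi p_s = p_s\poi d^{*}$ using the adjoint $(\bullet,\circ)$- and $(\circ,\bullet)$-matrices and the sliding law, with the white generators appearing only transiently inside those adjoints.
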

We will need to prove several preparatory results, including a diagrammatic form of determinisation, before the proof of Theorem~\ref{thm:1-to-1-completeness} which can be found in Section~\ref{sec:minimisation}.

\subsection{Diagrammatic subset construction}%
\label{sec:subset-construction}

In what follows we assume familiarity with the standard subset construction. The reader who wishes to refresh their memory can refer to~\cite[\S 6]{kozen2012automata}.

In diagrammatic terms, a nondeterministic transition of the automaton associated to (a representation of) a given diagram, corresponds to a subdiagram of the form $
\tikzset{x=1em, y=2.1ex}
\InputIfFileExists{non-determinism.tikz}{}{\input{./tikz/non-determinism.tikz}}
\tikzset{x=1em, y=1.5ex}
$ for some $a\in \Sigma$ in the matrix-diagram encoding its transition relation. The following example illustrates how Theorem~\ref{thm:copy-merge} can already be used to determinise some simple automata-diagrams. The following section is dedicated to giving a formal procedure that extends this idea to any automaton-diagram, by formalising a diagrammatic version of the algebraic subset construction due to Kozen~\cite{kozen1994completeness}.
\begin{exa}\label{ex:determinisation}
\begin{align*}

\tikzset{x=1em, y=2.1ex}
\InputIfFileExists{ex-graph-determinise.tikz}{}{\input{./tikz/ex-graph-determinise.tikz}}
\tikzset{x=1em, y=1.5ex}
\quad \mapsto\quad 
\tikzset{x=1em, y=2.1ex}
\InputIfFileExists{ex-diag-determinise.tikz}{}{\input{./tikz/ex-diag-determinise.tikz}}
\tikzset{x=1em, y=1.5ex}

% = \;\tikzfig{ex-diag-determinise-1}\\
% \myeq{cpy} \; \tikzfig{ex-diag-determinise-2}
 \myeq{cpy} \;
\tikzset{x=1em, y=2.1ex}
\InputIfFileExists{ex-diag-determinise-3.tikz}{}{\input{./tikz/ex-diag-determinise-3.tikz}}
\tikzset{x=1em, y=1.5ex}
\\
 \;\myeq{cpy}\;
\tikzset{x=1em, y=2.1ex}
\InputIfFileExists{ex-diag-determinise-4.tikz}{}{\input{./tikz/ex-diag-determinise-4.tikz}}
\tikzset{x=1em, y=1.5ex}

 := \quad
\tikzset{x=1em, y=2.1ex}
\InputIfFileExists{ex-diag-determinise-5.tikz}{}{\input{./tikz/ex-diag-determinise-5.tikz}}
\tikzset{x=1em, y=1.5ex}

\mapsfrom\quad 
\tikzset{x=1em, y=2.1ex}
\InputIfFileExists{ex-graph-deterministic.tikz}{}{\input{./tikz/ex-graph-deterministic.tikz}}
\tikzset{x=1em, y=1.5ex}

\end{align*}
where we write $\scalar{a^*}$ for the left-to-right diagram $
\tikzset{x=1em, y=2.1ex}
\InputIfFileExists{a-star.tikz}{}{\input{./tikz/a-star.tikz}}
\tikzset{x=1em, y=1.5ex}
$.
\end{exa}

The diagrammatic counterpart of the subset construction we give below makes crucial use of the adjunctions between the different generators (Block F in Fig.~\ref{fig:axioms}). Before we can give the determinisation procedure for an arbitrary automaton-diagram, we need to cover essential technical preliminaries, which will allow us to greatly generalise these adjunctions.

Recall from Section~\ref{sec:encoding-relations} that, given a bimonoid, we can encode $n\times m$ Boolean matrices---equivalently, relations between the finite sets $\{0,\dots, m-1\}$ and $\{0,\dots, n-1\}$---by a block of comultiplications and counits composed sequentially with a block of multiplications and units. The $i$-th open port on the right is connected to $j$-th one on the left iff $(i,j)$ is in the encoded relation. This time we will be working with three different bimonoids, giving three different encodings of relations: $(\Bcomult,\Bcounit, \Bmult, \Bunit)$, $(\Bcomult,\Bcounit, \Wmult, \Wunit)$, and $(\Wcomult,\Wcounit, \Bmult, \Bunit)$. We will call the corresponding matrix-diagrams $(\bullet, \bullet)$-matrices\footnote{These are the \emph{relation-diagrams} of Section~\ref{sec:diagram-to-automata}.}, $(\bullet, \circ)$-matrices and $(\circ, \bullet)$-matrices respectively.

We will now prove that the adjunctions between the white and black generators (Block F in Fig.~\ref{fig:axioms}) generalise to these matrix-encodings. We define two notions of transpose for diagrams: one which swaps the colours of the different generators and one which does not. These will assist us in generalising the adjunctions between the white and black generators to all our matrix encodings of relations.
\begin{defi}\label{def:transpose}
Given a diagram $d\from \objr^m\to \objr^n$, we define its \emph{transpose} to be the diagram $d^T\from \objr^n\to \objr^m$ obtained by flipping $d$ horizontally, except the letters $\scalar{a}$. More formally, $(\cdot)^T$ is defined inductively as follows:
\begin{gather*}
(\Bcomult)^T =\Bmult \quad (\Bcounit)^T=\Bunit \quad (\Bmult)^T =\Bcomult \quad (\Bunit)^T=\Bcounit
\\
(\Wcomult)^T =\Wmult \quad (\Wcounit)^T=\Wunit \quad (\Wmult)^T =\Wcomult \quad (\Wunit)^T=\Wcounit
\\
\left(
\tikzset{x=1em, y=2.1ex}
\InputIfFileExists{cup-down.tikz}{}{\input{./tikz/cup-down.tikz}}
\tikzset{x=1em, y=1.5ex}
\right)^T = 
\tikzset{x=1em, y=2.1ex}
\InputIfFileExists{cap-down.tikz}{}{\input{./tikz/cap-down.tikz}}
\tikzset{x=1em, y=1.5ex}
 \qquad \left(
\tikzset{x=1em, y=2.1ex}
\InputIfFileExists{cap-down.tikz}{}{\input{./tikz/cap-down.tikz}}
\tikzset{x=1em, y=1.5ex}
\right)^T = 
\tikzset{x=1em, y=2.1ex}
\InputIfFileExists{cup-down.tikz}{}{\input{./tikz/cup-down.tikz}}
\tikzset{x=1em, y=1.5ex}
 \qquad (\scalar{a})^T = \scalar{a}
\end{gather*}
\[
(\sym)^T = \sym \quad (c\poi d)^T = d^T \poi c^T \quad (c_1\oplus c_2)^T =c_1^T \oplus c_2^T
\]
\end{defi}
\begin{defi}\label{def:color-transpose}
Given a diagram $d\from \objr^m\to \objr^n$, we define its \emph{colour-transpose} to be the diagram $d^\circ\from \objr^n\to \objr^m$ obtained from $d$ by swapping all black and white nodes, and flipping the resulting diagram horizontally, except the letters $\scalar{a}$. More formally, $(\cdot)^\circ$ is defined inductively as follows:
 \begin{gather*}
(\Bcomult)^\circ =\Wmult \quad (\Bcounit)^\circ=\Wunit \quad (\Bmult)^\circ =\Wcomult \quad (\Bunit)^\circ=\Wcounit
\\
(\Wcomult)^\circ =\Bmult \quad (\Wcounit)^\circ=\Bunit \quad (\Wmult)^\circ =\Bcomult \quad (\Wunit)^\circ=\Bcounit
\\
\left(
\tikzset{x=1em, y=2.1ex}
\InputIfFileExists{cup-down.tikz}{}{\input{./tikz/cup-down.tikz}}
\tikzset{x=1em, y=1.5ex}
\right)^\circ = 
\tikzset{x=1em, y=2.1ex}
\InputIfFileExists{cap-down.tikz}{}{\input{./tikz/cap-down.tikz}}
\tikzset{x=1em, y=1.5ex}
 \qquad \left(
\tikzset{x=1em, y=2.1ex}
\InputIfFileExists{cap-down.tikz}{}{\input{./tikz/cap-down.tikz}}
\tikzset{x=1em, y=1.5ex}
\right)^\circ = 
\tikzset{x=1em, y=2.1ex}
\InputIfFileExists{cup-down.tikz}{}{\input{./tikz/cup-down.tikz}}
\tikzset{x=1em, y=1.5ex}
 \qquad (\scalar{a})^\circ = \scalar{a}
\end{gather*}
\[
(\sym)^\circ = \sym \quad (c\poi d)^\circ = d^\circ\poi c^\circ \quad (c_1\oplus c_2)^\circ =c_1^\circ \oplus c_2^\circ
\]
\end{defi}
\noindent Note that a permutation $\sigma$ is mapped to its inverse so that $\sigma^T = \sigma^\circ = \sigma^{-1}$.
Another immediate consequence of these definitions is that these mappings are involutive: $(d^\circ)^\circ = d$ and $(d^T)^T = d$ for any diagram $d$.

For the next lemmas, we use the following notation. Given some relation, we write $
\tikzset{x=1em, y=2.1ex}
\begin{tikzpicture}
	\begin{pgfonlayer}{nodelayer}
		\node [style=none] (0) at (1.25, 0) {};
		\node [style=regb] (1) at (0, 0) {};
		\node [style=none] (2) at (-1.25, 0) {};
	\end{pgfonlayer}
	\begin{pgfonlayer}{edgelayer}
		\draw (2.center) to (1);
		\draw (1) to (0.center);
	\end{pgfonlayer}
\end{tikzpicture}
}
\tikzset{x=1em, y=1.5ex}
$ for the corresponding $(\bullet,\bullet)$-matrix. We can assume it factors as $
\tikzset{x=1em, y=2.1ex}
}
\tikzset{x=1em, y=1.5ex}
 = b \poi c$, where $b$ is a diagram formed only of $\Bcomult,\Bcounit$ and $c$ is a diagram formed only of $\Bmult,\Bunit$. Then, we define $
\tikzset{x=1em, y=2.1ex}
\begin{tikzpicture}
	\begin{pgfonlayer}{nodelayer}
		\node [style=none] (0) at (1.25, 0) {};
		\node [style=coregbw] (1) at (0, 0) {};
		\node [style=none] (2) at (-1.25, 0) {};
	\end{pgfonlayer}
	\begin{pgfonlayer}{edgelayer}
		\draw (2.center) to (1);
		\draw (1) to (0.center);
	\end{pgfonlayer}
\end{tikzpicture}
}
\tikzset{x=1em, y=1.5ex}
 := c^T\poi b^\circ$ and $
\tikzset{x=1em, y=2.1ex}
\begin{tikzpicture}
	\begin{pgfonlayer}{nodelayer}
		\node [style=none] (0) at (1.25, 0) {};
		\node [style=coregwb] (1) at (0, 0) {};
		\node [style=none] (2) at (-1.25, 0) {};
	\end{pgfonlayer}
	\begin{pgfonlayer}{edgelayer}
		\draw (2.center) to (1);
		\draw (1) to (0.center);
	\end{pgfonlayer}
\end{tikzpicture}
}
\tikzset{x=1em, y=1.5ex}
 := c^\circ\poi b^T$.

It is helpful to adopt a semantic point of view in order to develop intuition about these different diagrammatic encodings of relations. If $
\tikzset{x=1em, y=2.1ex}
}
\tikzset{x=1em, y=1.5ex}
$ encodes the relation $R$, we have
\[
\sem{
\tikzset{x=1em, y=2.1ex}
}
\tikzset{x=1em, y=1.5ex}
} = \{(\mathbf{L},\mathbf{K})\mid L_i \subseteq K_j \text{ if } (i,j)\in R\}\]
\[\sem{
\tikzset{x=1em, y=2.1ex}
}
\tikzset{x=1em, y=1.5ex}
} = \left\{(\mathbf{L},\mathbf{K})\mid \bigcap_{(i,j)\in R} L_j \subseteq K_i\right\}\qquad
\sem{
\tikzset{x=1em, y=2.1ex}
}
\tikzset{x=1em, y=1.5ex}
} = \left \{(\mathbf{L},\mathbf{K})\mid  L_j \subseteq \bigcup_{(i,j)\in R} K_i\right\}
\]
(Notice the reversal of the indices $i$ and $j$  for the adjoint matrices). Furthermore, the matrix $
\tikzset{x=1em, y=2.1ex}
}
\tikzset{x=1em, y=1.5ex}
$ can be seen as the embedding of a monotone map $\Lang^m\to \Lang^{n}$ (see Remark~\ref{rmk:monotone-maps}) in monotone relations.
%encoding the membership relation between elements of $s$ and $2^s$
In fact, the corresponding map is not just monotone but a lattice homomorphism, \emph{i.e.} a map that preserves both meets/intersections and joins/unions. As a result, it admits both left ($
\tikzset{x=1em, y=2.1ex}
}
\tikzset{x=1em, y=1.5ex}
$) and right ($
\tikzset{x=1em, y=2.1ex}
}
\tikzset{x=1em, y=1.5ex}
$) adjoints.

From an equational perspective, as we will now prove, the F axioms of KDA are enough to derive $
\tikzset{x=1em, y=2.1ex}
}
\tikzset{x=1em, y=1.5ex}
\dashv 
\tikzset{x=1em, y=2.1ex}
}
\tikzset{x=1em, y=1.5ex}
\dashv 
\tikzset{x=1em, y=2.1ex}
}
\tikzset{x=1em, y=1.5ex}
$.

\begin{lem}\label{lemma:general-wb-adjunction}
If $b \from \objr^m \to \objr^n$ is a diagram made entirely from $\Bcomult$ and $\Bcounit$, we have
\[(i)\quad  \stackrel{n\qquad}{\idone}\;\leq 
\tikzset{x=1em, y=2.1ex}
\InputIfFileExists{b-color-transpose-then-b.tikz}{}{\input{./tikz/b-color-transpose-then-b.tikz}}
\tikzset{x=1em, y=1.5ex}
\qquad \text{and}\qquad (ii)\quad  
\tikzset{x=1em, y=2.1ex}
\InputIfFileExists{b-then-b-color-transpose.tikz}{}{\input{./tikz/b-then-b-color-transpose.tikz}}
\tikzset{x=1em, y=1.5ex}
\leq\; \stackrel{m\qquad}{\idone}\]
\end{lem}
\begin{proof}
Consider inequality \emph{(i)}. This can be proven by a straightforward induction on the structure of $b$. We can take care of the base cases with axioms (F5) and (F8). There are three inductive cases to consider, which can be dealt with using these two axioms again:
\begin{itemize}
	\item $b = 
\tikzset{x=1em, y=2.1ex}
\InputIfFileExists{bcomult-b.tikz}{}{\input{./tikz/bcomult-b.tikz}}
\tikzset{x=1em, y=1.5ex}
$. Then $b^\circ\poi b = 
\tikzset{x=1em, y=2.1ex}
\InputIfFileExists{b-circ-wmult-bcomult-b.tikz}{}{\input{./tikz/b-circ-wmult-bcomult-b.tikz}}
\tikzset{x=1em, y=1.5ex}
\mygeq{F5} (b')^\circ\poi b'$.
	\item $b = 
\tikzset{x=1em, y=2.1ex}
\InputIfFileExists{bcounit-over-b.tikz}{}{\input{./tikz/bcounit-over-b.tikz}}
\tikzset{x=1em, y=1.5ex}
$. Then $b^\circ \poi b = 
\tikzset{x=1em, y=2.1ex}
\InputIfFileExists{b-circ-wunit-bcounit-b.tikz}{}{\input{./tikz/b-circ-wunit-bcounit-b.tikz}}
\tikzset{x=1em, y=1.5ex}
\mygeq{F8}(b')^\circ\poi b'$
	\item $b = \sigma \poi b'$ for some permutation $\sigma$. Then $b^\circ\poi b = (b')\circ \poi \sigma^{-1}\poi\sigma \poi b' = (b')^\circ\poi b'$.
\end{itemize}
In all three cases we can conclude that $\stackrel{n\qquad}{\idone}\leq 
\tikzset{x=1em, y=2.1ex}
\InputIfFileExists{b-color-transpose-then-b.tikz}{}{\input{./tikz/b-color-transpose-then-b.tikz}}
\tikzset{x=1em, y=1.5ex}
$ using the induction hypothesis.

Inequality \emph{(ii)} can also be proven by a similar induction.
We can take care of the base cases with axioms (F6) and (F7). As before, there are three inductive cases to consider, which can be dealt with using these two axioms again:
\begin{itemize}
	\item $b = 
\tikzset{x=1em, y=2.1ex}
\InputIfFileExists{bcomult-b.tikz}{}{\input{./tikz/bcomult-b.tikz}}
\tikzset{x=1em, y=1.5ex}
$. Then $b\poi b^\circ = 
\tikzset{x=1em, y=2.1ex}
\InputIfFileExists{b-bcomult-wmult-b-circ.tikz}{}{\input{./tikz/b-bcomult-wmult-b-circ.tikz}}
\tikzset{x=1em, y=1.5ex}
\mygeq{F7} \;\stackrel{m\qquad}{\idone}$.
	\item $b = 
\tikzset{x=1em, y=2.1ex}
\InputIfFileExists{bcounit-over-b.tikz}{}{\input{./tikz/bcounit-over-b.tikz}}
\tikzset{x=1em, y=1.5ex}
$. Then $b \poi b^\circ = 
\tikzset{x=1em, y=2.1ex}
\InputIfFileExists{b-bcounit-wunit-b-circ.tikz}{}{\input{./tikz/b-bcounit-wunit-b-circ.tikz}}
\tikzset{x=1em, y=1.5ex}
\;\myleq{F6} \;\stackrel{m\qquad}{\idone}$
	\item $b = \sigma \poi b'$ for some permutation $\sigma$. Then $b^\circ\poi b = \sigma\poi b \poi b^\circ \poi\sigma^{-1} \myleq{I.H.} \sigma^{-1}\poi\sigma = \;\stackrel{m\qquad}{\idone}$.\qedhere
\end{itemize}
\end{proof}
\begin{lem}\label{lemma:general-bw-adjunction}
If $w\from \objr^m \to \objr^n$, is a diagram made entirely from $\Wcomult$ and $\Wcounit$, we have
\[(i)\quad  \stackrel{n\qquad}{\idone}\;\leq 
\tikzset{x=1em, y=2.1ex}
\InputIfFileExists{w-color-transpose-then-w.tikz}{}{\input{./tikz/w-color-transpose-then-w.tikz}}
\tikzset{x=1em, y=1.5ex}
\qquad \text{and}\qquad (ii)\quad  
\tikzset{x=1em, y=2.1ex}
\InputIfFileExists{w-then-w-color-transpose.tikz}{}{\input{./tikz/w-then-w-color-transpose.tikz}}
\tikzset{x=1em, y=1.5ex}
\leq\; \stackrel{m\qquad}{\idone}\]
\end{lem}
\begin{proof}
The proof is entirely analogous to that of the previous lemma. Inequality \emph{(i)} can be proven in the same way, replacing all uses of (F5) by (F9) and (F8) by (F12).
Similarly, we can prove inequality \emph{(ii)} by replacing all uses of (F6) by (F10) and (F7) by (F11).
\end{proof}
\begin{lem}\label{lemma:general-bb-adjunction}
If $b \from \objr^m \to \objr^n$ is a diagram made entirely from $\Bcomult$ and $\Bcounit$, we have
\[(i)\quad  \stackrel{n\qquad}{\idone}\;\leq 
\tikzset{x=1em, y=2.1ex}
\InputIfFileExists{b-then-b-transpose.tikz}{}{\input{./tikz/b-then-b-transpose.tikz}}
\tikzset{x=1em, y=1.5ex}
\qquad \text{and}\qquad (ii)\quad  
\tikzset{x=1em, y=2.1ex}
\InputIfFileExists{b-transpose-then-b.tikz}{}{\input{./tikz/b-transpose-then-b.tikz}}
\tikzset{x=1em, y=1.5ex}
\leq\; \stackrel{m\qquad}{\idone}\]
\end{lem}
\begin{proof}
The proof is once again entirely analogous to that of the Lemma~\ref{lemma:general-bw-adjunction}, with the converse inequalities (as $\Bcomult$ is left adjoint to $\Bmult$, whereas $\Bcomult$ is right adjoint to $\Wmult$).
We can prove inequality \emph{(i)} as inequality \emph{(ii)} in Lemma~\ref{lemma:general-bw-adjunction} by replacing all uses of (F6) by (F2) and (F7) by (F3).
Inequality \emph{(ii)} can be proven in the same way as inequality \emph{(i)} in Lemma~\ref{lemma:general-bw-adjunction}, replacing all uses of (F5) by (F1) and (F8) by (F4).
\end{proof}
\noindent Recall that if $
\tikzset{x=1em, y=2.1ex}
}
\tikzset{x=1em, y=1.5ex}
 = b \poi c$, $
\tikzset{x=1em, y=2.1ex}
}
\tikzset{x=1em, y=1.5ex}
 := c^\circ\poi b^T$ and  $
\tikzset{x=1em, y=2.1ex}
}
\tikzset{x=1em, y=1.5ex}
 := c^T\poi b^\circ$.
\begin{lem}\label{lemma:adjoint-bb-wb}
(i) $\stackrel{m\qquad}{\idone} \:\leq\: 
\tikzset{x=1em, y=2.1ex}
}
\tikzset{x=1em, y=1.5ex}
\!\!\!\!\! 
\tikzset{x=1em, y=2.1ex}
}
\tikzset{x=1em, y=1.5ex}
$ and (ii) $   
\tikzset{x=1em, y=2.1ex}
}
\tikzset{x=1em, y=1.5ex}
\!\!\!\!\! 
\tikzset{x=1em, y=2.1ex}
}
\tikzset{x=1em, y=1.5ex}
\:\leq\: \stackrel{n\qquad}{\idone}$
\end{lem}
\begin{proof}
Let $w= c^\circ$ so that $w$ is made entirely from $\Wcomult,\Wcounit$ (and we can apply Lemma~\ref{lemma:general-bw-adjunction} to it) and notice that $w^\circ = (c^\circ)^\circ = c$.

For \emph{(i)}, we have $\stackrel{m\qquad}{\idone}\; \leq b\poi b^T  \leq b\poi w^\circ\poi w \poi b^T := b\poi c\poi c^\circ \poi b^T    =:\; 
\tikzset{x=1em, y=2.1ex}
}
\tikzset{x=1em, y=1.5ex}
\!\!\!\!\! 
\tikzset{x=1em, y=2.1ex}
}
\tikzset{x=1em, y=1.5ex}
$ where the first inequality comes from Lemma~\ref{lemma:general-bb-adjunction} \emph{(i)} and the second from Lemma~\ref{lemma:general-bw-adjunction} \emph{(i)}.

For \emph{(ii)}, we have $
\tikzset{x=1em, y=2.1ex}
}
\tikzset{x=1em, y=1.5ex}
\!\!\!\!\! 
\tikzset{x=1em, y=2.1ex}
}
\tikzset{x=1em, y=1.5ex}
\: := c^\circ\poi b^T\poi b\poi c \leq c^\circ\poi c = w \poi w^\circ \leq \;\stackrel{n\qquad}{\idone}$ where the first inequality comes from Lemma~\ref{lemma:general-bb-adjunction} \emph{(ii)} and the second one from Lemma~\ref{lemma:general-bw-adjunction} \emph{(ii)}.
\end{proof}
\begin{lem}\label{lemma:adjoint-bw-bb}
(i) $\stackrel{n\qquad}{\idone}\;\leq 
\tikzset{x=1em, y=2.1ex}
}
\tikzset{x=1em, y=1.5ex}
\!\!\!\!\! 
\tikzset{x=1em, y=2.1ex}
}
\tikzset{x=1em, y=1.5ex}
$ and (ii) $
\tikzset{x=1em, y=2.1ex}
}
\tikzset{x=1em, y=1.5ex}
\!\!\!\!\! 
\tikzset{x=1em, y=2.1ex}
}
\tikzset{x=1em, y=1.5ex}
\leq \;\stackrel{m\qquad}{\idone}$
\end{lem}
\begin{proof}
Let $b_1=c^T$ and notice that $b_1^T = (c^T)^T = c$.

For \emph{(i)}, we have $\stackrel{n\qquad}{\idone}\; \leq b_1\poi b_1^T := c^T\poi c \leq  c^T\poi b^\circ\poi b\poi c =:\; 
\tikzset{x=1em, y=2.1ex}
}
\tikzset{x=1em, y=1.5ex}
\!\!\!\!\! 
\tikzset{x=1em, y=2.1ex}
}
\tikzset{x=1em, y=1.5ex}
$ where the first inequality comes from Lemma~\ref{lemma:general-bb-adjunction} \emph{(i)} and the second from Lemma~\ref{lemma:general-wb-adjunction} \emph{(i)}.

For \emph{(ii)}, we have $
\tikzset{x=1em, y=2.1ex}
}
\tikzset{x=1em, y=1.5ex}
\!\!\!\!\! 
\tikzset{x=1em, y=2.1ex}
}
\tikzset{x=1em, y=1.5ex}
\: := b\poi c\poi c^T\poi b^\circ =: b\poi b_1^T\poi b_1\poi b^\circ \leq b\poi b^\circ \leq \;\stackrel{n\qquad}{\idone}$ where the first inequality comes from Lemma~\ref{lemma:general-bb-adjunction} \emph{(ii)} and the second one from Lemma~\ref{lemma:general-wb-adjunction} \emph{(ii)}.
\end{proof}

Following Kozen's proof of completeness~\cite{kozen1994completeness} we can model the subset construction algebraically, now with diagrams. We can construct a $(\bullet,\bullet)$-matrix $p_s\from  \objr^{2^s} \to \objr^{s}$ encoding the membership relation of elements of $\{0,\dots,s-1\}$  (in the codomain) to subsets of $\{0,\dots,s-1\}$ (in the domain)---in diagrammatic terms, the $i$-th port on the right is connected to the $j$-th port on the left iff $i\in j\subseteq \{0,\dots,s-1\}$ (where we fix some ordering of the subsets of $\{0,\dots,s-1\}$). For example, $p_2$ is the following diagram
\[
\tikzset{x=1em, y=2.1ex}
\InputIfFileExists{powerset-2.tikz}{}{\input{./tikz/powerset-2.tikz}}
\tikzset{x=1em, y=1.5ex}
\]
where the ports on the right correspond to $0$ and $1$, and those on the left correspond to the subsets $\varnothing$, $\{0\}$, $\{1\}$, and $\{0,1\}$, from top to bottom.

From now on $
\tikzset{x=1em, y=2.1ex}
\begin{tikzpicture}
	\begin{pgfonlayer}{nodelayer}
		\node [style=none] (0) at (1.25, 0) {};
		\node [style=regb] (1) at (0, 0) {};
		\node [style=none] (2) at (-1.25, 0) {};
	\end{pgfonlayer}
	\begin{pgfonlayer}{edgelayer}
		\draw (2.center) to (1);
		\draw (1) to (0.center);
	\end{pgfonlayer}
\end{tikzpicture}
}
\tikzset{x=1em, y=1.5ex}
$ will refer to  $p_S$,
and $
\tikzset{x=1em, y=2.1ex}
}
\tikzset{x=1em, y=1.5ex}
$, $
\tikzset{x=1em, y=2.1ex}
}
\tikzset{x=1em, y=1.5ex}
$, to the adjoint $(\circ,\bullet)$-matrices $\objr^{s} \to \objr^{2^s}$ constructed as explained above. We will omit the explicit label $s$ when it can be easily inferred from the context.

The next lemma connects the diagrammatic representation of a given automaton to that of its determinisation. It is simply a reformulation of Kozen's construction in~\cite{kozen1994completeness}. We assume that $d\from \objr^s \to \objr^s$, $f\from \objr^s\to \objr$, and $e\from \objr \to \objr^s$ are matrix-diagrams encoding the transition relation, final and initial states of a given automaton, and $\hat{d}\from \objr^{2^s}\to \objr^{2^s}$, $\hat{f}\from \objr^{2^s}\to \objr$, and $\hat{e}\from  \objr\to \objr^{2^s}$ are matrix-diagrams encoding the transition relation, final and initial state of its determinisation, respectively.
\begin{lem}\label{lemma:powerset-construct}
\[(i) 
\tikzset{x=1em, y=2.1ex}
\InputIfFileExists{d-det-regb.tikz}{}{\input{./tikz/d-det-regb.tikz}}
\tikzset{x=1em, y=1.5ex}
 = 
\tikzset{x=1em, y=2.1ex}
\InputIfFileExists{regb-d.tikz}{}{\input{./tikz/regb-d.tikz}}
\tikzset{x=1em, y=1.5ex}
\]
\[(ii) \dbox{e} =  
\tikzset{x=1em, y=2.1ex}
\InputIfFileExists{hat-e-regb.tikz}{}{\input{./tikz/hat-e-regb.tikz}}
\tikzset{x=1em, y=1.5ex}
 \qquad (iii) \dbox{\hat{f}} =  
\tikzset{x=1em, y=2.1ex}
\InputIfFileExists{regb-f.tikz}{}{\input{./tikz/regb-f.tikz}}
\tikzset{x=1em, y=1.5ex}
\]
\end{lem}
\begin{proof}
By construction of $
\tikzset{x=1em, y=2.1ex}
}
\tikzset{x=1em, y=1.5ex}
$. The three claims for the corresponding matrix/vectors can be found in~\cite[Lemma 17]{kozen1994completeness}. The same diagrammatic facts holds because of the completeness of our theory for $(\bullet,\bullet)$-matrix-diagrams (Theorem~\ref{thm:matrix-completeness}).
\end{proof}

\subsection{Determinisation}\label{sec:determinisation}

We are now able to devise a determinisation procedure for representation of automata-diagrams.
One of the payoffs of our approach is that the proof of the following theorem can be carried out purely equationally: the adjunctions we have constructed in the previous subsection make it possible to replace Kozen's use of bisimulation laws in his completeness proof~\cite{kozen1994completeness} by  local diagrammatic rewriting steps.

First, we will need the diagrammatic counterpart of $(xy)^*x = x(yx)^*$, a well-known identity of Kleene algebra. Note that this law holds generally for arbitrary automata-diagrams (and is proved entirely analogously) but we only need it for matrix diagram to show completeness.
\begin{lem}\label{lemma:sliding}
For $x,y$ two matrix-diagrams, we have:
\[
\tikzset{x=1em, y=2.1ex}
\InputIfFileExists{xy-star-x.tikz}{}{\input{./tikz/xy-star-x.tikz}}
\tikzset{x=1em, y=1.5ex}
 \:=\: 
\tikzset{x=1em, y=2.1ex}
\InputIfFileExists{x-yx-star.tikz}{}{\input{./tikz/x-yx-star.tikz}}
\tikzset{x=1em, y=1.5ex}
\]
\end{lem}
\begin{proof}
We only need two successive applications of  matrix distributivity (Lemma~\ref{lem:matrix-copy}):
\begin{align*}

\tikzset{x=1em, y=2.1ex}
\InputIfFileExists{xy-star-x.tikz}{}{\input{./tikz/xy-star-x.tikz}}
\tikzset{x=1em, y=1.5ex}
 \: &\myeq{cpy}\: 
\tikzset{x=1em, y=2.1ex}
\InputIfFileExists{xy-star-x-1.tikz}{}{\input{./tikz/xy-star-x-1.tikz}}
\tikzset{x=1em, y=1.5ex}
 \\
&\myeq{cocpy}\: 
\tikzset{x=1em, y=2.1ex}
\InputIfFileExists{xy-star-x-2.tikz}{}{\input{./tikz/xy-star-x-2.tikz}}
\tikzset{x=1em, y=1.5ex}

\quad \myeq{compact}\: 
\tikzset{x=1em, y=2.1ex}
\InputIfFileExists{x-yx-star.tikz}{}{\input{./tikz/x-yx-star.tikz}}
\tikzset{x=1em, y=1.5ex}

\end{align*}
\end{proof}
\noindent We are now ready to prove a form the \emph{bisimulation} rule of Kozen's completeness proof.
\begin{lem}[Bisimulation]\label{lemma:bisimulation}
If $
\tikzset{x=1em, y=2.1ex}
\InputIfFileExists{d-det-regb.tikz}{}{\input{./tikz/d-det-regb.tikz}}
\tikzset{x=1em, y=1.5ex}
 = 
\tikzset{x=1em, y=2.1ex}
\InputIfFileExists{regb-d.tikz}{}{\input{./tikz/regb-d.tikz}}
\tikzset{x=1em, y=1.5ex}
$ then \[
\tikzset{x=1em, y=2.1ex}
\InputIfFileExists{d-hat-star-regb.tikz}{}{\input{./tikz/d-hat-star-regb.tikz}}
\tikzset{x=1em, y=1.5ex}
 = 
\tikzset{x=1em, y=2.1ex}
\InputIfFileExists{regb-d-star.tikz}{}{\input{./tikz/regb-d-star.tikz}}
\tikzset{x=1em, y=1.5ex}
\]
\end{lem}
\begin{proof}
We prove the forward inclusion first:
\begin{align*}

\tikzset{x=1em, y=2.1ex}
\InputIfFileExists{d-hat-star-regb.tikz}{}{\input{./tikz/d-hat-star-regb.tikz}}
\tikzset{x=1em, y=1.5ex}
  & := 
\tikzset{x=1em, y=2.1ex}
\InputIfFileExists{d-star-regb.tikz}{}{\input{./tikz/d-star-regb.tikz}}
\tikzset{x=1em, y=1.5ex}
&\qquad
\\
& \leq 
\tikzset{x=1em, y=2.1ex}
\InputIfFileExists{regb-coregwb-d-star-regb.tikz}{}{\input{./tikz/regb-coregwb-d-star-regb.tikz}}
\tikzset{x=1em, y=1.5ex}
\tag{Lemma~\ref{lemma:adjoint-bb-wb} (i)}
%\\ & = \tikzfig{regb-coregwb-d-star-regb-1}
\\
& = 
\tikzset{x=1em, y=2.1ex}
\InputIfFileExists{regb-coregwb-d-regb-star.tikz}{}{\input{./tikz/regb-coregwb-d-regb-star.tikz}}
\tikzset{x=1em, y=1.5ex}
 \tag{Lemma~\ref{lemma:sliding}}
\\
& = 
\tikzset{x=1em, y=2.1ex}
\InputIfFileExists{regb-coregwb-regb-d-star.tikz}{}{\input{./tikz/regb-coregwb-regb-d-star.tikz}}
\tikzset{x=1em, y=1.5ex}
 \tag{Lemma~\ref{lemma:powerset-construct} (i)}
\\
& \leq 
\tikzset{x=1em, y=2.1ex}
\InputIfFileExists{regb-d-star-loop.tikz}{}{\input{./tikz/regb-d-star-loop.tikz}}
\tikzset{x=1em, y=1.5ex}
 \tag{Lemma~\ref{lemma:adjoint-bb-wb} (ii)}
\\
&=: 
\tikzset{x=1em, y=2.1ex}
\InputIfFileExists{regb-d-star.tikz}{}{\input{./tikz/regb-d-star.tikz}}
\tikzset{x=1em, y=1.5ex}
&
\end{align*}
For the reverse inclusion, we have:
\begin{align*}

\tikzset{x=1em, y=2.1ex}
\InputIfFileExists{regb-d-star.tikz}{}{\input{./tikz/regb-d-star.tikz}}
\tikzset{x=1em, y=1.5ex}
 & :=  
\tikzset{x=1em, y=2.1ex}
\InputIfFileExists{regb-d-star-loop.tikz}{}{\input{./tikz/regb-d-star-loop.tikz}}
\tikzset{x=1em, y=1.5ex}
&\qquad &
\\
& \leq
\tikzset{x=1em, y=2.1ex}
\InputIfFileExists{regb-coregbw-regb-d-star.tikz}{}{\input{./tikz/regb-coregbw-regb-d-star.tikz}}
\tikzset{x=1em, y=1.5ex}
 \tag{Lemma~\ref{lemma:adjoint-bw-bb} (i)}
\\
& =
\tikzset{x=1em, y=2.1ex}
\InputIfFileExists{regb-coregbw-d-regb-star.tikz}{}{\input{./tikz/regb-coregbw-d-regb-star.tikz}}
\tikzset{x=1em, y=1.5ex}
  \tag{Lemma~\ref{lemma:powerset-construct} (i)}
\\
& = 
\tikzset{x=1em, y=2.1ex}
\InputIfFileExists{regb-coregbw-d-star-regb.tikz}{}{\input{./tikz/regb-coregbw-d-star-regb.tikz}}
\tikzset{x=1em, y=1.5ex}
 \tag{Lemma~\ref{lemma:sliding}}
\\
& \leq 
\tikzset{x=1em, y=2.1ex}
\InputIfFileExists{d-star-regb.tikz}{}{\input{./tikz/d-star-regb.tikz}}
\tikzset{x=1em, y=1.5ex}
 \tag{Lemma~\ref{lemma:adjoint-bw-bb}  (ii)}
\\
&=: 
\tikzset{x=1em, y=2.1ex}
\InputIfFileExists{d-hat-star-regb.tikz}{}{\input{./tikz/d-hat-star-regb.tikz}}
\tikzset{x=1em, y=1.5ex}
 & &\qedhere
\end{align*}
\end{proof}
\begin{thm}
Every automaton-diagram is equal to its determinisation.
\end{thm}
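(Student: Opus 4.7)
The plan is to assemble the identity $c = \hat{e}\poi \hat{d}^{*}\poi \hat{f}$ by sliding the powerset matrix $p$ across $d^{*}$, using the three ingredients already established in Lemmas~\ref{lemma:powerset-construct} and~\ref{lemma:bisimulation}.

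First, by Proposition~\ref{thm:traceform} any automaton-diagram $c\from \objr\to\objr$ admits a representation $(e,d,f)$ so that $c = e\poi d^{*}\poi f$. Its determinisation is the diagram $\hat{e}\poi \hat{d}^{*}\poi \hat{f}$, where the triple $(\hat{e},\hat{d},\hat{f})$ is obtained from $(e,d,f)$ via the subset construction using the powerset matrix $p\from \objr^{2^{s}}\to \objr^{s}$ of Section~\ref{sec:subset-construction}. That $\hat{d}$ is deterministic in the sense of Definition~\ref{def:representation} is immediate from how $p$ encodes the membership relation: the unique $a$-successor of a subset state is the subset of all $a$-successors of its elements, and there is only one designated initial subset.

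Second, I would chain the following equalities:
\begin{align*}
c \;&=\; e\poi d^{*}\poi f \\
&=\; (\hat{e}\poi p)\poi d^{*}\poi f \\
&=\; \hat{e}\poi (p\poi d^{*})\poi f \\
&=\; \hat{e}\poi (\hat{d}^{*}\poi p)\poi f \\
&=\; \hat{e}\poi \hat{d}^{*}\poi (p\poi f) \\
&=\; \hat{e}\poi \hat{d}^{*}\poi \hat{f}\text{.}
\end{align*}
The first substitution $e = \hat{e}\poi p$ is Lemma~\ref{lemma:powerset-construct}(ii); the central move $p\poi d^{*} = \hat{d}^{*}\poi p$ is exactly the bisimulation Lemma~\ref{lemma:bisimulation}, whose hypothesis $\hat{d}\poi p = p\poi d$ is supplied verbatim by Lemma~\ref{lemma:powerset-construct}(i); the final substitution $p\poi f = \hat{f}$ is Lemma~\ref{lemma:powerset-construct}(iii). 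The remaining equalities are mere associativity of composition.

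There is, at this stage, no real further obstacle: the heart of the argument is the bisimulation lemma, which has itself been discharged by replacing Kozen's use of the induction axiom of Kleene algebra by the general adjunctions between the $(\bullet,\bullet)$-, $(\bullet,\circ)$- and $(\circ,\bullet)$-matrix encodings of the powerset relation (Lemmas~\ref{lemma:adjoint-bb-wb} and~\ref{lemma:adjoint-bw-bb}). The proof of the present theorem therefore amounts to threading these ingredients together. If anything needs care in the write-up it is merely the bookkeeping identifying the diagrams $\hat{e}$, $\hat{d}$, $\hat{f}$ produced by the subset construction with an honest deterministic representation in the sense of Definition~\ref{def:representation}, so that the equality above can be read as the equality between $c$ and \emph{its} determinisation rather than just an arbitrary deterministic diagram with the same behaviour.
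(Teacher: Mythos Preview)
Your proof is correct and follows essentially the same route as the paper: substitute $e=\hat{e}\poi p$ via Lemma~\ref{lemma:powerset-construct}(ii), slide $p$ through $d^{*}$ using the bisimulation Lemma~\ref{lemma:bisimulation} (whose hypothesis is Lemma~\ref{lemma:powerset-construct}(i)), and absorb $p\poi f=\hat{f}$ via Lemma~\ref{lemma:powerset-construct}(iii). The only addition is your explicit appeal to Proposition~\ref{thm:traceform} to get a representation in the first place, which the paper leaves implicit here and instead packages separately into Corollary~\ref{thm:deterministic-rep}.
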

\begin{proof}
Given an automata-diagram with representation $(e,d,f)$ let $(\hat e,\hat d,\hat f)$ be its determinisation as defined as above. Then
\begin{align*}

\tikzset{x=1em, y=2.1ex}
\InputIfFileExists{automata-rep-star.tikz}{}{\input{./tikz/automata-rep-star.tikz}}
\tikzset{x=1em, y=1.5ex}
& = 
\tikzset{x=1em, y=2.1ex}
\InputIfFileExists{e-hat-regb-d-star-f.tikz}{}{\input{./tikz/e-hat-regb-d-star-f.tikz}}
\tikzset{x=1em, y=1.5ex}
 \tag{Lemma~\ref{lemma:powerset-construct}(ii)}
\\
& = 
\tikzset{x=1em, y=2.1ex}
\InputIfFileExists{e-hat-d-hat-star-regb-f.tikz}{}{\input{./tikz/e-hat-d-hat-star-regb-f.tikz}}
\tikzset{x=1em, y=1.5ex}
 \tag{Bisimulation}
\\
& = 
\tikzset{x=1em, y=2.1ex}
\InputIfFileExists{determinised-rep.tikz}{}{\input{./tikz/determinised-rep.tikz}}
\tikzset{x=1em, y=1.5ex}
  \tag*{(Lemma~\ref{lemma:powerset-construct}(iii)) \qedhere}
\end{align*}
\end{proof}

Combining the theorem above with the existence of representations (Theorem~\ref{thm:traceform}) yields the result we are after.
\begin{cor}[Determinisation]\label{thm:deterministic-rep}
Any automaton-diagram $\objr\to\objr$ has a deterministic representation.
\end{cor}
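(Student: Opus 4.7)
My plan is to combine the two previous results in a direct way. Given an automaton-diagram $c \from \objr \to \objr$, Proposition~\ref{thm:traceform} supplies a representation $(e,d,f)$ of $c$ where $d \from \objr^s \to \objr^s$ is an $\epsilon$-free matrix-diagram and $e, f$ are relation-diagrams. Then the preceding theorem (every automaton-diagram is equal to its determinisation) lets me pass to the determinised triple $(\hat e, \hat d, \hat f)$ with $\hat d \from \objr^{2^s} \to \objr^{2^s}$, $\hat e \from \objr \to \objr^{2^s}$, $\hat f \from \objr^{2^s} \to \objr$ built via the powerset matrix $p_s = \tikzfig{right-adj-black}$. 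It remains to verify that $(\hat e, \hat d, \hat f)$ fits Definition~\ref{def:representation} of a \emph{deterministic} representation.

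First I would check that $\hat d$ is genuinely an $\epsilon$-free deterministic matrix-diagram. By construction (Lemma~\ref{lemma:powerset-construct}(i)), $\hat d$ is determined, as a $\mathbb{B}(\Sigma^*)$-matrix, by the usual subset-construction transitions: for each subset $S \subseteq \{0,\dots,s-1\}$ and letter $a \in \Sigma$, there is exactly one transition, pointing to the subset $\{q' \mid \exists q \in S,\, (q,a,q') \in \delta\}$, and none of these transition labels is $\epsilon$. By matrix completeness (Theorem~\ref{thm:matrix-completeness}), since $\hat d$ coincides semantically with this $\epsilon$-free, deterministic Boolean-weighted matrix, it is provably equal in KDA to a matrix-diagram in deterministic form; replacing $\hat d$ by that canonical presentation keeps the equality of representations intact.

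Similarly, $\hat e$ must encode the singleton $\{q_0\}$ containing the initial state picked out by $e$. Lemma~\ref{lemma:powerset-construct}(ii) gives exactly this, and Theorem~\ref{thm:matrix-completeness} again lets me rewrite $\hat e$ to a matrix-diagram with a single non-zero coefficient—i.e.\ the single left-port is connected to exactly one right-port—so the initial-state condition of Definition~\ref{def:representation} holds. The final state diagram $\hat f$ needs no extra structural constraint beyond being a relation-diagram, and this is supplied by Lemma~\ref{lemma:powerset-construct}(iii) combined, once more, with Theorem~\ref{thm:matrix-completeness}.

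The main (mild) obstacle is purely bookkeeping: the two preceding theorems equate $c$ with a composite built out of the powerset matrix $p_s$, but this presentation is not yet literally in the matrix-diagram form required by Definition~\ref{def:representation}. The role of Theorem~\ref{thm:matrix-completeness} is precisely to absorb this discrepancy, turning a semantic characterisation of the determinised transition relation into a provable equality of matrix-diagrams in KDA. Once this rewriting is carried out, the triple $(\hat e, \hat d, \hat f)$ satisfies every clause of Definition~\ref{def:representation} for a deterministic representation, and we conclude by the previous theorem that $c$ equals the representing expression $\hat e \poi \hat d^{\,*} \poi \hat f$.
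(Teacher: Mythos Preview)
Your proposal is correct and follows the same approach as the paper, which proves the corollary in a single sentence by combining Proposition~\ref{thm:traceform} (existence of a representation) with the preceding theorem (every automaton-diagram equals its determinisation). You supply considerably more detail than the paper does, in particular invoking Theorem~\ref{thm:matrix-completeness} to justify that $\hat d$, $\hat e$, $\hat f$ can be rewritten into the literal matrix-diagram form demanded by Definition~\ref{def:representation}; this bookkeeping is left implicit in the paper.

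One small inaccuracy: you write that $\hat e$ encodes ``the singleton $\{q_0\}$ containing the initial state picked out by $e$'', but a general representation in the sense of Definition~\ref{def:representation} allows $e$ to connect its single left port to several right ports, i.e.\ to have multiple initial states. The subset construction still produces a single initial subset (the set of all those initial states), so $\hat e$ does have exactly one non-zero coefficient and your conclusion stands; only the phrasing needs adjusting.
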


\subsection{Minimisation and completeness}%
\label{sec:minimisation}

As explained above, our proof of completeness is a diagrammatic reformulation of Brzozowski's algorithm, which proceeds in four steps: determinise, reverse, determinise, reverse. We already know how to determinise a given diagram. The other three steps are simply a matter of changing our perspective on diagrams, looking at them from right to left, and noticing that all the equations that we needed to determinise them, can be performed in reverse.

We say that a matrix-diagram is \emph{co-deterministic} if the converse of its associated transition relation is deterministic.

\begin{proof}[Proof of Theorem~\ref{thm:1-to-1-completeness} (Completeness)]\label{proof:completeness}
We have a procedure to show that, if $\sem{d}=\sem{d'}$, then there exists a string diagram $c$ in normal form such that  $d=c=d'$. This normal form is the diagrammatic counterpart of the \emph{minimal} automaton associated to $d$ and $d'$. In our setting, it is the deterministic representation of $d$ and $d'$ with the smallest number of states. This is unique because we can obtain from it the corresponding minimal automaton, which is well-known to be unique. First, given any string diagram we can obtain a representation for it by Proposition~\ref{thm:traceform}. Then we obtain a minimal representation by splitting Brzozowski's algorithm in two steps.
\begin{description}
\item[1. Reverse; determinise; reverse] A close look at the determinisation procedure
shows that, at each step, the required equations all hold in reverse, read from right to left instead of left to right. For example, we can replace every instance of~(cpy) with~(co-cpy).
We can thus define, in a completely analogous manner, a co-determinisation procedure which takes care of the first three steps of Brzozowski's algorithm, and obtain a co-deterministic representation for the given diagram.
\item[2. Determinise]  By applying  Corollary~\ref{thm:deterministic-rep}, we can obtain a deterministic representation from the co-deterministic representation of the previous step. The result is the desired minimal deterministic representation and normal form. \qedhere
\end{description}
\end{proof}

\begin{exa}\label{ex:(aa)*(1+a)}
This example treats the diagrammatic equivalent of the regex $(aa)^*(1+a)$ which denotes the same language as $a^*$. This is a simple example of an equivalence that cannot be proven in Kleene algebra without the induction axiom (or some equivalent infinitary axiom scheme encoding induction). We prove the first inclusion below.
\begin{align*}

\tikzset{x=1em, y=2.1ex}
\InputIfFileExists{ex-aa-star-one+a.tikz}{}{\input{./tikz/ex-aa-star-one+a.tikz}}
\tikzset{x=1em, y=1.5ex}
 \quad &\myeq{B1} \quad 
\tikzset{x=1em, y=2.1ex}
\InputIfFileExists{ex-minimise-1.tikz}{}{\input{./tikz/ex-minimise-1.tikz}}
\tikzset{x=1em, y=1.5ex}
 \\
&\myeq{cpy} \quad 
\tikzset{x=1em, y=2.1ex}
\InputIfFileExists{ex-minimise-2.tikz}{}{\input{./tikz/ex-minimise-2.tikz}}
\tikzset{x=1em, y=1.5ex}
 \\
&\myeq{A1-A2} \quad 
\tikzset{x=1em, y=2.1ex}
\InputIfFileExists{ex-minimise-3.tikz}{}{\input{./tikz/ex-minimise-3.tikz}}
\tikzset{x=1em, y=1.5ex}
 \\
&\myleq{F9} \quad 
\tikzset{x=1em, y=2.1ex}
\InputIfFileExists{ex-minimise-4.tikz}{}{\input{./tikz/ex-minimise-4.tikz}}
\tikzset{x=1em, y=1.5ex}
 \\
&\myeq{A1-A2} \quad 
\tikzset{x=1em, y=2.1ex}
\InputIfFileExists{ex-minimise-5.tikz}{}{\input{./tikz/ex-minimise-5.tikz}}
\tikzset{x=1em, y=1.5ex}
 \\
&\myeq{B3;B1} \quad 
\tikzset{x=1em, y=2.1ex}
\InputIfFileExists{ex-minimise-6.tikz}{}{\input{./tikz/ex-minimise-6.tikz}}
\tikzset{x=1em, y=1.5ex}
 \\
&\myeq{co-cpy} \quad 
\tikzset{x=1em, y=2.1ex}
\InputIfFileExists{ex-minimise-7.tikz}{}{\input{./tikz/ex-minimise-7.tikz}}
\tikzset{x=1em, y=1.5ex}
 \\
&\myeq{B7} \quad 
\tikzset{x=1em, y=2.1ex}
\InputIfFileExists{ex-minimise-8.tikz}{}{\input{./tikz/ex-minimise-8.tikz}}
\tikzset{x=1em, y=1.5ex}
 \\
&\myleq{F11} \quad 
\tikzset{x=1em, y=2.1ex}
\InputIfFileExists{ex-minimise-9.tikz}{}{\input{./tikz/ex-minimise-9.tikz}}
\tikzset{x=1em, y=1.5ex}
 \\
&\myeq{A1-A2} \quad 
\tikzset{x=1em, y=2.1ex}
\InputIfFileExists{a-star.tikz}{}{\input{./tikz/a-star.tikz}}
\tikzset{x=1em, y=1.5ex}
 \\
\end{align*}
The reverse inclusion can be proven similarly, by introducing $\Bcomult$ instead of $\Wcomult$ (this is the easier direction, in the sense that the white generators are not needed): first, we replace use (F1) instead of (F9) to introduce $\Bmult ;\Bcomult$ in the fourth step instead of the $\Bmult ; \Wcomult$ that we introduced above; then, we use (F3) instead of (F11) to turn $\Bcomult ; \Bmult$ into an identity wire, as we did with $\Bcomult ; \Wmult$ in the penultimate step above.

This is an example of an equality that could not be proven in the equational theory of the conference paper~\cite{piedeleu2021string}: the determinisation procedure proposed in the proof of~\cite[Lemma 4]{piedeleu2021string} would fail to identify the two equivalent states (represented by the two loops in  the representation obtained on the third line of the derivation above) and get stuck.
\end{exa}

In Section~\ref{sec:automata-diag}, we explored the correspondence between $\objr\to \objr$ diagrams and regular expressions. In the light of our completeness result, we can revisit this correspondence and extend it to arbitrary left-to-right diagrams $\objr^{m} \to \objr^{n}$. More precisely, we can now prove a \emph{Kleene theorem} for left-to-right diagrams.
We can extend the notion of matrix-diagram (Definition~\ref{def:matrix-diagram}) to that of \emph{regex matrix-diagram}, which is a left to right diagram that factors as a block of $
\tikzset{x=1em, y=2.1ex}
\InputIfFileExists{lr-copy.tikz}{}{\input{./tikz/lr-copy.tikz}}
\tikzset{x=1em, y=1.5ex}
, 
\tikzset{x=1em, y=2.1ex}
}
\tikzset{x=1em, y=1.5ex}
$, followed by a block of regex-diagrams and finally, a block of $
\tikzset{x=1em, y=2.1ex}
\InputIfFileExists{lr-merge.tikz}{}{\input{./tikz/lr-merge.tikz}}
\tikzset{x=1em, y=1.5ex}
, 
\tikzset{x=1em, y=2.1ex}
}
\tikzset{x=1em, y=1.5ex}
$. This is the diagrammatic counterpart of a matrix with regex coefficients.
The completeness of KDA implies that every left-to-right diagram can be put in this form.
\begin{cor}[Kleene theorem for $\Aut$]\label{thm:matrix-regexp}
Any left-to-right diagram is equal to a regex matrix-diagram.
\end{cor}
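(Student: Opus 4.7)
The plan is to combine the completeness result just established with the distributivity results for left-to-right diagrams. First I would reduce the problem to the case of coefficients of type $\objr\to\objr$: by Lemma~\ref{lem:submatrices} (or equivalently by Theorem~\ref{thm:copy-merge} applied to each input and output in turn), every left-to-right diagram $d \colon \objr^m \to \objr^n$ can be rewritten in KDA as the composition of a block of $\Bcomult,\Bcounit$ that copies each input $n$ times, a parallel arrangement of the $mn$ coefficients $d_{ij} \colon \objr\to\objr$ (obtained by plugging $\Bunit$ into all inputs but the $i$-th and $\Bcounit$ into all outputs but the $j$-th), and a block of $\Bmult,\Bunit$ that merges the $mn$ wires back into $n$ outputs.

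The heart of the proof is then to show that each coefficient $d_{ij}$ equals $\transreg{e_{ij}}$ for some regular expression $e_{ij}$. For this, I would apply Proposition~\ref{thm:traceform} to extract a representation $(e,\delta,f)$ of $d_{ij}$, then use the construction of Section~\ref{sec:diagram-to-automata} to obtain an NFA $A_{ij}$ whose recognised language $L_{ij}$ satisfies $\sem{d_{ij}} = \{(K,K') \mid L_{ij} K \subseteq K'\}$ by Proposition~\ref{thm:diagram-nfa}. The classical Kleene theorem yields a regular expression $e_{ij}$ with $\semreg{e_{ij}} = L_{ij}$, and Proposition~\ref{thm:diagram-regexp} (more precisely the identity $\sem{\transreg{e_{ij}}} = \{(K,K') \mid K\semreg{e_{ij}} \subseteq K'\}$ established in its proof) gives $\sem{d_{ij}} = \sem{\transreg{e_{ij}}}$. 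Completeness for $\objr\to\objr$ diagrams (Theorem~\ref{thm:1-to-1-completeness}) then promotes this semantic equality to a KDA-equality $d_{ij} = \transreg{e_{ij}}$.

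Plugging these equalities into the distributive decomposition from the first step replaces each $d_{ij}$ with the regex-diagram $\transreg{e_{ij}}$, yielding a diagram which, by inspection, matches the definition of a regex matrix-diagram given just before the corollary: a block of $\Bcomult,\Bcounit$, followed by a parallel block of regex-diagrams, followed by a block of $\Bmult,\Bunit$. There is no real obstacle here—everything reduces to the completeness theorem and the classical Kleene theorem—apart from keeping track of the bookkeeping of the distributive decomposition, which is already handled by Theorem~\ref{thm:copy-merge} and Lemma~\ref{lem:submatrices}.
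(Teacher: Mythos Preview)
Your proposal is correct and follows essentially the same route as the paper's proof: decompose a left-to-right diagram into its $\objr\to\objr$ coefficients via global distributivity (Theorem~\ref{thm:copy-merge}/Corollary~\ref{thm:1-to-1-restrict}), extract an automaton and hence a regular language for each coefficient (Propositions~\ref{thm:traceform} and~\ref{thm:diagram-nfa}), invoke the classical Kleene theorem to obtain a regex, and then use soundness and completeness (Theorem~\ref{thm:1-to-1-completeness}) to identify each coefficient with the corresponding regex-diagram. The only minor caution is that Lemma~\ref{lem:submatrices} as stated applies to matrix-diagrams, so the decomposition step should rest on Theorem~\ref{thm:copy-merge} (as you indicate parenthetically) rather than on that lemma directly.
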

\begin{proof}
 Let $d\from \objr^m \to \objr^n$ be a left-to-right diagram. By Corollary~\ref{thm:1-to-1-restrict}, $d$ is fully characterised by its coefficients $d_{ij}$ obtained by discarding all but one of the left ports and all but one of the right ports. According to Proposition~\ref{thm:diagram-nfa}, we can find a representation for each $d_{ij}$ and therefore a regular language $L_{ij}$ recognised by the associated automata. By the standard version of Kleene theorem, we can pick a regex $e_{ij}$ that describes $L_{ij}$. Then, by soundness $\sem{d_{ij}} = \sem{\transreg{e_{ij}}}$ and
by completeness $d_{ij} = \transreg{e_{ij}}$. This shows that $d$ is equal to a diagram that factors as a matrix of regex-diagrams, as we wanted to prove.
\end{proof}
As a result, any given $\objr^{m} \to \objr^{n}$ diagram is fully characterised by an $m\times n$ array of regular languages. Finally, by Theorem~\ref{thm:left-to-right} any given diagram (not necessarily left-to-right) with $m$ inputs and $n$ outputs is fully characterised by an array $m\times n$ array of regular languages and where each of the inputs and outputs is located (on the left or on the right interface).

\section{Discussion}\label{sec:conclusion}

In this paper, we have given a fully diagrammatic treatment of finite-state automata, with a finite equational theory that axiomatises them up to language equivalence. We have seen that this allows us to decompose the regular operations of Kleene algebra, like the star, into more primitive components, resulting in greater modularity. In this section, we compare our contributions with related work, and outline directions for future research.

Traditionally, computer scientists have used \emph{syntax diagrams} (also called \emph{railroad diagrams}) to visualise regular expressions and context-free grammars~\cite{wirth1971programming}. These diagrams resemble ours very closely but have remained mostly informal
%and usually restricted to a single input and output.
More recently, Hinze has treated the single input-output case rigorously as a pedagogical tool to teach the correspondence between finite-state automata and regular expressions~\cite{hinze2019self}. He did not, however, study their equational properties.

%It is also worth comparing $\eqKa$ with existing axiomatisations of similar structures.

Bloom and {\'E}sik's \emph{iteration theories} provide a general categorical setting in which to study the equational properties of iteration for a broad range of structures that appear in programming languages semantics~\cite{bloom1993iteration}. They are cartesian categories equipped with a parameterised fixed-point operation closely related to the feedback notion we have used to represent the Kleene star. However, the monoidal category of interest in this paper is \emph{compact-closed}, a property that is incompatible with the existence of categorical products (any compact-closed category for which the monoidal product is also the categorical product is trivial~\cite{lambek1988introduction}). Nevertheless, the subcategory of left-to-right diagrams (Section~\ref{sec:automata-diag}) is a (matrix) iteration theory~\cite{bloom1993matrix}, a structure that Bloom and {\'E}sik have used to give an (infinitary) axiomatisation of regular languages~\cite{bloom1993equational}.

Similarly, Stefanescu's work on \emph{network algebra} provides a unified algebraic treatment of various types of networks, including finite-state automata~\cite{stefanescu2000network}. In general, network algebras are traced monoidal categories where the product is not necessarily cartesian, and therefore more general than iteration theories. In both settings however, the trace is a global operation, that cannot be decomposed further into simpler components. In our work, on the other hand, the trace can be defined from the compact-closed structure, as was depicted in~\eqref{eq:star-decomposed}.

Note that the compact closed category in this paper can be recovered from the traced monoidal category of left-to-right diagrams, via the \emph{Int construction}~\cite{Joyal_tracedcategories}. Therefore, as far as mathematical expressiveness is concerned, the two approaches are equivalent. However, from a methodological point of view, taking the compact closed structure as primitive allows for improved compositionality, as example~\eqref{ex:decompose-automaton} in the introduction illustrates. Furthermore, the compact closed structure can be finitely presented relative to the theory of symmetric monoidal categories, whereas the trace operation cannot. This matters greatly in this paper, where finding a finite axiomatisation is our main concern.

In all the formalisms we have mentioned, the difficulty typically lies in capturing the behaviour of iteration---whether as the star in Kleene algebra~\cite{kozen1994completeness,bloom1993equational}, or a trace operator~\cite{bloom1993iteration} in iteration theory and network algebra~\cite{stefanescu2000network}. The axioms should be coercive enough to force it to be \emph{the least fixed-point} of the language map $L\mapsto \{\epsilon\} \cup LK$. In Kozen's axiomatisation of Kleene algebra~\cite{kozen1994completeness} for example, this is through (a) the axiom $1+ ee^*\leq e^*$ (star is a fixpoint) and (b) the Horn clause $f+ex \leq x \Rightarrow  e^*f \leq x$ (star is the least fixpoint). In our work, (a) is a consequence of the unfolding of the star into a feedback loop and can be derived from the axioms involving only the automata generators (black nodes); (b) on the other hand does require the existence of their adjoints (white nodes).

Pratt's \emph{action algebras} achieve a similar technical goal: the algebraic theory of action algebra is a finitely-based conservative extension of Kleene algebra~\cite{pratt1990action}. An action algebra is a Kleene algebra and a residuated lattice: it has two additional operations of implication that are adjoint to left/right multiplication. As in our setting, the induction axiom of action algebras can be derived from a finite number of purely equational axioms. However, contrary to the theory of Kleene algebra, equality for action algebras is undecidable---it remains to be seen whether that is also the case for the whole language of this paper, and we leave investigation of its decidability for future work.

In the conference paper~\cite{piedeleu2021string} on which this work is based, we presented a finite equational theory for $\Aut$ only (without the white nodes that we use in this work). Unfortunately, this theory turns out to not be complete. Example~\ref{ex:(aa)*(1+a)} provides a counter-example to the claim of completeness of~\cite{piedeleu2021string}. There are several ways to fix the issue. The first would be to recast the existing infinitary axiomatisations of the matricial iteration theories of regular languages~\cite{bloom1993equational} into our diagrammatic framework. This would simply involve restating the two axiom schemes characterising the behaviour of the Kleene star as equations about feedback loops, leading to an infinitary axiomatisation. While this is certainly feasible, we wanted to achieve a \emph{finitary} presentation, which has led us to the approach developed in the present paper. By extending the syntax, we have been able to exploit additional structure over the set of regular languages to obtain a finite theory.

There is an intriguing parallel between our case study and the positive fragment of relation algebra (also known as allegories~\cite{freyd1990categories}). Indeed, allegories, like Kleene algebra, do not admit a finite axiomatisation~\cite{freyd1990categories}. However, this result holds for standard algebraic theories%, as it makes crucial use of a translation of algebraic terms to graphs
. It has been shown recently that a structure equivalent to allegories can be given a finite axiomatisation when formulated in terms of string diagrams in monoidal categories~\cite{bonchi2018graphical}. It seems like the greater generality of the monoidal setting---algebraic theories correspond precisely to the particular case of cartesian monoidal categories~\cite{bonchi2018deconstructing}---allows for simpler axiomatisations in some specific cases. In the future we would like to understand whether this phenomenon, of which now we have two instances, can be understood in a general context.

Lastly, various extensions of Kleene Algebra, such as Concurrent Kleene Algebra (CKA)~\cite{hoare2009concurrent,KappeB0Z18} and NetKAT~\cite{anderson2014netkat}, are increasingly relevant in current research. Enhancing our theory $\eqKa$ to encompass these extensions seems a promising research direction, for two main reasons. First, the two-dimensional nature of string diagrams  has been proven particularly suitable to reason about  concurrency (see e.g.~\cite{BonchiHPSZ19,Bruni2013}), and more generally about resource exchange between processes  (see e.g.~\cite{BonchiSZ17,pqp,jacobs2019causal,baez2015compositional,BPSZ-lics19}). Second, when trying to transfer the good meta-theoretical properties of Kleene Algebra (like completeness and decidability) to extensions such as CKA and NetKAT, the cleanest way to proceed is usually in a modular fashion. The interaction between the new operators of the extension and the Kleene star usually represents the greatest challenge to this methodology. Now, in $\eqKa$, the Kleene star is decomposable into simpler components (see~\eqref{eq:star-decomposed}). We believe this is a particularly favourable starting point to modularise a meta-theoretic study of CKA and NetKAT with string diagrams, taking advantage of the results we presented in this paper for finite-state automata.

In this work, we have left open the question of completeness for the whole language, including the white generators of~\eqref{eq:white-gen}.
In an upcoming paper~\cite{GPZSat2022} we give a partial answer for a restricted fragment: the same theory without the letters (\emph{i.e.} over an empty alphabet), and with the addition of axioms turning the white generators into a Frobenius monoid, is complete for monotone relations between Boolean algebras. Of course, we expect the case of nonempty alphabets to be more complicated. We leave this for further work, starting with the simpler case of a single-letter alphabet.

\section*{Acknowledgments}
  \noindent We wish to thank the anonymous reviewers for their many helpful comments and suggestions, and Sergey Goncharov for pointing out an issue with the previous version of this work. We also acknowledge support from \textsc{epsrc} grant EP/V002376/1.

\bibliographystyle{alphaurl}
\bibliography{refs}

\appendix

\vfill

\section{Background \& Methodology}\label{app:method}
\subsection{Props, String Diagrams, and Symmetric Monoidal Theories}\label{sec:smt}

We build on a line of research that has sought to give a formal treatment of graphical models of computation of varying expressive power within the unifying language of symmetric monoidal categories. More specifically, we rely on the notion of coloured product and permutations category (prop), a mathematical structure which generalises standard multisorted algebraic theories~\cite{bonchi2018deconstructing}. Formally, a \emph{prop} is a strict symmetric monoidal category (SMC) whose objects are lists of a finite set of objects and where the monoidal product $\oplus$ on objects is given by list concatenation. Equivalently, it is a strict SMC whose objects are all monoidal products of a finite number of generating objects. \emph{Prop morphisms} are strict symmetric monoidal functors that act as the identity on objects.

Following an established methodology, we define two props: $\Syn$ and $\Sem$, for the syntax and semantics respectively. To guarantee a compositional interpretation, we require $\sem{\cdot}\from\Syn\to\Sem$, the mapping of terms to their intended semantics, to be a prop morphism.

Typically, the syntactic prop $\Syn$ is freely generated from a \emph{monoidal signature} $\Sigma = (O,M)$: a pair of a finite set of objects $O$ and a set $M$ of arrows $g\from X \to Y$, where $X$ and $Y$ are lists of elements of $G$. In this case, we use the notation $\mathsf{P}_\mathcal{S}$ and $\Syn$ interchangeably. There are two ways of describing the arrows of the prop $\mathsf{P}_\mathcal{S}$ concretely. As terms of $(G^*,G^*)$-sorted syntax whose constants are elements of $\mathcal{S}$ and whose operations are the usual categorical composition $(-);(-)\from \Syn(X,Y)\times \Syn(Y,Z)\to \Syn(X,Z)$ and the monoidal product $(-)\oplus(-)\from \Syn(X_1,Y_1)\times \Syn(X_2,Y_2)\to \Syn(X_1 X_2,Y_1 Y_2)$, quotiented by the laws of SMCs. But this quotient is cumbersome and unintuitive to work with.

This is why, we will prefer a different representation. With their two forms of composition, monoidal categories admit a natural two-dimensional graphical notation of \emph{string diagrams}. The idea is that an arrow $c\from X\to Y$ of $\mathsf{P}_\mathcal{S}$ is better represented as a box with $|X|$ ordered wires labelled by the elements of $X$ on the left and $|Y|$ wires labelled by the elements of $Y$ on the left. We can compose these diagrams in two different ways: horizontally with $;$ by connecting the right wires of one diagram to the left wires of another when the types match, and vertically with $\oplus$ by juxtaposing two diagrams:
\[c\poi d = \quad 
\tikzset{x=1em, y=2.1ex}
\InputIfFileExists{comp-sequential-n.tikz}{}{\input{./tikz/comp-sequential-n.tikz}}
\tikzset{x=1em, y=1.5ex}
 \qquad \qquad d_1\oplus d_2 = \quad 
\tikzset{x=1em, y=2.1ex}
\InputIfFileExists{comp-parallel-n.tikz}{}{\input{./tikz/comp-parallel-n.tikz}}
\tikzset{x=1em, y=1.5ex}
\]

Thus, arrows of $\mathsf{P}_\mathcal{S}$ can be pictured as (directed acyclic) graphs whose nodes are labelled by elements of $\mathcal{S}$ and whose edges are identity $id_a\from a \to a$, denoted as a plain wire $\idone$ for generating object $a$. The symmetry $\mathcal{S}_{a,b}\from a,b \to b,a$ is drawn as a wire crossing $\sym$ which swaps the $a$-and $b$-wires, and the unit for $\oplus$, $id_0\from 0 \to 0$, as the empty diagram $\idzero$ (we use $0$ to denote the empty list). With this representation the laws of SMCs become diagrammatic tautologies.

Once we have defined $\sem{\cdot}\from\Syn\to\Sem$, it is natural to look for equations to reason about semantic equality directly on the diagrams themselves. Given a set of equations $E$, i.e., a set containing pairs of arrows of the same type, we write $\eqE{E}$ for the smallest congruence w.r.t.\ the two composition operations $;$ and $\oplus$.   We say that $\eqE{E}$ is \emph{sound} if $c\eqE{E} d$ implies $\sem{c} = \sem{d}$. It is moreover \emph{complete} when the converse implication also holds. We call a pair $(\mathcal{S}, E)$ a \emph{symmetric monoidal theory} (SMT) and we can form the prop $\mathsf{P}_{\mathcal{S},E}$ obtained by quotienting each homset of $\mathsf{P}_\mathcal{S}$ by $\eqE{E}$. There is then a prop morphism $q\from \mathsf{P}_{\mathcal{S}}\to\mathsf{P}_{\mathcal{S},E}$ witnessing this quotient.

The reader familiar with categorical logic, may find it helpful to know that the concrete description above can be described in more abstract categorical terms, in line with Lawvere's account of algebraic theories~\cite{LawvereOriginalPaper}: signatures can be organised into a category and the free prop $\mathsf{P}_{\mathcal{S}}$ given as a monad structure over this category. Furthermore, the category of props and prop morphisms is equivalent to the category of algebras for this monad. Then, by standard abstract nonsense, the prop $\mathsf{P}_{\mathcal{S},E}$ and the quotient morphism $q$ arise as a coequaliser of free props. A detailed account of this presentation can be found in~\cite[Appendix A.2]{baez2017props}.

\subsection{Ordered Props and Symmetric Monoidal Inequality Theories}\label{sec:smit}

Our semantic prop $\Sem$ often carries additional structure that we wish to lift to the syntax: relations or Boolean profunctors (which are relations satisfying an extra monotony condition) can be ordered by inclusion. The corresponding mathematical structure is that of an \emph{ordered (or order-enriched) prop}, a prop whose homsets are also posets, with composition and monoidal product are monotone maps.

In the same way that props can be presented by  SMTs, an ordered prop can be presented by \emph{symmetric monoidal inequality theory} (SMIT). Formally, the data of a SMIT is the same as that of a SMT:\@ a signature $\mathcal{S}$
and a set $I$ of pairs $c,d\from X\to Y$ of $\mathsf{P}_\mathcal{S}$-arrows  of the same type, that we now read as \emph{inequalities} $c\leq d$.

As for plain props, we can construct an ordered prop from a SMIT by building the free prop $\mathsf{P}_\mathcal{S}$ and passing to a quotient $\mathsf{P}_{\mathcal{S},I}$. First, we build a preorder on each homset by closing $I$ under $\oplus$ and taking the reflexive and transitive closure of the resulting relation. Then, we obtain the free ordered prop $\mathsf{P}_{\mathcal{S},I}$ by quotienting the resulting preorder by imposing anti-symmetry.

An aside, for the reader comfortable with categorical logic: as for props and SMTs, we can give the concrete construction of this section a more abstract formulation, in terms of enriched category theory. The free order-enriched prop could be described as a monad over an order-enriched category of signatures, and the quotient prop $\mathsf{P}_{\mathcal{S},I}$ as a weighted-colimit. We will not need this characterisation here, so leave a detailed account (which we could not find in the literature) for future work.

SMITs subsume SMTs, since every SMT can be presented as a SMIT, by splitting each equation into two inequalities. As a result, in the main text, we only consider SMITs, referring to them simply as \emph{theories}, and their defining inequalities as \emph{axioms}. When referring to a sound and complete theory, we will also use the term \emph{axiomatisation}, as is standard in the literature.

The situation for a sound and complete theory is summarised in the commutative diagram below:
\[
\begin{tikzcd}
\Syn = P_\mathcal{S} \arrow[rd, "q"', two heads] \arrow[rr, "\sem{\cdot}"] &                                 & \Sem \\
                                                                      & {P_{\mathcal{S},I}} \arrow[ru, "s"'] &
\end{tikzcd}
\]
Soundness simply means that $\sem{\cdot}$ factors as $s\circ q$ through $\mathsf{P}_{\mathcal{S},E}$ and completeness means that $s$ is a faithful prop morphism.

\end{document}